%% file: main.tex
\input{macros}
\usepackage{cleveref}
\begin{document}

\title{$\BQP \subseteq \IP$ Does Not Relativize}

\date{}

\clearpage

\author[1]{Adam Bouland\thanks{ \texttt{abouland@stanford.edu}}}
\author[2]{Andrew Huang\thanks{ \texttt{a\_huang@berkeley.edu}}}
\author[3]{Anand Natarajan\thanks{ \texttt{anandn@mit.edu}}}
\author[1]{Itay Shalit\thanks{ \texttt{ishalit@stanford.edu}}}
\author[2]{Avishay Tal\thanks{ \texttt{atal@berkeley.edu}}}

\affil[1]{Department of Computer Science, Stanford University}
\affil[2]{Department of Electrical Engineering and Computer Science, University of California, Berkeley}
\affil[3]{Department of Computer Science, Massachusetts Institute of Technology}

\maketitle
\thispagestyle{empty}

\begin{abstract}
We construct an oracle relative to which $\BQP \not\subseteq \IP$, resolving a long-standing open question in quantum complexity theory. Together with recent work due to Aaronson et al., our work also gives the first oracle separation between $\IP$ and $\MIP$, answering a question dating back to Fortnow's thesis. Our separation is based on the Forrelation problem, where given Boolean functions $f$ and $g$, the goal is to determine if $f$ is correlated with the Fourier spectrum of $g$. While this task is solvable by a query-efficient quantum algorithm, we show that it admits no classical interactive protocol with polynomial communication and a polynomial-query verifier.

Our proof is based on (i) a new structural result showing how to approximate Avg-Max circuits (which are well-known to capture the power of interactive proofs in the oracular setting) by convex functions with small first and second derivatives and (ii) a novel analysis establishing that the Forrelation distribution suggested by Aaronson and Ambainis fools such functions.

Our results imply that any prover-efficient classical interactive protocol for $\BQP$ must rely on non-relativizing techniques. This might serve as a partial explanation for the lack of progress towards doubly-efficient, unconditionally sound classical verification of quantum computation.
\end{abstract}
\thispagestyle{empty}

\newpage\thispagestyle{empty}\tableofcontents
\newpage
\setcounter{page}{1}
\section{Introduction}\label{sec:intro}
\input{introduction}

\section{Technical Overview}\label{sec:overview}
\input{overview}

\section{Preliminaries}\label{sec:prelims}
\input{prelims}

\section{The Forrelation Problem}\label{sec:forrelation}
\input{forrelation}

\section{Oracle Separation}\label{sec:separation}
\input{separation}

\section{Pseudorandom Generators Against Avg-Max Circuits}\label{sec:prg}
\input{prg}

\section{AI Disclosure}
The proof idea underlying an initial version of the main theorem in this paper was generated using ChatGPT 6 Astra, giving a one-sided indistinguishability argument for a more complicated pair of distributions. After several rounds of interaction with GPT-6 and internal discussions, the authors found a two-sided bound for the original Forrelation distribution of Aaronson and Ambainis. The authors subsequently verified, simplified, and developed the argument presented here, and take full responsibility for its correctness and exposition.

\section{Acknowledgments}
We thank Scott Aaronson and \'Agi Vill\'anyi for helpful comments.
A.N. was supported by a Sloan Fellowship and NSF CAREER award 2339948.
A.T. was supported by NSF CAREER award CCF-2145474.
I.S. was supported by a Shoucheng Zhang Graduate Fellowship. 
A.B. was partially supported by the National Science Foundation under award No. 2016245 and 2440805, by the Air Force Office of Scientific Research under grant agreement FA9550-21-1-0392 and FA9550-24-1-0089, and by the DOE Office of Science under grant agreement DE-SC0025934. This work was initiated in part while the author(s) were visiting the Simons Institute for the Theory of Computing.

\bibliographystyle{alpha}
\bibliography{refs.bib}

\appendix
\input{bqp_algorithm}\label{sec:bqp_alg}
\input{general_smoothing}\label{sec:smooth-extensions}
\input{diagonalization}\label{sec:diag}
\input{bbbv_cor}\label{sec:bbbv_cor}
\end{document}

%% file: macros.tex
\documentclass[11pt]{article}
\usepackage[normalem]{ulem}
\usepackage[utf8]{inputenc}
\usepackage[skins,xparse,breakable]{tcolorbox}
\usepackage{xcolor}
\definecolor{linkblue}{HTML}{001487}
\usepackage{graphicx} 
\usepackage{stmaryrd}
\usepackage{amsmath, amssymb}
\usepackage{mdframed}
\usepackage{mathtools}
\usepackage{braket}

\usepackage{graphicx}

\usepackage[colorlinks=true,linkcolor=blue,citecolor=blue,urlcolor=blue]{hyperref}
\usepackage{authblk}
\usepackage{hyperref}
\usepackage{amsfonts}
\usepackage{braket}
\usepackage{url}
\usepackage{complexity}
\usepackage[title]{appendix}
\usepackage[all]{xy}
\usepackage[margin=1.in]{geometry}
\usepackage{physics}
\usepackage{cancel}
\usepackage{mdframed}
\usepackage{enumitem}
\usepackage{xcolor}
\usepackage{comment}
\usepackage{array}
\usepackage{stackengine,scalerel}
\usepackage{amsmath,amssymb,amsthm,mathtools}
\usepackage{booktabs}
\usepackage{algorithm}
\usepackage{algpseudocode}
\allowdisplaybreaks

\usepackage{xcolor}
\definecolor{darkgreen}{RGB}{0, 100, 0}

\theoremstyle{plain}
\newtheorem{theorem}{Theorem}[section]

\newtheorem{lemma}[theorem]{Lemma}

\newtheorem*{conjecture*}{\textbf{Conjecture}}
\newtheorem{corollary}[theorem]{Corollary}

\newtheorem{claim}{Claim}[theorem]
\newtheorem{fact}{Fact}[theorem]

\theoremstyle{definition}
\newtheorem{definition}{Definition}

\newtheorem{remark}[theorem]{Remark}
\newtheorem*{remark*}{\textbf{Remark}}

\newcommand{\eps}{\varepsilon}

\newcommand{\sgn}{\mathsf{sgn}}
\newcommand{\forr}{\mathsf{forr}}
\newcommand{\Var}{\mathsf{Var}}

\newcommand{\calA}{\mathcal A}
\newcommand{\calB}{\mathcal B}

\newcommand{\calD}{\mathcal D}

\newcommand{\calL}{\mathcal L}

\newcommand{\calN}{\mathcal N}
\newcommand{\calO}{\mathcal O}

\newcommand{\calU}{\mathcal U}

\newcommand{\sfT}{\mathsf T}

\newcommand{\bbE}{\mathbb E}

\newcommand{\bbN}{\mathbb N}

\newcommand{\bbR}{\mathbb R}

\renewcommand{\E}{\mathop{\mathbb E\/}}
\renewcommand{\R}{{\mathbb R}}

\newcommand{\ACzero}{{\mathsf{AC}^0}}

%% file: introduction.tex
One of the central theoretical and practical questions in quantum complexity is whether there exist doubly-efficient interactive proofs for quantum computation.\footnote{We point the reader to a recent survey by Aaronson \cite{Aar21} on open problems in quantum query complexity.} Namely, can an efficient quantum prover convince a classical verifier of the correctness of a quantum computation?

In the unrelativized setting, it is well-known that $\BQP \subseteq \IP$ due to the celebrated $\IP = \PSPACE$ theorem of Shamir \cite{Sha92}, but the prover in the aforementioned protocol does not run efficiently even when restricted to proving $\BQP$ computations. Indeed, although the work of Mahadev \cite{Mah18} demonstrated that doubly-efficient classical verification of quantum computation is possible under cryptographic assumptions, it has remained open for several decades whether an information-theoretic analogue exists.

Progress towards this question mostly comes from a relatively small number of works. First, an early result due to McKague \cite{McK12} showed that Recursive Fourier Sampling, a $\BQP$ problem which is suspected to lie outside of $\PH$, has a doubly-efficient interactive proof. Second, the work of Reichardt, Unger, and Vazirani \cite{RUV13}, along with countless follow-ups, has demonstrated the existence of doubly-efficient \emph{entangled multi-prover} interactive proof systems for $\BQP$. Recent work due to Aaronson, Natarajan, Tal, and Vill\'anyi \cite{ANTV26} has shown that there is a relativizing \emph{classical multi-prover} interactive proof system for $\BQP$, but it seems difficult to make the provers in their protocol efficient (and indeed, they appear to require at least the power of $\PostBQP = \PP$).

In this work, we give some evidence as to why progress has stalled thus far: there is no \emph{relativizing} prover-efficient interactive proof for $\BQP$. In fact, we prove something stronger---there is no relativizing interactive proof for $\BQP$, even ignoring prover efficiency! 
\begin{theorem}[Informal]\label{theorem:oracle_separation}
    There exists a classical oracle $\calO$ relative to which $\BQP^{\calO} \not\subseteq \IP^{\calO}$.
\end{theorem}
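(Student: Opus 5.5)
The plan follows the standard query-complexity-to-oracle-separation template. The first step is a query lower bound: for the Forrelation problem of Aaronson and Ambainis, every classical interactive protocol must fail. Concretely, any protocol with $\mathrm{poly}(n)$ communication and a verifier making $\mathrm{poly}(n)$ queries to $(f,g)$, where $f,g:\{0,1\}^n\to\{\pm1\}$, cannot distinguish Forrelated inputs from uniformly random ones with constant advantage. Given this bound, a diagonalization over all $\IP$ verifiers, in the style of Baker--Gill--Solovay and Aaronson's BQP-vs-PH argument, produces an oracle $\calO$ encoding one Forrelation instance per input length. The language $L=\{1^n : \text{the instance at length } n \text{ is Forrelated}\}$ then lies in $\BQP^{\calO}$ via the two-query quantum algorithm, but not in $\IP^{\calO}$.

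The first reduction is to express an oracle interactive protocol with a $q$-query verifier and $\mathrm{poly}$ rounds as an Avg-Max tree. Max nodes range over prover messages and Avg nodes over verifier coins. At each leaf sits a decision tree of depth $q$ on the oracle bits, which can be viewed as a multilinear polynomial. The acceptance probability under the optimal prover is then a function $F:[-1,1]^{2N}\to[0,1]$. This $F$ is built from low-degree bounded functions by $\max$ over exponentially many branches and by averaging.

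The core structural step is to approximate $F$ by a convex, smooth function $\tilde F$ with small first and second derivatives. For the convexity, note that max and average preserve convexity, but the leaf polynomials are not convex. I would therefore smooth each max with a soft-max (log-sum-exp with temperature $\beta$), and replace each leaf by a convex majorant or minorant obtained by adding a quadratic or exponential-type term. The error introduced by soft-max is additive, about $\beta^{-1}\cdot(\text{message length})$ per level. The aim is to bound the gradient and Hessian in the sense relevant for Gaussian interpolation: a Hessian bound of the form $\sum_{i,j}|\partial_{ij}\tilde F|\le \mathrm{poly}(n)$, inherited from the bounded sensitivity or influence of depth-$q$ decision trees and from the fact that soft-max is a convex combination. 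This is the step I expect to be the main obstacle. Max gates ordinarily destroy any Fourier-level bound, since they can amplify correlations. The plan is to exploit the fact that Hessians of log-sum-exp are covariance matrices and hence PSD, with trace controlled by the gradients of the children.

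The second step is to show that the Forrelation distribution fools such functions. I would run the Aaronson--Ambainis construction as a correlated Gaussian pair $(x, Hx/\sqrt N)$ and interpolate from independent Gaussians to the correlated ones, e.g.\ $z(t)=(x,\sqrt t\,Hx/\sqrt N+\sqrt{1-t}\,y)$. The derivative of $\E\tilde F(z(t))$ equals a sum of Hessian entries weighted by the off-diagonal covariance $H_{ij}/\sqrt N$. These weights have magnitude $1/\sqrt N$ against a Hessian whose total absolute mass is $\mathrm{poly}(n)$, which gives advantage $\mathrm{poly}(n)/\sqrt N$. Convexity supplies the two-sided comparison: by Jensen, rounding Gaussians to signs via random restrictions only moves the convex function in a controlled direction, and applying this to both $\tilde F$ and its dual handles truncation and rounding to $\pm1$ inputs. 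Combining the approximation error, the interpolation bound, and a truncation at $\|z\|_\infty\le O(\sqrt{\log N})$ shows that the prover's acceptance probability differs by $o(1)$ between yes and no instances. That contradicts soundness and completeness and finishes the query lower bound; the diagonalization then gives the oracle $\calO$.
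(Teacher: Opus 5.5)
Your skeleton (Avg--Max tree, softmax smoothing, convexity, Gaussian interpolation with $1/\sqrt N$ covariances against an $\ell_1$ Hessian bound, diagonalization) matches the paper. The step that carries the proof is missing, though: a \emph{two-sided} comparison between $\tilde F$ on sign vectors and $\tilde F$ on smoothed inputs. Because $\tilde F$ is not multilinear, $\E[\tilde F(\sgn(\cdot))]$ is not $\tilde F$ evaluated at a single point, as it is in Raz--Tal. Jensen gives only $\E[\tilde F(h_\delta(G))]\le\E[\tilde F(\sgn(G))]$, where $h_\delta(u)=\E_W[\sgn(\sqrt{1-\delta^2}u+\delta W)]$. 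There is no ``dual'' that reverses this, since $-\tilde F$ is concave.

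The reverse bound is the paper's main new argument. Convexity gives $F(S)-F(h_\delta(G))\le\nabla F(S)\cdot(S-h_\delta(G))$ with $S=\sgn(G)$. H\"older with $\|S-h_\delta(G)\|_\infty$ is useless here, because many of the $m\gg 1/\delta$ coordinates sit near $0$. Instead, for $G=\sqrt{\tau}Z+\sqrt{1-\tau}D$ with $\tau\ge 1/2$, the paper conditions on $(S,D)$. This freezes $\nabla F(S)$, which has $\ell_1$ norm at most $L$. The coordinates of $G$ remain independent Gaussians truncated to half-lines, with density $O(1+|D_i|)$. So each conditional error is $O(\delta(1+|D_i|))$, and $\E\|D\|_\infty=O(\sqrt{\log m})$ gives a total of $O(L\delta\sqrt{\log m})$.

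This needs constant-variance independent noise in every coordinate. It is therefore unavailable at your fully correlated endpoint: conditioning on the sign pattern of $D$ itself leaves neither independence nor a density bound near $0$. That is why the paper interpolates only from $\tau=1$ to $\tau=1/2$. It then runs an $O(\log m)$-step hybrid that halves the independent noise at each step, treating the already-correlated part as a fixed shift $a$. This uses a rounding lemma that tolerates bounded shifts, after freezing coordinates with $|a_i|>10\sqrt{\ln m}$. It ends with a direct sign-agreement bound once the residual noise is $2^{-T}$. Truncating at $\|z\|_\infty\le O(\sqrt{\log N})$ does not replace any of this.

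Your leaf convexification also fails as stated. A depth-$q$ decision tree can depend on up to $2^q$ variables. Adding $c\sum_i(x_i^2-1)$, which preserves Boolean values, therefore makes the gradient and Hessian $\ell_1$ norms of order $c\,2^q$. A strict majorant or minorant, on the other hand, changes the Boolean values the circuit must reproduce.

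The paper never uses multilinear leaves. For each root-to-leaf path $\gamma$ querying $i_1,\dots,i_k$ with answers $a_1,\dots,a_k$ and output $c_\gamma$, it sets $\ell_\gamma(x)=c_\gamma-\sum_j\frac{1-a_jx_{i_j}}{2}$. On the cube the consistent path scores $c_\gamma$ and every other path scores at most $0$, so the tree equals $\max_\gamma\ell_\gamma$. This is one more MAX layer over affine leaves of $\ell_1$ norm at most $L/2$, so the circuit is convex outright. After softmax, $\|\nabla F\|_1\le L$ holds because the softmax gradient is a convex combination of child gradients. The bound $\|\nabla^2F\|_1\le 2\lambda dL^2$ follows by induction from $\nabla^2S_\lambda=\sum_ip_i\nabla^2u_i+\frac{\lambda}{2}\sum_{i,j}p_ip_j(\nabla u_i-\nabla u_j)(\nabla u_i-\nabla u_j)^{\top}$. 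Its entrywise $\ell_1$ norm is controlled by $\ell_1$ bounds (not trace bounds) on the child gradients, with no appeal to sensitivity or influence of decision trees.
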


This implies any interactive proof for $\BQP$ must contain a non-relativizing ingredient---such as a cryptographic assumption, or an initial complete problem which is $\BQP$-complete but not $\BQP^\calO$-complete under $\P^\calO$ reductions for all $\calO$. In fact, Mahadev's protocol has both of these non-relativizing ingredients, as both the cryptographic hardness of LWE and the $\BQP$-hardness of Local $XZ$-Hamiltonians are non-relativizing (see \cite{ANTV26} for a more detailed discussion).

As a corollary of Theorem \ref{theorem:oracle_separation} and the recent relativizing proof that $\BQP \subseteq \MIP$ \cite{ANTV26}, we derive several new oracle separations which were previously unknown, including the curious case of $\IP$ versus $\MIP$ (answering an open question of Fortnow \cite{For89}):\footnote{Section~\ref{sec:overview} explains explicitly why our lower bound techniques differentiate $\MIP$ and $\IP$.}
\begin{corollary}[Informal]\label{cor:more_separations}
    Relative to the oracle $\calO$ in Theorem \ref{theorem:oracle_separation}, there exists a language $\calL$ such that $\calL \in \BQP^{\calO} \subseteq \MIP^{\calO} \cap \QIP^{\calO}$ but $\calL \notin \IP^{\calO}$.
    
    As a consequence, relative to $\calO$, $\IP_{\BQP}^{\calO} \neq \BQP^{\calO}$, $\IP^{\calO} \neq \QIP^{\calO}$, and $\IP^{\calO} \neq \MIP^{\calO}$.
\end{corollary}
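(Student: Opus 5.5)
The plan is to derive Corollary~\ref{cor:more_separations} directly from Theorem~\ref{theorem:oracle_separation}, using known relativizing inclusions. Fix the oracle $\calO$ from Theorem~\ref{theorem:oracle_separation}, and let $\calL$ be the language witnessing $\BQP^{\calO} \not\subseteq \IP^{\calO}$. Concretely, $\calL$ is the unary Forrelation-based language produced by the diagonalization in Appendix~\ref{sec:diag}. So $\calL \in \BQP^{\calO}$ and $\calL \notin \IP^{\calO}$.

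For the upper bounds, I invoke two relativizing inclusions.
\begin{itemize}
\item $\BQP^{\calO} \subseteq \MIP^{\calO}$ for every oracle, by the relativizing classical multi-prover protocol of \cite{ANTV26}.
\item $\BQP^{\calO} \subseteq \QIP^{\calO}$ for every oracle. This is immediate: a quantum verifier with oracle access can run the $\BQP^{\calO}$ machine itself and ignore the prover.
\end{itemize}
Hence $\calL \in \MIP^{\calO} \cap \QIP^{\calO}$ but $\calL \notin \IP^{\calO}$. This gives the separations $\IP^{\calO} \neq \MIP^{\calO}$ and $\IP^{\calO} \neq \QIP^{\calO}$. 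The direction that fails is containment of $\MIP$ and $\QIP$ in $\IP$; the reverse inclusions $\IP^{\calO} \subseteq \MIP^{\calO}$ and $\IP^{\calO} \subseteq \QIP^{\calO}$ hold trivially and relativize.

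For $\IP_{\BQP}^{\calO} \neq \BQP^{\calO}$, recall that $\IP_{\BQP}$ denotes interactive proofs with a classical polynomial-time verifier and a prover restricted to $\BQP$ (with oracle access). Every such protocol is in particular an $\IP$ protocol, so $\IP_{\BQP}^{\calO} \subseteq \IP^{\calO}$. Since $\calL \in \BQP^{\calO} \setminus \IP^{\calO}$, we get $\BQP^{\calO} \not\subseteq \IP_{\BQP}^{\calO}$, and therefore the two classes differ.

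The only step needing care is checking that the \cite{ANTV26} inclusion holds relative to the specific $\calO$, including the oracle encoding used by the diagonalization. I would handle this by noting that their result is stated for all oracles. It is also worth confirming that the definitions of $\MIP^{\calO}$ and $\IP_{\BQP}^{\calO}$ give the verifier (and, for $\IP_{\BQP}$, the prover) the same oracle access. Everything else is definitional.
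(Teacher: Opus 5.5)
Your proposal is correct and matches the paper's argument, which it leaves implicit in the introduction. The paper takes the language from the diagonalization behind Theorem~\ref{theorem:oracle_separation}, places it in $\MIP^{\calO}$ via the relativizing inclusion of \cite{ANTV26} and in $\QIP^{\calO}$ trivially, and obtains the $\IP_{\BQP}$ separation from $\IP_{\BQP}^{\calO} \subseteq \IP^{\calO}$. That last inclusion relies on soundness in $\IP_{\BQP}$ holding against all provers, as in the paper's Appendix~\ref{sec:bbbv_cor}, so it is worth stating that explicitly.
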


Our results follow by showing that the Forrelation problem due to Aaronson and Ambainis \cite{Aar10,AA15} does not have an interactive proof. To prove such lower bounds, we combine the expectation/average-maximization characterization of $\IP$ circuits due to Aiello, Goldwasser, and H{\aa}stad \cite{AGH90} with a proof that Forrelation fools such Avg-Max circuits. While our lower bound techniques bear some resemblance to the oracle separation of $\BQP$ from $\PH$ due to Raz and Tal \cite{RT22}, they are not a mere strengthening of the Fourier growth bounds of \cite{RT22}.

By combining the results of Theorem \ref{theorem:oracle_separation} with the classic hybrid argument of \cite{BBBV97}, we also obtain the following oracle separation.
\begin{corollary}\label{cor:simons_like}
    There exists a classical oracle $\calO$ relative to which $\BQP^{\calO} \cap \NP^{\calO} \not\subseteq \IP_{\BQP}^{\calO}$.
\end{corollary}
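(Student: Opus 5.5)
We prove Corollary~\ref{cor:simons_like} by combining the query lower bound behind Theorem~\ref{theorem:oracle_separation} with a BBBV-style hybrid argument. The oracle will encode a problem that is easy to decide with a quantum algorithm and easy to certify by an $\NP$ witness. Yet any $\IP_{\BQP}$ protocol, whose prover is an efficient quantum machine with oracle access, must fail on it. The natural candidate is a \emph{hidden} Forrelation instance. The oracle consists of a function $h:\{0,1\}^n\to\{0,1\}$ that is nonzero on at most one point $s$. At that marked location there is a pointer to, or an embedding of, a Forrelation instance $(f_s,g_s)$ with a planted YES or NO answer. The language asks whether such an $s$ exists and whether the instance there is forrelated. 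To make the language lie in $\NP\cap\BQP$, we make membership witnessable: YES instances contain a marked $s$ together with a verifiable certificate, for example a second oracle table encoding the answer bit at $s$. A $\BQP$ machine decides the language by querying a designated location that directly reveals the answer only through the Forrelation structure. Concretely, we design the encoding so that the answer is efficiently quantum-computable from a designated position, while an $\NP$ witness is a short string that is checkable only in the YES case.

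\textbf{Step 1: lower bound for the unhidden instance.} Theorem~\ref{theorem:oracle_separation}'s underlying result says that an $\IP$ verifier with a polynomial number of queries cannot distinguish the Forrelation distribution from the uniform distribution. This holds even against an unbounded prover, via the Avg-Max characterization of \cite{AGH90}. We use this as a black box.

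\textbf{Step 2: the hybrid argument.} In an $\IP_{\BQP}$ protocol, the prover is itself a polynomial-query quantum algorithm. Consider the oracle in which the ``easy'' information, namely the marked location $s$ or the $\NP$ certificate, is hidden among $2^n$ positions. By \cite{BBBV97}, for a random $s$, the total query magnitude that the efficient prover and verifier place on position $s$ is $\mathrm{poly}(n)/2^n$. Hence, with high probability over $s$, their joint behavior changes negligibly if the certificate at $s$ is erased. After erasure, the prover's messages are a function of the remaining oracle, which contains only the Forrelation part. The protocol is then an $\IP$ protocol (an unbounded prover suffices) for Forrelation, which contradicts Step 1. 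The hybrid must be applied across all rounds of interaction, summing the BBBV bounds over the prover's and the verifier's queries in every round. Since the verifier is classical and queries polynomially many points, its chance of hitting $s$ is also negligible.

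\textbf{Step 3: diagonalization.} Finally, we diagonalize over all $\IP_{\BQP}$ protocols in the standard way, choosing input lengths that are spread far apart, to obtain a single oracle $\calO$. The main obstacle is the construction in Steps 1--2: the encoding must keep the language in $\BQP$, so the quantum algorithm must decide it without Grover search over $s$, while an $\NP$ witness exists and the hidden part stays invisible to an efficient prover. We resolve this by letting the $\BQP$ algorithm solve the Forrelation instance directly. The $\NP$ certificate is a separately hidden, randomly placed marked string that appears only in YES instances and only at an exponentially hidden location. BBBV then shows that the efficient prover cannot find the certificate, so it gains nothing beyond what an $\IP$ protocol for Forrelation already has.
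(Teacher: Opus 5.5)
Once you settle on the construction in your final paragraph, your argument is exactly the paper's proof: the Forrelation instance ($\calD_n$ for YES, $\calU_n$ for NO) sits directly in one oracle, and a single randomly placed marked point, present only in YES instances, sits in a second oracle. The $\BQP$ machine ignores the marked oracle, the $\NP$ witness is the marked point, BBBV erases that point from the efficient honest interaction, and this leaves an $\IP$ protocol for Forrelation contradicting Corollary~\ref{cor:values}, followed by diagonalization. Drop the opening ``pointer/embedding at $s$'' construction, which as you note would not be in $\BQP$, and state explicitly that only completeness needs the hybrid, because NO instances already contain no marked point and so soundness against all provers is untouched.
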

We give a proof of Corollary \ref{cor:simons_like} (which is relatively straightforward) in Appendix \ref{sec:bbbv_cor}.

%% file: overview.tex
We construct a classical oracle $\calO$ such that $\BQP^{\calO} \not\subseteq \IP^{\calO}$ by showing that Forrelation is easy for quantum algorithms but hard for interactive proofs. 

Following prior work \cite{AGH90,Aar10,RT22}, separating $\BQP$ from $\IP$ boils down to a pseudorandomness question: exhibit a distribution over exponentially long Boolean strings that is on one hand easy to distinguish from uniform by a polynomial-time quantum algorithm (making queries to the string) and on the other hand, indistinguishable from uniform by any Avg-Max circuits of polynomial depth equipped with decision trees of polynomial depth on the leaves.

\paragraph{The Forrelation distribution.}
Following \cite{Aar10,AA15}, we consider the following  distribution,\footnote{This is the precise distribution  suggested by Aaronson \cite{Aar10}, with no modifications.} denoted $\calD_n$, over $\{-1, 1\}^{2N}$, where $N = 2^n$.
\begin{quote}
    Sample $N$ independent standard Gaussians $x_1, \ldots, x_N \sim \calN(0, 1)$ before outputting $(\sgn(x), \sgn(H_N \cdot x))$. 
    
    Equivalently, sample a vector $z$ from the centered Gaussian $D$ with a covariance matrix
        \[
            \Sigma = \begin{pmatrix} I_N & H_N \\ H_N & I_N \end{pmatrix},
        \]
    before outputting $\sgn(z)$.
\end{quote}
Denote by $\calU_n$ the uniform distribution over $\{\pm 1\}^{2N}$.
The $\BQP$ algorithm for Forrelation due to Aaronson and Ambainis~\cite{AA15} makes $1$ query and runs in $\poly(n)$ time, and distinguishes between $\calD_n$ and $\calU_n$ with constant advantage.
A straightforward amplification of it distinguishes between $\calD_n$ and $\calU_n$ with high probability. Our goal therefore is to argue that these two distributions are indistinguishable to $\IP$ verifiers.

\paragraph{An IP lower bound.}
Let $\langle P, V \rangle$ be an interactive proof system that attempts to distinguish between $\calD_n$ and $\calU_n$. Consider the acceptance function
    \[ a_V(f, g)=\max_P\Pr[\langle V^{f,g},P\rangle=1] \]
associated with $V$. Our main technical result shows that for every $\IP$ $\langle P, V \rangle$,
\[
    \left|
      \E_{\mathcal D_n}[a_V]
       -\E_{\mathcal U_n}[a_V]
    \right|
       \leq 2^{-\Omega(n)}.
\]

\paragraph{Proof outline.} 
Before we go into further detail, we provide a rough outline of the proof, consisting of six steps, for the reader to follow along. The novelty of the proof is centered in steps 2 and 3; the other steps are either standard in the literature, or employ similar techniques to those used in the oracle separation between $\BQP$ and $\PH$~\cite{RT22,Wu22}. 

\begin{enumerate}
    \item \textbf{Representing IP Protocols with Oracle Queries as Avg-Max Circuits.}

    By the standard transformation of \cite{AGH90}, we can express the optimal acceptance probability of the verifier using a circuit with average and maximum gates with affine leaves, obtaining a convex extension $C:\mathbb R^{2N}\to \mathbb R$ (see Lemma \ref{lemma:am_characterization}) of $a_V:\{-1,1\}^{2N} \to [0,1]$.

    \item \textbf{New Structural Result for Avg-Max Circuits.}
    To prove lower bounds against $\IP$ verifiers, one generally needs to isolate relevant structural properties of their acceptance functions. For example, in the oracle separation between $\BQP$ and $\PH$, the key structural property is a poly-logarithmic bound on the level-two Fourier mass of the multilinear extension of the Boolean circuit representing the acceptance function of a $\PH$ algorithm. This yields control of the extension's second derivatives, which enables the separation. 

    However, we are not aware of an analogous poly-logarithmic bound on the level-two Fourier mass of the function $C$ defined above. Instead, we replace the maximum gates in our convex extension by smooth approximations, obtaining a smooth \emph{and} convex function $F$ that approximates the verifier’s optimal acceptance probability on Boolean inputs and has controlled $\ell_1$-gradient and Hessian norms (i.e., first and second derivatives). We emphasize that the convexity of our function will be crucial to our lower bound.\footnote{As we show in Appendix~\ref{sec:smooth-extensions}, every  function from $\{-1,1\}^{2N}$ to $[-1,1]$ can be approximated by a smooth extension with small derivative norms. Thus, without convexity the structural result is vacuous.}

    \item \textbf{From Signs to Approximate Signs.}
    We compare the expectations of $F$ under Gaussian inputs with varying amounts of independent noise. That is, we compare $F(\sgn(Z))$ and $F(\sgn(\sqrt{t} Z + \sqrt{1-t} D))$, where $Z$ is an independent multivariate Gaussian and $D$ is the above Forrelated Gaussian distribution. (The only property we rely on with respect to $D$ is that it is a centered multivariate Gaussian with variances $1$ and small covariances, $|\Sigma_{i,j}|\le 1/\sqrt{N}$  for $i\neq j$.)
    Via careful analysis, the expectation of a \emph{smooth} function on $Z$ is similar to that on $\sqrt{t} Z + \sqrt{1-t} D$.
    However, the composition of $F$ and $\sgn$ is not smooth. We therefore replace the sign function by a smooth approximation function $h$. Convexity and $\ell_1$-gradient bounds are crucial for controlling the error introduced by this replacement (see Lemma \ref{lemma:smoothening}).

    \item \textbf{Comparing the Expectations Under Approximate Signs.}
    By varying $t$ continuously, we bound the expression $|\bbE[F(h(Z))]-\bbE[F(h(\sqrt{1/2}Z + \sqrt{1/2}D))]|$ (see Lemma \ref{lemma:interpolation}). Combining this with the rounding bound from Step 3 gives a comparison of expectations on signs.

    \item \textbf{Removing the Independent Gaussian Contribution.}
    To move to our final distributions, we show that the bound given in the previous steps can be extended to the setting where a fixed vector is added to both Gaussian inputs before taking their signs (see Theorem \ref{theorem:comparison}). Using a hybrid argument then allows us to compare the distributions $\calD_n$ and $\calU_n$ (see Corollary \ref{cor:recursive}).

    \item \textbf{Obtaining the Oracle Separation.}
    Finally, we choose the appropriate parameters to make the verifier's expected optimal acceptance probabilities under $\calD_n$ and $\calU_n$ differ by at most $2^{-\Omega(n)}$ (see Corollary \ref{cor:values}). A diagonalization argument then yields the oracle separation (see Corollary \ref{cor:separation}).
\end{enumerate}

\noindent We are now ready to explain each of the steps of our proof in more detail.

\paragraph{Step 1: Representing IP Protocols with Oracle Queries as Avg-Max Circuits.} 
Fix an interactive proof system $\langle P, V \rangle$. By applying the relativizing transformation of Goldwasser and Sipser \cite{GS86}, we may assume without loss of generality that the verifier's messages consist of uniformly random coins, and that it makes all of its oracle queries after the interaction ends. 

Following the work of Aiello, Goldwasser, and H\r{a}stad \cite{AGH90}, $a_V$ can be represented by a tree of \textit{average} and \textit{maximum} gates: average gates correspond to the verifier's random messages, and maximum gates to the prover's responses. We also need to represent the verifier's (potentially adaptive) oracle queries. Let $L$ bound each message's length and the number of oracle queries. After fixing a completed transcript, the adaptive verifier queries are represented by a binary decision tree of depth at most $L$. For a path $\gamma$ that queries locations $i_1, \ldots, i_k$ (where $k \leq L$), receives answers $a_1, \ldots, a_k \in \{-1, 1\}$, and outputs a decision bit $b_\gamma \in \{0, 1\}$, define
\[
    \ell_\gamma(x) = b_\gamma - \sum_{j = 1}^k \frac{1-a_jx_{i_j}}{2}.
\]
Here, $x\in \{-1,1\}^{2N}$. Any input $x$ that is consistent with $a_1,\ldots,a_k$ on $i_1,\ldots, i_k$ has value equal to the verifier's output $b_\gamma$, while every inconsistent path has value at most zero. Hence $\max_\gamma\ell_\gamma(x) = b_\gamma$. Additionally, $\ell_\gamma$ has coefficient $\ell_1$-norm at most $L$.

Consequently, $a_V$ has an extension $C:\mathbb R^{2N}\to\mathbb R$ computed by a tree of average and maximum gates, with fan-in at most $2^L$ and affine leaves of coefficient $\ell_1$-norm at most $L$. For a polynomial-time verifier, $d, L = \poly(n)$, where $d$ is the tree's depth. Both gate types preserve convexity, so $C$ is convex.

\paragraph{Step 2: New Structural Result for Avg-Max Circuits.}
To use Gaussian interpolation, we need a differentiable extension. Observing that the average gates are already smooth, we replace every maximum gate by the softmax gate
\[
    S_\lambda(u_1,\ldots,u_k)
       =\frac{1}{\lambda}
          \log\!\left(\sum_{j=1}^k e^{\lambda u_j}\right).
\]
It can be shown that the resulting function $F$ remains convex and satisfies the following bounds over its entire domain:
\[
    0\leq F-C\leq\frac{dL\log 2}{\lambda},
    \qquad
    \sum_{i \in [2N]} |\partial_i F(x)| \leq L,
    \qquad
    \sum_{i, j \in [2N]}|\partial_{ij} F(x)|\leq 2\lambda dL^2.
\]

\paragraph{Step 3: From Signs to Approximate Signs.}
Write $m = 2N$. The distribution $\calD_n$ consists of the signs of a centered Gaussian vector $D \sim \calN(0, \Sigma)$. On the other hand, the uniform distribution $\calU_n$ can be thought of as the signs of a Gaussian vector $Z \sim \calN(0, I_m)$. It therefore  suffices to compare $\bbE[F(\sgn(D))]$ and $\bbE[F(\sgn(Z))]$.

The Hessian bound derived above controls how the expectation of a smooth function on an input sampled from a Gaussian distribution changes when Gaussian covariances change. However, it cannot be applied directly here, because $F \circ \sgn$ is discontinuous. To remedy this issue, we first replace the sign function by a smooth approximation and prove that this changes the expectation only slightly by making crucial use of convexity. For $\frac{1}{2} \leq t \leq 1$, consider the random variable
    \[ G_t = \sqrt{t}\,Z + \sqrt{1-t}\,D. \]
The independent Gaussian contribution $\sqrt{t}\,Z$ has variance at least $1/2$ in every coordinate. For $0 < \delta \leq 1/2$, define the scalar function
\[
    h_\delta(u)=\mathbb E_{W\sim\mathcal N(0,1)}
       \left[\sgn
       \left(\sqrt{1-\delta^2}\,u+\delta W\right)\right],
\]
and apply it coordinatewise. This function takes values in $[-1,1]$,
has derivative bounded by $1/\delta$, and approximates the sign
function except near zero (see Figure~\ref{fig:smoothed-sign}).
\begin{figure}[ht]
\begin{mdframed}
\caption{\textbf{A Gaussian smoothing of the sign function.}
The function $h_\delta(x)=2\Phi(\sqrt{1-\delta^2}\,x/\delta)-1$ transitions smoothly from $-1$ to $1$ near the origin.}
\label{fig:smoothed-sign}
\begin{center}
\begin{tikzpicture}[
    x=1.25cm, y=1.8cm,
    >=stealth,
    every node/.style={font=\small}
]

\fill[blue!8] (-1,-1.12) rectangle (1,1.12);
\draw[blue!45,dashed] (-1,-1.12) -- (-1,1.12);
\draw[blue!45,dashed] ( 1,-1.12) -- ( 1,1.12);

\draw[->] (-4.5,0) -- (4.6,0) node[right] {$x$};
\draw[->] (0,-1.25) -- (0,1.43) node[above] {$y$};
\foreach \x/\lab in {-1/{-2\delta},1/{2\delta}} {
    \draw (\x,-0.035) -- (\x,0.035);
    \node[below=3pt,fill=blue!8,inner sep=1pt] at (\x,0) {$\lab$};
}

\draw[black!48,densely dashed,very thick] (-4.25,-1) -- (0,-1);
\draw[black!48,densely dashed,very thick] (0,1) -- (4.25,1);
\draw[black!48,fill=white] (0,-1) circle (1.5pt);
\draw[black!48,fill=white] (0, 1) circle (1.5pt);
\node[black!65,above=5pt] at (3.15,1) {$\sgn(x)$};

\draw[red!75!black, thick,line cap=round,line join=round]
    plot[smooth,tension=0.5] coordinates {
        (-4.25000,-1.0000000) (-4.18750,-1.0000000) (-4.12500,-1.0000000)
        (-4.06250,-1.0000000) (-4.00000,-1.0000000) (-3.93750,-1.0000000)
        (-3.87500,-1.0000000) (-3.81250,-1.0000000) (-3.75000,-1.0000000)
        (-3.68750,-1.0000000) (-3.62500,-1.0000000) (-3.56250,-1.0000000)
        (-3.50000,-1.0000000) (-3.43750,-1.0000000) (-3.37500,-1.0000000)
        (-3.31250,-1.0000000) (-3.25000,-1.0000000) (-3.18750,-1.0000000)
        (-3.12500,-1.0000000) (-3.06250,-1.0000000) (-3.00000,-1.0000000)
        (-2.93750,-1.0000000) (-2.87500,-1.0000000) (-2.81250,-1.0000000)
        (-2.75000,-1.0000000) (-2.68750,-0.9999999) (-2.62500,-0.9999998)
        (-2.56250,-0.9999996) (-2.50000,-0.9999993) (-2.43750,-0.9999987)
        (-2.37500,-0.9999976) (-2.31250,-0.9999955) (-2.25000,-0.9999920)
        (-2.18750,-0.9999858) (-2.12500,-0.9999752) (-2.06250,-0.9999574)
        (-2.00000,-0.9999277) (-1.93750,-0.9998793) (-1.87500,-0.9998012)
        (-1.81250,-0.9996776) (-1.75000,-0.9994845) (-1.68750,-0.9991876)
        (-1.62500,-0.9987381) (-1.56250,-0.9980680) (-1.50000,-0.9970841)
        (-1.43750,-0.9956616) (-1.37500,-0.9936364) (-1.31250,-0.9907968)
        (-1.25000,-0.9868762) (-1.18750,-0.9815456) (-1.12500,-0.9744083)
        (-1.06250,-0.9649978) (-1.00000,-0.9527791) (-0.93750,-0.9371560)
        (-0.87500,-0.9174847) (-0.81250,-0.8930935) (-0.75000,-0.8633111)
        (-0.68750,-0.8275003) (-0.62500,-0.7850971) (-0.56250,-0.7356533)
        (-0.50000,-0.6788789) (-0.43750,-0.6146807) (-0.37500,-0.5431946)
        (-0.31250,-0.4648067) (-0.25000,-0.3801609) (-0.18750,-0.2901509)
        (-0.12500,-0.1958959) (-0.06250,-0.0987002) (0.00000,0.0000000)
        (0.06250,0.0987002) (0.12500,0.1958959) (0.18750,0.2901509)
        (0.25000,0.3801609) (0.31250,0.4648067) (0.37500,0.5431946)
        (0.43750,0.6146807) (0.50000,0.6788789) (0.56250,0.7356533)
        (0.62500,0.7850971) (0.68750,0.8275003) (0.75000,0.8633111)
        (0.81250,0.8930935) (0.87500,0.9174847) (0.93750,0.9371560)
        (1.00000,0.9527791) (1.06250,0.9649978) (1.12500,0.9744083)
        (1.18750,0.9815456) (1.25000,0.9868762) (1.31250,0.9907968)
        (1.37500,0.9936364) (1.43750,0.9956616) (1.50000,0.9970841)
        (1.56250,0.9980680) (1.62500,0.9987381) (1.68750,0.9991876)
        (1.75000,0.9994845) (1.81250,0.9996776) (1.87500,0.9998012)
        (1.93750,0.9998793) (2.00000,0.9999277) (2.06250,0.9999574)
        (2.12500,0.9999752) (2.18750,0.9999858) (2.25000,0.9999920)
        (2.31250,0.9999955) (2.37500,0.9999976) (2.43750,0.9999987)
        (2.50000,0.9999993) (2.56250,0.9999996) (2.62500,0.9999998)
        (2.68750,0.9999999) (2.75000,1.0000000) (2.81250,1.0000000)
        (2.87500,1.0000000) (2.93750,1.0000000) (3.00000,1.0000000)
        (3.06250,1.0000000) (3.12500,1.0000000) (3.18750,1.0000000)
        (3.25000,1.0000000) (3.31250,1.0000000) (3.37500,1.0000000)
        (3.43750,1.0000000) (3.50000,1.0000000) (3.56250,1.0000000)
        (3.62500,1.0000000) (3.68750,1.0000000) (3.75000,1.0000000)
        (3.81250,1.0000000) (3.87500,1.0000000) (3.93750,1.0000000)
        (4.00000,1.0000000) (4.06250,1.0000000) (4.12500,1.0000000)
        (4.18750,1.0000000) (4.25000,1.0000000)
    };
\node[red!75!black,below=7pt] at (3.1,0.94) {$h_\delta(x)$};

\draw[blue!55] (-1,-1.18) -- (-1,-1.49);
\draw[blue!55] ( 1,-1.18) -- ( 1,-1.49);
\draw[<->,blue!65!black,thick] (-1,-1.40) -- (1,-1.40)
    node[midway,below=4pt] {$4\delta$};
\end{tikzpicture}
\end{center}
\end{mdframed}
\end{figure}
Our rounding lemma proves that
\begin{equation}
\label{eq:rounding_bound}
    0\leq
\E[ F(\sgn(G_t))]-
   \E[F(h_\delta(G_t))]
    \leq O\!\left(L\delta\sqrt{\log(2m)}\right).
\end{equation}
The logarithmic dependence on $m$ is essential: a bound proportional
to the number of coordinates would be too large for our application.

The difficulty is that some coordinates may lie near zero and have large rounding errors. Convexity lets us control their contribution: writing $S = \sgn(G_t)$, we have
    \begin{equation}\label{eq:convexity}
 F(S)-F(h_\delta(G_t)) \leq \sum_{i = 1}^m \partial_i F(S) \bigl(S_i-h_\delta((G_t)_i)\bigr). \end{equation}
    The right hand side is the inner product of $\nabla F(S)$ and $(S-h_{\delta}(G_t))$,  where we know that $\|\nabla{F}(S)\|_1\le L$.
    It is tempting to try bounding the inner product by $L\cdot \|(S-h_{\delta}(G_t))\|_{\infty}$. However, typically $\|(S-h_{\delta}(G_t))\|_{\infty}$ is large as we have $m\gg 1/\delta$ coordinates and many would fall in the interval of length $O(\delta)$ where $h_{\delta}(\cdot)$ does not approximate the sign well. The key insight is that even after we condition on $S=s$ (and $D=d$), $G_t$ still remains somewhat random allowing us to average over $(S - h_{\delta}(G_t))|S=s$, and this average is typically small in \emph{all coordinates}.
    
We provide more details. Conditioning on $S = s$ and $D = d$ fixes the coefficients $\partial_i F(S)$ in Eq.~\eqref{eq:convexity}, whose total absolute value is at most $L$. But note that once $D = d$ is fixed, the coordinates of $G_t$ are independent Gaussians! Conditioning further on $S = s$ therefore restricts each coordinate to the positive or negative half-line specified by its sign, with no additional restriction from the other coordinates’ signs. Since smoothing errors are concentrated near zero, a bound on this conditional Gaussian density, applied in particular near zero, gives an expected absolute error of $O(\delta(1+|d_i|))$ in coordinate $i$. We conclude that the conditional expectation of the sum is at most $O(L\delta(1+\|d\|_\infty))$. Averaging over the values of $D$ and $S$ and using a Gaussian tail bound $\bbE\|D\|_\infty = O(\sqrt{\log(2m)})$ then yields the upper bound in \eqref{eq:rounding_bound}.

For the lower bound, let $G_t'$ be an independent copy of $G_t$. The vector $\sqrt{1-\delta^2}\,G_t+\delta G_t'$ has the same distribution as $G_t$, and its sign vector has conditional mean $h_\delta(G_t)$ given $G_t$. The lower bound then follows from Jensen's inequality.

\paragraph{Step 4: Comparing the Expectations Under Approximate Signs.}
We interpolate from $G_{1/2} = (Z+D)/\sqrt{2}$ to $G_1 = Z$. Set $Q = F \circ h_\delta$, write $\beta = \max_{i \neq j} |\Sigma_{ij}|$, and let $K$ satisfy $\sum_{i,j}|\partial_{ij}F(x)| \leq K$ on $[-1, 1]^m$. Gaussian interpolation gives
    \[ \frac{d}{dt}\bbE[Q(G_t)] = -\frac{1}{2}\sum_{i\ne j}\Sigma_{ij}\, \bbE\bigl[\partial_{ij}Q(G_t)\bigr]. \]
There are no diagonal terms because the coordinate variances remain fixed. For $i \neq j$, the chain rule gives
\[
\partial_{ij}Q(x) = (\partial_{ij}F)(h_\delta(x))\, h_\delta'(x_i)h_\delta'(x_j).
\]
Using $|h_\delta'| \leq 1/\delta$, we therefore obtain
\[
\left|\frac{d}{dt}\bbE[Q(G_t)]\right| \leq \frac{\beta K}{2\delta^2}.
\]
Integrating over $t \in [1/2,1]$ and applying the rounding bound \eqref{eq:rounding_bound} at both endpoints shows that the expectations of $F$ on $\sgn((Z+D)/\sqrt{2})$ and $\sgn(Z)$ differ by at most $O(\varepsilon_0)$, where
\[
\varepsilon_0 = L\delta\sqrt{\log(2m)}+\frac{\beta K}{\delta^2}.
\]
The first term accounts for smoothing the signs, and the second for
changing the covariances.

\paragraph{Step 5: Removing the Independent Gaussian Contribution.}
We bounded the difference between $\bbE[F(\sgn(G_t))]$ at $t = 1$ and $t = \frac{1}{2}$, since the rounding bound requires $t \geq \frac{1}{2}$. However, our goal is to compare $G_1 = Z$ with $G_0 = D$. To this end, we use a hybrid argument that halves the independent variance in successive steps:
\[
H_j = 2^{-j/2}Z + \sqrt{1-2^{-j}}\,D, \qquad j = 0, \ldots, T.
\]
Thus $H_0 = Z$, while $H_T$ approaches $D$  for large $T$. We extend the preceding comparison to
\[
\sgn(a + Z)
\quad\text{and}\quad
\sgn\!\left(a + \frac{Z+D}{\sqrt{2}}\right),
\]
uniformly over the common shift $a \in \bbR^{2N}$. 
Comparing $\sgn(H_j)$ to $\sgn(H_{j+1})$ reduces to comparing $\sgn(a+H_0)$ to $\sgn(a+H_1)$ under a shift $a \in \bbR^{2N}$. Next, we explain how to extend the previous analysis to any arbitrary shift $a \in \bbR^{2N}$.

The key observation is that in coordinates $i$ where $|a_i|$ is large, the shift almost determines the corresponding signs: Gaussian noise is unlikely to overcome them. We can therefore fix those signs at negligible cost, reducing to the case of bounded shifts. For the remaining coordinates with bounded shifts, we use Lemma \ref{lemma:smoothening}, which extends the earlier sign-smoothing bound to inputs with a bounded shift. Combining this bound with the same interpolation argument gives an $O(\varepsilon_0)$ comparison. Thus, the error per step stays essentially unchanged as the independent noise decreases.

After $T = O(\log m)$ hybrid steps, the residual noise  is very small. Gaussian coordinates are unlikely to lie close enough to zero for this noise to change their signs, so the final comparison of $\sgn(H_T)$ with $\sgn(D)$ incurs only a negligible error. Summing the errors over the steps gives, for $L \leq m$,
\[
\left|\bbE[F(\sgn(Z))] - \bbE[F(\sgn(D))]\right| \leq O(\log m) \bigl(\varepsilon_0 + m^{-10}\bigr).
\]

\paragraph{Step 6: Obtaining the Oracle Separation.}
For Forrelation, $\beta = N^{-1/2}$. By setting our parameters appropriately, one can combine our previous results to show that
    \[ \left|\E_{\mathcal D_n}[a_V] - \E_{\mathcal U_n}[a_V]\right| \leq 2^{-\Omega(n)}. \]
To obtain our oracle separation, we can then apply a standard diagonalization argument. For more details, see Appendix \ref{sec:diag}.

\section{Discussion}
\subsection{Relation to the Raz--Tal Separation.}
Raz and Tal's analysis~\cite{RT22,Wu22} also admits a Hessian interpretation. They describe the behavior of a $\PH$ algorithm using multilinear extensions $f:\R^{2N}\to \R$ of constant-depth circuits (aka $\ACzero$ circuits). 
There, $\partial_{ij}f(0) = \widehat f(\{i,j\})$ for $i \neq j$ and $\partial_{ii}f(x) = 0$ everywhere.
Thus, level-two Fourier growth controls the Hessian at $0$. 
Tal~\cite{Tal17} obtains tight level-two Fourier bounds for constant-depth circuits, but with an exponential dependence on the depth.
Furthermore, using closure under restrictions, they translate these bounds to a bound on $\partial_{ij}f(x)$ for any point in an interior box of the Boolean cube, $x\in [-1/2,1/2]^{2N}$. Thus, the structural result for constant-depth circuits resembles that for the smoothed-Avg-Max circuits with two differences: (i) the bounds on the second derivatives for $\ACzero$ are not everywhere, and (ii) the extension from Boolean inputs to $\R^{2N}$ is multilinear in the case of $\ACzero$ whereas for smoothed Avg-Max circuits the extension is convex and non-multilinear.

For $\ACzero$,  multilinearity means that $\E[f(\sgn(Z))] = f(0)$. This allows \cite{RT22} to compare (a scaled version of) the forrelated Gaussian $D$ to the all-zero vector. 
Step~4 in the above outline is very similar to the analysis there, but with a different $h$ function (namely, $h(x)$ truncates $x$ to the interval $[-1/2,1/2]$).
Rather than proving Fourier bounds for the multilinear extension of Avg-Max circuits, we construct a smooth non-multilinear convex approximation with derivative bounds polynomial in $d,L$, allowing polynomially many rounds.

Another relation to Raz-Tal is more high level. 
Their result gives lower bounds against algorithms in the Polynomial-time Hierarchy ($\PH$) that make constant and even $o(n/\log n)$ levels of alternation. By \cite{GS86}, this means that they imply the same lower bounds for any interactive protocol making $o(n/\log n)$ rounds of communication. Again, our results extend to any $\poly(n)$ rounds, and even $\exp(n/10)$ rounds.

\subsection{Why Multiple Provers Evade the Lower Bound.}
An important feature of the $\IP$ representation is that we never maximize over a complete prover strategy at once. Since the prover sees the entire public transcript, its optimal acceptance probability can be computed by backward induction: at each prover turn we maximize only over the prover's next message (or, equivalently, over its bits sequentially). Thus the resulting AVG/MAX tree has only polynomial depth and each MAX gate has at most singly exponential fan-in.

This local maximization need not be valid for $\MIP$. A prover's response may depend only on its local transcript, so two different global transcripts that induce the same local view must receive the same response. Independently maximizing at the corresponding nodes of the global transcript tree would violate this consistency constraint and would effectively reveal to the prover information held by the other provers. A direct way to preserve consistency is therefore to maximize over a complete prover strategy table. Such a table may contain $S = 2^{\poly(n)}$ bits. Consequently, one must either use a MAX gate over $2^S$ strategies, whose softmax approximation error is $S/\lambda$, or expand this maximum into $S$ binary MAX layers. In either representation, the parameter controlling our smoothing and Hessian bounds becomes exponential, so the quantitative lower-bound argument no longer applies. This explains how the relativizing $\MIP$ protocol for $\BQP$ of~\cite{ANTV26} evades our bound.

\subsection{Pseudorandom Generators Against Avg-Max Circuits.}
Our indistinguishability result between $\calD_n$ and $\calU_n$ for average-maximum circuits holds for any other distribution $\mathcal{Q}_n$ that is defined as the sign function applied to a Gaussian with unit coordinate variances and a bound on off-diagonal covariances. To apply it beyond Forrelation, choose explicit unit vectors $v_1,\ldots,v_m$ with small pairwise inner products in a low-dimensional space. For a standard Gaussian vector $g$ in that space, the coordinates $\langle v_i,g\rangle$ have variance one and covariances $\langle v_i,v_j\rangle$. Based on our indistinguishability result, their signs therefore fool average-maximum circuits when the inner products are sufficiently small. 

To construct the vectors explicitly, we use Ta-Shma's balanced-code construction~\cite{TaShma17}, following the approach of~\cite{CHLT19}. For a prescribed bound $\beta$ on the absolute pairwise inner products, this gives unit vectors $v_1, \ldots, v_m \in \bbR^r$ with $r = \poly(\log m, 1/\beta)$.

To generate one output, draw $r$ independent standard Gaussian numbers $g_1, \ldots, g_r$, set $g = (g_1, \ldots, g_r)$, and output
    \[ \bigl(\sgn\langle v_1, g\rangle, \ldots, \sgn\langle v_m, g\rangle\bigr). \]
Thus all $m$ output signs are computed from the same $r$ Gaussian samples.

%% file: prelims.tex
For a differentiable function $F: \bbR^m \to \bbR$, we define the $\ell_1$-gradient norm of $F$ as
    \[ \|\nabla F(x)\|_1 := \sum_{i \in [m]} |\partial_i F(x)|. \]
If $F$ is twice differentiable, then we can define the Hessian norm of $F$ as
    \[ \|\nabla^2F(x)\|_1 := \sum_{i, j \in [m]}|\partial_{ij} F(x)|. \]

Let $\Phi$ and $\varphi$ denote the CDF and PDF, respectively, of a standard Gaussian with mean zero and variance 1.

\subsection{Gaussian Random Variables}
We will use the following facts about Gaussian random variables throughout the proof.

\begin{fact}[Gaussian sign identity (see \cite{GW95})]\label{fact:sign_identity}
For any two jointly Gaussian, centered random variables $X, Y$ with positive variances with Pearson correlation $\rho$,
    \[ \bbE[\sgn(X)\sgn(Y)] = \frac{2}{\pi} \arcsin(\rho). \]
\end{fact}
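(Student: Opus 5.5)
The statement is the Gaussian sign identity (Fact~\ref{fact:sign_identity}). I will prove it by a reduction to planar Gaussian geometry.

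\textbf{Normalization.} Scaling $X$ and $Y$ by positive constants changes neither their signs nor $\rho$. So I may assume $X$ and $Y$ have unit variance and $\bbE[XY]=\rho$.

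\textbf{Representation.} Take $g$ to be a standard Gaussian vector in $\bbR^2$. Choose unit vectors $u,v$ with $\langle u,v\rangle=\rho$; this is possible since $|\rho|\le 1$. Then $(X,Y)$ has the same joint law as $(\langle u,g\rangle,\langle v,g\rangle)$, because both pairs are centered Gaussians with the same covariance. Let $\theta=\arccos\rho\in[0,\pi]$ be the angle between $u$ and $v$.

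\textbf{Rotation-invariance count.} The distribution of $g$ is rotation invariant, so the direction of $g$ is uniform on the circle. The event $\sgn\langle u,g\rangle\neq\sgn\langle v,g\rangle$ occurs exactly when the line $g^\perp$ separates $u$ from $v$. Equivalently, the direction of $g^\perp$ (a uniform angle mod $\pi$) falls in the arc of length $\theta$ between $u$ and $v$. Hence
\[ \Pr[\sgn X\neq \sgn Y]=\theta/\pi. \]
The events $\langle u,g\rangle=0$ and $\langle v,g\rangle=0$ have probability zero, so the convention for $\sgn(0)$ is irrelevant.

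\textbf{Conclusion.} Using the count above,
\[ \bbE[\sgn X\sgn Y]=1-2\theta/\pi=1-\tfrac{2}{\pi}\arccos\rho=\tfrac{2}{\pi}\arcsin\rho, \]
where the last step uses $\arccos\rho=\pi/2-\arcsin\rho$.

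\textbf{Main obstacle.} The only delicate point is the angle-counting step. I would justify it either by the symmetry argument above, or by writing $g$ in polar coordinates and integrating the indicator over the angle. Degenerate cases need no separate treatment: $\rho=\pm1$ gives $\theta\in\{0,\pi\}$, and both sides of the identity agree there.
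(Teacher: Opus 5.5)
Your proof is correct. It is the standard random-hyperplane (Sheppard / Goemans--Williamson) argument: a uniformly random line through the origin separates two unit vectors at angle $\theta$ with probability $\theta/\pi$. The paper does not prove this fact itself but cites \cite{GW95}, whose proof is exactly this argument.
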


\begin{fact}[Expected maximum of standard Gaussians]\label{fact:exp_max}
Let $T_1, \ldots, T_m$ be, not necessarily independent, standard Gaussian random variables, i.e., $T_i \sim \mathcal N(0, 1)$. Then
\[
\mathbb E\!\left[\max_{i \in [m]}|T_i|\right] \leq 4  \sqrt{\ln 2m}.
\]
\end{fact}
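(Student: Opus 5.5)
We bound the expected maximum of $m$ standard Gaussians that need not be independent. The plan is the standard moment-generating-function argument, applied to the $2m$ variables $\pm T_1,\ldots,\pm T_m$. It uses only that each $T_i$ is marginally $\calN(0,1)$, so the dependence structure never matters.

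First, write $\max_i |T_i| = \max_{i \in [m],\, \sigma\in\{\pm1\}} \sigma T_i$. Fix a parameter $s>0$. Since $x\mapsto e^{sx}$ is convex, Jensen's inequality gives
\[
e^{s\,\E[\max_i|T_i|]} \le \E\Bigl[e^{s\max_{i,\sigma}\sigma T_i}\Bigr] \le \sum_{i\in[m]}\sum_{\sigma\in\{\pm1\}} \E\bigl[e^{s\sigma T_i}\bigr] = 2m\, e^{s^2/2}.
\]
The middle step bounds the exponential of the maximum by the sum of the exponentials. The last step uses the Gaussian moment generating function $\E[e^{s\sigma T_i}]=e^{s^2/2}$, which holds for each coordinate separately, so no joint information is needed.

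Taking logarithms and dividing by $s$ yields
\[
\E\Bigl[\max_i|T_i|\Bigr] \le \frac{\ln(2m)}{s} + \frac{s}{2}.
\]
Choosing $s=\sqrt{2\ln 2m}$ gives $\E[\max_i|T_i|]\le \sqrt{2\ln 2m}$, which is at most $4\sqrt{\ln 2m}$. Here $m\ge1$, so $\ln 2m>0$ and this choice of $s$ is legitimate.

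The argument has no real obstacle. The one point to check is that the union-type bound $e^{s\max}\le\sum e^{s\sigma T_i}$ needs no independence, which is why the statement allows arbitrary dependence. The constant $4$ is loose: the argument proves the sharper constant $\sqrt2$.
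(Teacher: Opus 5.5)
Your proof is correct. It takes a different route from the paper's in one respect: you use exponential moments, whereas the paper uses polynomial moments.

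Writing $M=\max_i|T_i|$, the paper bounds $M^{2k}\le\sum_i|T_i|^{2k}$ and uses $\mathbb{E}[|T_i|^{2k}]=(2k-1)!!\le(2k)^k$. It then applies Jensen in the form $\mathbb{E}[M]\le(\mathbb{E}[M^{2k}])^{1/(2k)}\le m^{1/(2k)}\sqrt{2k}$ with $k=\lceil\ln 2m\rceil$, which after routine estimates gives the constant $4$.

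The two arguments share a template. Each dominates a convex function of the maximum by the sum of that function over the individual variables, so only marginals enter and dependence is irrelevant. Each then inverts with Jensen and tunes a free parameter. In the paper the even power $2k\approx 2\ln 2m$ plays the role of your exponential weight $s=\sqrt{2\ln 2m}$. The paper handles the absolute values directly via $|T_i|^{2k}=T_i^{2k}$, where you double to the $2m$ variables $\pm T_i$.

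Your version buys a shorter computation, with no integer rounding of $k$ and no estimate of $(2k-1)!!$, and it gives the sharp leading constant $\sqrt{2\ln 2m}$. The paper's version needs only a single moment of order about $\ln m$. So it would apply verbatim to variables whose moment generating function is infinite, though that generality is not needed here.

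Incidentally, your step $\max_j u_j\le s^{-1}\ln\sum_j e^{s u_j}$ is exactly the softmax bound $\max\le S_\lambda$ that the paper uses in Lemma~\ref{lemma:structural}. Your resulting $\ln(2m)/s$ term is the analogue of the $\ln k/\lambda$ error there.
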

\begin{proof}
Define the random variable $M := \max_i |T_i|$ and let $k$ be a positive integer. Then, $M^{2k} \le \sum_{i=1}^m |T_i|^{2k}$, and so
    \[ \E[M^{2k}] \leq \sum_{i = 1}^m \E[|T_i|^{2k}] = \sum_{i = 1}^m (2k-1)!! \leq m \cdot (2k)^k,\]
where the equality uses the standard formula for an even moment of a Gaussian. Applying Jensen's inequality, we get that
    \[ \mathbb E[M] \le \left(\mathbb E[M^{2k}]\right)^{1/(2k)} \le m^{1/(2k)} \cdot \sqrt{2k}. \]
Picking $k = \lceil{\ln 2m \rceil}$ completes the proof.
\end{proof}

\begin{lemma}[Smart path method, \cite{Tal10}]\label{lemma:smart_path}
    Let $Z$, $D$ be independent centered Gaussian vectors in $\bbR^m$ with covariance matrices $\Sigma_Z$, $\Sigma_D$, and $Q$ be any smooth function such that any partial derivative of $Q$ is bounded on $\bbR^m$. Define the random variable
        \[ G(t) := \sqrt{t} \cdot Z+\sqrt{1-t} \cdot D, \]
    as well as the function 
        \[ q(t) := \bbE[Q(G(t))]. \]
    Then for all $0 < t < 1$,
        \[ q'(t) = \frac{1}{2} \sum_{ij} ((\Sigma_Z)_{ij}-(\Sigma_D)_{ij}) \cdot \bbE[\partial_{ij}Q(G(t))]. \]
\end{lemma}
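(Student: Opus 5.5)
We prove the smart path formula by differentiating under the expectation and then applying Gaussian integration by parts (Stein's lemma) to each of the two independent Gaussian pieces.

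\textbf{Step 1: differentiate under the expectation.} Since all partial derivatives of $Q$ are bounded on $\bbR^m$, and $\frac{d}{dt}G(t) = \frac{1}{2\sqrt t}Z - \frac{1}{2\sqrt{1-t}}D$ is integrable for fixed $0<t<1$, dominated convergence lets us differentiate inside the expectation. This gives
\[
q'(t) = \sum_{i} \bbE\!\left[\partial_i Q(G(t))\left(\frac{Z_i}{2\sqrt t} - \frac{D_i}{2\sqrt{1-t}}\right)\right].
\]
To justify the interchange for $t$ in a compact subinterval of $(0,1)$, use the bound $|\partial_i Q| \le M$ together with $\bbE|Z_i|, \bbE|D_i| < \infty$, and apply the mean value theorem to the difference quotients.

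\textbf{Step 2: apply Stein's lemma to each piece.} We use the multivariate Gaussian integration by parts identity: for a centered Gaussian vector $X$ with covariance $\Sigma_X$ and a $C^1$ function $g$ with bounded derivatives,
\[
\bbE[X_i\, g(X)] = \sum_j (\Sigma_X)_{ij}\, \bbE[\partial_j g(X)].
\]
If needed, this follows by writing $X = A W$ with $W$ standard and $AA^\top = \Sigma_X$, then applying the one-dimensional identity $\bbE[W g(W)] = \bbE[g'(W)]$ coordinatewise via Fubini.

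First condition on $D$, which is independent of $Z$, and apply the identity with $X = Z$ to $g(z) = \partial_i Q(\sqrt t\, z + \sqrt{1-t}\, D)$. Since $\partial_j g = \sqrt t\,\partial_{ij}Q$, this yields
\[
\bbE[Z_i\,\partial_i Q(G(t))] = \sqrt t \sum_j (\Sigma_Z)_{ij}\, \bbE[\partial_{ij}Q(G(t))].
\]
Symmetrically, conditioning on $Z$ and applying the identity to $D$ gives
\[
\bbE[D_i\,\partial_i Q(G(t))] = \sqrt{1-t} \sum_j (\Sigma_D)_{ij}\, \bbE[\partial_{ij}Q(G(t))].
\]

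\textbf{Step 3: combine.} Substituting these into Step 1, the factors $\sqrt t$ and $\sqrt{1-t}$ cancel against the denominators $2\sqrt t$ and $2\sqrt{1-t}$. What remains is
\[
q'(t) = \frac{1}{2} \sum_{ij} \bigl((\Sigma_Z)_{ij} - (\Sigma_D)_{ij}\bigr)\, \bbE[\partial_{ij}Q(G(t))],
\]
which is the claimed formula.

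\textbf{Main obstacle.} The only delicate point is the regularity needed for Stein's lemma. The identity needs $\partial_i Q$ to be $C^1$ with integrable derivatives, and this is supplied by the smoothness of $Q$ and the boundedness of its second partials. Degenerate (singular) covariances are handled by the representation $X = AW$, which does not require $\Sigma_X$ to be invertible.
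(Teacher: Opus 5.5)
Your proof is correct and follows the standard argument. The paper gives no proof of its own and only cites Talagrand, whose proof likewise differentiates under the expectation and then applies Gaussian integration by parts. The one cosmetic difference is that Talagrand applies the integration-by-parts formula once to the jointly Gaussian pair $(G'(t), G(t))$, using $\bbE[G_i'(t)G_j(t)] = \tfrac12\bigl((\Sigma_Z)_{ij}-(\Sigma_D)_{ij}\bigr)$, whereas you split into the $Z$- and $D$-parts and condition on the other one; the two routes are equivalent.
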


\subsection{Interactive Protocols and Average-Max Circuits}
To model an $\IP$ verifier with oracle access, we use the standard Arthur--Merlin game formulation of~\cite[Section~2]{AGH90}: the verifier's messages consist of fresh uniform random strings of predetermined lengths, and acceptance is determined by a polynomial-time oracle computation on the completed transcript. In a public-coin interactive protocol, postponing the verifier's oracle calls to the end of the protocol  entails no loss of generality. A protocol in which the verifier sends oracle-dependent messages can be compiled into this normal form by asking the prover to reconstruct the verifier's messages from the public randomness, the previous transcript, and the oracle.

It suffices for us to consider polynomial-round Arthur--Merlin protocols, since the private-coin to public-coin transformation of Goldwasser and Sipser~\cite{GS86} relativizes: for every classical oracle $\calO$, $\IP^{\calO} = \AM[\poly]^{\calO}$. Aiello, Goldwasser, and H{\aa}stad~\cite{AGH90} represented Arthur-Merlin protocols by trees of alternating average and maximum gates, corresponding respectively to Arthur's random messages and Merlin's optimal responses. Whereas they only consider protocols in which the verifier makes a single oracle query, we generalize their definition to accommodate a polynomial number of adaptive queries.  

\begin{definition}[AM circuits]
A \emph{$(d,L)$-AM circuit} on $m$ variables is a rooted tree of depth at most $d$, whose internal gates have fan-in at most $2^L$ and compute
\[
    \operatorname{AVG}(z_1,\ldots,z_t)
        = \frac{1}{t}\sum_{j=1}^t z_j,
    \qquad
    \operatorname{MAX}(z_1,\ldots,z_t)
        = \max_{j\in[t]} z_j.
\]
Its leaves are affine functions $\ell(x) = c + \langle u, x\rangle$, where $c \in \bbR$, $u \in \bbR^m$, and $\|u\|_1 \leq L$. Evaluation from the leaves to the root defines a function $C: \bbR^m \to \bbR$.
\end{definition}

The following lemma shows that these circuits indeed capture $\AM[\poly]$ protocols. Fix an $\AM[\poly]$ verifier $V$, where both parties have query access to an oracle $\calO$ represented by $x \in \{\pm 1\}^m$, and define
\[
    a_V(x) := \max_P \Pr[\langle V^x \leftrightarrow P \rangle = 1],
\]
where $\langle V^x \leftrightarrow P \rangle = 1$ means an acceptance event.

\begin{lemma}\label{lemma:am_characterization}
Suppose an $\AM[\poly]$ protocol exchanges $r$ messages, each of a prescribed length at most $L$, and its verifier makes at most $L$ queries to the oracle $x \in \{\pm 1\}^m$.

Then $a_V: \{\pm 1\}^m \to [0,1]$ has an extension $C: \bbR^m \to \bbR$ computable by an $(r+1, L)$-AM circuit.
\end{lemma}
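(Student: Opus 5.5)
We build the circuit by backward induction over the protocol's game tree. At the bottom we place the adaptive oracle computation; above it we alternate AVG gates (Arthur's messages) and MAX gates (Merlin's replies). The normal form fixed above makes this possible: Arthur's messages are fresh uniform strings of prescribed length at most $L$, and every oracle query is postponed until the transcript is complete.

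For a completed transcript $\tau$, the verifier's adaptive queries form a decision tree of depth at most $L$, each leaf $\gamma$ carrying a decision bit $b_\gamma\in\{0,1\}$. For each root-to-leaf path $\gamma$ querying $i_1,\dots,i_k$ ($k\le L$) with answers $a_1,\dots,a_k$, define
\[
\ell_\gamma(x) = b_\gamma - \sum_{j=1}^k \frac{1-a_j x_{i_j}}{2}.
\]
This is affine with $\|u\|_1 = k/2 \le L$. Define $C_\tau(x) := \max_\gamma \ell_\gamma(x)$, a single MAX gate of fan-in at most $2^L$.

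We then check that $C_\tau$ agrees with the verifier's decision on Boolean inputs. For $x\in\{\pm1\}^m$, each term $(1-a_jx_{i_j})/2$ is $0$ if $x$ is consistent with the answer $a_j$ and $1$ otherwise. The unique path consistent with $x$ therefore has $\ell_\gamma(x)=b_\gamma\ge 0$. Every other path has at least one inconsistent query, so $\ell_\gamma(x)\le 1-1=0$. Hence $C_\tau(x)=b_{\gamma(x)}$, the verifier's decision on $(\tau,x)$.

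Above these leaves we build the tree level by level. At a node where Arthur sends a message of length $\le L$, we use an AVG gate over all $2^{\le L}$ strings. Using the uniform average matches the acceptance probability exactly, since Arthur's messages are uniform. At a node where Merlin replies, we use a MAX gate over all his messages of length $\le L$. Evaluating $C$ on a Boolean $x$, induction from the bottom shows each internal node computes the optimal conditional acceptance probability given the partial transcript. This is the AGH90 argument: Merlin sees the entire public transcript and may know $x$, so local maximization at each node equals the maximum over prover strategies. The root therefore equals $a_V(x)$.

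The step that needs care is the depth count, since $r$ message layers plus one MAX layer for the queries gives $r+1$ layers. This must be reconciled with the definition, where the leaves themselves are the affine functions and are not counted as a gate layer. Each message contributes exactly one gate layer and the query decision tree contributes one MAX layer, so the depth is $r+1$. Fan-in is bounded by $2^L$ provided we pad messages to exactly length $L$ (or let the prescribed lengths be $\le L$). The leaf $\ell_1$ bound $\le L$ was checked above, so the result is an $(r+1,L)$-AM circuit.
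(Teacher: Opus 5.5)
Your proposal is correct and follows the paper's proof essentially step for step: the same path-indexed affine functions $\ell_\gamma$ under a single bottom MAX gate, the same AVG/MAX unrolling of the Arthur--Merlin transcript tree, and the same $r+1$ depth count justified by backward induction. Two cosmetic points: the coefficient norm is $\le k/2$ rather than exactly $k/2$ if a path re-queries a location. Also, the MAX over Merlin's replies should range over strings of the prescribed length, not over all strings of length $\le L$, which would give fan-in $2^{L+1}-1$; this keeps the fan-in at most $2^L$.
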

\begin{proof}
As explained above, we may assume that the verifier makes all oracle queries after the interaction. Unroll the interaction into a tree with one level per message: each verifier node is an AVG gate over its possible random messages, and each prover node is a MAX gate over its possible replies. These gates have fan-in at most $2^L$. It remains to represent the verifier's decision at each completed transcript.

Fix such a transcript $\tau$. The verifier's remaining computation is a deterministic oracle decision tree of depth at most $L$: each internal node queries an oracle location, and its two outgoing edges correspond to the possible answers $-1$ and $+1$. A root-to-leaf path $\gamma$ therefore records answers $a_1, \ldots, a_k$ at queried locations $i_1, \ldots, i_k$, where $k \leq L$, and ends with an acceptance bit $c_\gamma \in \{0, 1\}$. Associate with this path the affine function
\[
    \ell_\gamma(x) := c_\gamma - \sum_{j = 1}^k \frac{1-a_jx_{i_j}}{2}.
\]
Its coefficient $\ell_1$-norm is at most $k/2 \leq L/2$. For $x \in \{\pm 1\}^m$, each summand $\frac{1-a_jx_{i_j}}{2}$ is the indicator that the path's answer $a_j$ disagrees with the actual oracle's value $x_{i_j}$. Thus every inconsistent path has score at most zero, whereas the unique consistent path has score equal to the verifier's actual acceptance bit. Consequently, the verifier's decision given $\tau$ equals $\max_\gamma\ell_\gamma(x)$.

At each completed transcript, attach a MAX gate over these affine functions. There are at most $2^L$ paths in the decision tree, so this gate also has fan-in at most $2^L$. The resulting tree has depth at most $r+1$. For each Boolean oracle $x$, averaging over verifier messages and maximizing over prover replies computes $a_V(x)$ by backward induction. Hence this tree defines the desired extension.
\end{proof}

%% file: forrelation.tex
\begin{definition}[Forrelation]
Let $N=2^n$, and let $H = H_N$ be the Hadamard matrix with entries 
    \[ H_{xy} = \frac{1}{\sqrt{N}} (-1)^{x \cdot y}, \qquad x,y\in\{0,1\}^n. \]
For sign vectors $f, g \in \{\pm 1\}^N$, define the \emph{forrelation} between $f$ and $g$ as
\begin{align*}
    \forr(f, g) = \frac{1}{N} f^{\sfT} H_N g = \frac{1}{2^{3n/2}} \sum_{i, j \in \{0, 1\}^n} (-1)^{i \cdot j} f_i g_j.
\end{align*}
\end{definition}
Note that for any $f, g \in \{\pm 1\}^{N}$, $\forr(f, g) \in [-1, 1]$.

Let $\calU_n$ be the uniform distribution on $\{\pm 1\}^{2N}$. We define the distribution $\calD_n$ over $\{\pm 1\}^{2N}$ (first proposed by Aaronson \cite{Aar10}) as follows:
\begin{enumerate}
    \item Sample $x_1, \ldots, x_N \sim \calN(0, 1)$ independently.
    \item Let $y = H_N \cdot x $.
    \item Output $(\sgn(x), \sgn(y))$.
\end{enumerate}
In the rest of our proof, we will need the following fact, which intuitively states that the distributions $\calD_n$ and $\calU_n$ are easy to distinguish for quantum algorithms. Later we will show the same thing does \emph{not} hold for interactive proof systems.
\begin{lemma}\label{lemma:bqp_alg}
There exists an $O(1)$-query $\poly(n)$-time quantum algorithm $\calA$ such that 
    \[ \Pr_{(x, y) \sim \calD_n}\left[\Pr_{\calA}[\calA^{x, y} = 1] \geq \frac{2}{3} \right] \geq 1-O\left(\frac{1}{N}\right), \]
and
    \[ \Pr_{(x, y) \sim \calU_n}\left[\Pr_{\calA}[\calA^{x, y} = 1] \leq \frac{1}{3}\right] \geq 1-O\left(\frac{1}{N}\right). \]
\end{lemma}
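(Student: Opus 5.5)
The plan is to use the standard Aaronson--Ambainis one-query algorithm and reduce both claims to concentration of $\forr(f,g)$ under $\calD_n$ and under $\calU_n$.

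\textbf{The base algorithm.} Prepare $\frac{1}{\sqrt{2N}}\sum_i \ket{0}\ket{i} + \ket{1}\ket{i}$, i.e.\ a control qubit in $\ket{+}$ together with a uniform superposition over $n$-bit indices. Query $f$ controlled on $0$ and $g$ controlled on $1$; this is a single query to the combined oracle $(f,g)\in\{\pm1\}^{2N}$. Then apply $H^{\otimes n}$ to the index register only on the $0$ branch. Finally apply a Hadamard to the control qubit and measure it.

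A direct calculation (as in \cite{AA15}) gives acceptance probability $\frac{1+\forr(f,g)}{2}$. Outputting according to the measurement, this base algorithm accepts with probability $p(f,g)=\tfrac12+\tfrac12\forr(f,g)$.

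\textbf{Concentration under $\calU_n$.} Here $\forr(f,g)=N^{-1}f^{\sfT}Hg$. We have $\E[\forr]=0$, and
\[
\E[\forr^2]=N^{-2}\sum_{i,j}H_{ij}^2=N^{-2}\cdot N=1/N .
\]
Chebyshev's inequality then gives $|\forr|\le \eta$ except with probability $1/(N\eta^2)$, which is $O(1/N)$ for any constant $\eta$.

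\textbf{Concentration under $\calD_n$.} Write $f=\sgn(x)$ and $g=\sgn(Hx)$. Fact~\ref{fact:sign_identity} gives
\[
\E[f_ig_j]=\tfrac{2}{\pi}\arcsin(H_{ij}).
\]
Expanding $\arcsin(u)=u+O(|u|^3)$ with $|H_{ij}|=N^{-1/2}$ gives
\[
\E[\forr]=N^{-1}\sum_{i,j}H_{ij}\cdot\tfrac{2}{\pi}\arcsin(H_{ij})=\tfrac{2}{\pi}+O(1/N).
\]
For the variance, I would compute $\E[\forr^2]$ as a sum of fourth moments $\E[f_if_{i'}g_jg_{j'}]$. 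The main obstacle is showing that this sum equals $(2/\pi)^2+O(1/N)$.

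Two routes seem available. The first is to bound the variance by Gaussian Poincaré or hypercontractivity, viewing $\forr$ as a function of $x$ built from signs; but the signs are non-smooth, so this may not apply cleanly. The route I would try first is combinatorial. When the four coordinates $x_i,x_{i'},(Hx)_j,(Hx)_{j'}$ are nearly independent, since their cross-covariances are at most $N^{-1/2}$, a Hermite or Taylor expansion of the orthant-type moment shows that the fourth moment equals the sum of pairwise $\arcsin$ products plus $O(\max|\text{cov}|^3)$ corrections. Summing these with the $H_{ij}H_{i'j'}$ weights, the pairings $(i,j)(i',j')$ reproduce $\E[\forr]^2$. The other pairings, such as $f_if_{i'}$ paired with $g_jg_{j'}$, contribute $O(1/N)$ because $\E[f_if_{i'}]=\delta_{ii'}$ and $\E[g_jg_{j'}]=\delta_{jj'}$. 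The cubic error terms need care to sum to $O(1/N)$; I would use the orthogonality of $H$ to cancel the leading ones. Chebyshev's inequality then gives $\forr\ge 0.6$ with probability $1-O(1/N)$. This matches the Aaronson--Ambainis analysis, which I would follow.

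\textbf{Amplification.} Run the base algorithm $k=O(1)$ times; each run is one query, so $O(1)$ queries in total. Accept iff the fraction of accepting runs exceeds $\tfrac12+\tfrac{1}{2\pi}$. On the good $\calD_n$ instances each run accepts with probability at least $0.8$; on the good $\calU_n$ instances each accepts with probability at most $0.55$. A Chernoff bound with a suitable constant $k$ pushes the errors below $1/3$, which yields both inequalities of Lemma~\ref{lemma:bqp_alg}.
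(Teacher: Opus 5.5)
Your algorithm, the $\calU_n$ bound ($\E[\forr^2]=1/N$ plus Chebyshev), and the amplification step match the paper. The gap is in the $\calD_n$ case. The bound $\Var(\forr)=O(1/N)$ carries that half of the lemma, and you only plan it. The plan also passes over the points that decide the rate.

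Since $H_{ij}H_{i'j'}=\pm N^{-1}$, we have $\E[\forr^2]=N^{-3}\sum_{i,i',j,j'}\pm\E[f_if_{i'}g_jg_{j'}]$, a sum over $N^4$ quadruples. This causes two problems:
\begin{itemize}
\item A generic $O(N^{-3/2})$ cubic correction per quadruple sums only to $O(N^{-1/2})$. That gives failure probability $O(N^{-1/2})$, not $O(1/N)$.
\item The crossed pairing $(x_i,y_{j'})(x_{i'},y_j)$, which you do not treat, contributes $\mu^4N^{-2}\sum H_{ij}H_{i'j'}H_{ij'}H_{i'j}$. Bounded termwise, this is of order $1$.
\end{itemize}
To finish your route you would need three facts:
\begin{itemize}
\item[(i)] For $i\neq i'$ and $j\neq j'$, the cubic terms of the Hermite (diagram) expansion vanish identically. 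The only cubic diagram has one vertex joined to all three others, while $\operatorname{Cov}(x_i,x_{i'})=\operatorname{Cov}(y_j,y_{j'})=0$.
\item[(ii)] The crossed sum over $i\neq i'$, $j\neq j'$ equals $1-N$, by $\sum_i H_{ij}H_{ij'}=\delta_{jj'}$ and inclusion--exclusion.
\item[(iii)] The remaining Hermite tail of the four-sign moment is uniformly $O(N^{-2})$. This needs an absolute-convergence argument, since $\sgn$ is not smooth.
\end{itemize}
All three are true, so your route can be completed. As written, though, the $\calD_n$ bound is not proved.

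The paper avoids fourth moments entirely with a deterministic inequality. Since $\|Hf\|_2^2=\|g\|_2^2=N$, one has $\forr(f,g)=1-\|Hf-g\|_2^2/(2N)$. Because $y=Hx$, one can write $Hf-g=H(f-\mu x)-(g-\mu y)$ with $\mu=\sqrt{2/\pi}$. Applying $\|u-v\|_2^2\le 2\|u\|_2^2+2\|v\|_2^2$ and the orthogonality of $H$ gives $\forr\ge 1-A-B$, where $A=\frac1N\sum_i(1-\mu|x_i|)^2$ and $B=\frac1N\sum_j(1-\mu|y_j|)^2$. Since $y$ is again $\calN(0,I_N)$, each of $A$ and $B$ is an average of $N$ i.i.d.\ terms with mean $1-2/\pi$ and bounded second moment. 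Hence $\E[A+B]=2-4/\pi<3/4$ and $\Var(A+B)=O(1/N)$, and Chebyshev yields $\forr\ge 1/4$ with probability $1-O(1/N)$. The acceptance threshold $9/16$ then separates $5/8$ from $17/32$. This gives a weaker constant than your $\approx 2/\pi$, but uses only one-dimensional i.i.d.\ second moments.
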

The proof of Lemma \ref{lemma:bqp_alg} is deferred to Appendix \ref{sec:bqp_alg} as it is fairly standard.

%% file: separation.tex
\subsection{Structural Result for AM Circuits}
The AM circuit representation allows us to study a verifier's acceptance probability through a convex function, but the presence of MAX gates makes this function non-differentiable, thereby complicating our analysis. To recover differentiability, we replace the MAX gates by smooth approximations, i.e. softmax gates. The following lemma shows that this replacement preserves convexity, and gives explicit bounds on the approximation error as well as gradient and Hessian norms in terms of the circuit parameters.

\begin{lemma}[Softmax smoothing]\label{lemma:structural}
Let $C$ be a $(d, L)$-AM circuit which takes inputs $x \in \{\pm 1\}^m$. Then for every $\lambda > 0$, there is a smooth convex function $F: \bbR^m \to \bbR$ such that $0 \leq F-C \leq \frac{dL \ln 2}{\lambda}$, $\sup_x \|\nabla F(x)\|_1 \leq L$, and $\sup_x \|\nabla^2 F(x)\|_1 \leq 2\lambda dL^2$.
\end{lemma}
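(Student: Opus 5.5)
The plan is to build $F$ by replacing every MAX gate of $C$ with the softmax gate $S_\lambda(u_1,\ldots,u_k)=\frac1\lambda\ln\sum_{j}e^{\lambda u_j}$ and keeping every AVG gate and every affine leaf unchanged. All four claims will then be proved together by induction on the height of a node. For a node $v$ of height $h$ (leaves have height $0$), write $C_v$ for the function it computes in $C$ and $F_v$ for the smoothed version. The inductive hypothesis is that $F_v$ is smooth and convex on $\bbR^m$ and that
\[
0\le F_v-C_v\le \frac{hL\ln 2}{\lambda},\qquad \|\nabla F_v\|_1\le L,\qquad \|\nabla^2F_v\|_1\le 2\lambda hL^2
\]
hold at every point. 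Leaves satisfy all of this trivially: they are affine with $\|u\|_1\le L$ and have zero Hessian. Taking $v$ to be the root with $h\le d$ gives the lemma.

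\textbf{Convexity and approximation error.} An AVG gate is a nonnegative combination, so it preserves convexity. The function $S_\lambda$ is convex (log-sum-exp) and nondecreasing in each argument. A convex, coordinatewise nondecreasing function of convex functions is convex, so convexity propagates upward. For the error, I use two facts about $S_\lambda$ with $k\le 2^L$ arguments:
\[
\max_j u_j\le S_\lambda(u)\le \max_j u_j+\frac{\ln k}{\lambda}\le \max_j u_j+\frac{L\ln2}{\lambda},
\]
and $S_\lambda(u+c)\le S_\lambda(u)+\max_j c_j$ for $c\ge 0$, with AVG satisfying the analogous property. If each child satisfies $0\le F_j-C_j\le \eps$, then monotonicity gives $\mathrm{MAX}(C_j)\le S_\lambda(F_j)\le S_\lambda(C_j)+\eps\le \mathrm{MAX}(C_j)+\frac{L\ln 2}{\lambda}+\eps$. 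AVG nodes do not increase the error at all. This yields the additive accumulation $hL\ln2/\lambda$.

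\textbf{Gradient and Hessian.} Let $g_j=\nabla F_j$ be the children's gradients. At either gate type, $\nabla F_v=\sum_j p_jg_j$ with $p_j\ge 0$ and $\sum_j p_j=1$. For AVG the weights are $p_j=1/t$; for softmax they are $p_j=e^{\lambda F_j}/\sum_i e^{\lambda F_i}$. By the triangle inequality, $\|\nabla F_v\|_1\le\max_j\|g_j\|_1\le L$. For the Hessian, an AVG node gives $\nabla^2F_v=\sum_j p_j\nabla^2F_j$, so the bound is inherited. For a softmax node, a direct computation gives
\[
\nabla^2F_v=\sum_j p_j\nabla^2F_j+\lambda\Bigl(\sum_j p_j g_jg_j^{\sfT}-\bar g\bar g^{\sfT}\Bigr),\qquad \bar g=\sum_j p_jg_j .
\]
I bound the correction term crudely using $\|ab^{\sfT}\|_1=\|a\|_1\|b\|_1$. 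This gives $\|\sum_j p_jg_jg_j^{\sfT}\|_1\le L^2$ and $\|\bar g\bar g^{\sfT}\|_1\le L^2$, so the correction has norm at most $2\lambda L^2$. The norm of the first term is at most $\max_j\|\nabla^2F_j\|_1$. So each level adds at most $2\lambda L^2$, giving $2\lambda hL^2$. Smoothness is clear, since every gate is a composition of smooth maps.

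\textbf{Main obstacle.} The only delicate point is the Hessian step. One must keep the per-level increment at $2\lambda L^2$, which is independent of the fan-in $2^L$, rather than letting it scale with the number of children. The convex-combination structure of the softmax weights handles this. Centering the covariance term would give $\sum_j p_j\|g_j-\bar g\|_1^2\le 4L^2$, which is worse; so I will use the uncentered split above, which achieves exactly the constant $2$ in the statement.
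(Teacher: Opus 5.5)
Your proof is correct and follows essentially the same route as the paper: replace each MAX gate by $S_\lambda$, induct up the tree using that both gate types produce convex combinations of the children's gradients, and bound the softmax Hessian correction $\lambda(\sum_j p_j g_j g_j^{\sfT}-\bar g\bar g^{\sfT})$ by $2\lambda L^2$ per level. The differences are cosmetic: the paper writes this correction in the pairwise form $\frac{\lambda}{2}\sum_{i,j}p_ip_j(g_i-g_j)(g_i-g_j)^{\sfT}$, which gives the same constant $2\lambda L^2$ and yields convexity directly from positive semidefiniteness, whereas you obtain convexity from the rule that a convex, coordinatewise nondecreasing function of convex functions is convex.
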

\begin{proof}
Begin by replacing each MAX gate in $C$ by the softmax gate
    \[ S_\lambda(u_1, \ldots, u_k) = \lambda^{-1}\ln(\sum_{i = 1}^k e^{\lambda u_i}). \]
Observe that
    \[ \max(u_1, \ldots, u_k) \leq S_\lambda(u_1, \ldots, u_k) \leq \max(u_1, \ldots, u_k)+\frac{\ln k}{\lambda} \leq \max(u_1, \ldots, u_k)+\frac{L \ln 2}{\lambda}. \]

For a gate $v$ in the circuit, denote by $C_v(x)$ the value of the gate on input $x$ in the original circuit $C$ and by $F_v(x)$ the value of the corresponding gate in the new circuit with softmax gates.

We show by induction on the MAX-gate height $h$ that for every gate $v$ at MAX-gate height $h$, $0 \leq F_v(x)-C_v(x) \leq hL\ln(2)/\lambda$. The claim is obvious at height $h = 0$, and also clearly follows for any AVG gate, so it remains to prove it for MAX gates at height $h \geq 1$.

For $h \geq 1$, let $v$ be a MAX gate at height $h$ and let $u_1, \ldots, u_k$ be the gates, each of height at most $h-1$, entering the gate $v$. On the one hand, 
    \[ C_v(x) = \max(C_{u_1}(x), \ldots, C_{u_k}(x))  \leq \max(F_{u_1}(x), \ldots, F_{u_k}(x)) \leq S_{\lambda}(F_{u_1}(x), \ldots, F_{u_k}(x)) = F_v(x). \]
On the other hand,
\begin{align*}
    F_v(x) &= S_{\lambda}(F_{u_1}(x), \ldots, F_{u_k}(x)) \\
    &\leq \max(F_{u_1}(x), \ldots, F_{u_k}(x)) + \frac{L\ln2}{\lambda} \\
    &\leq \max(C_{u_1}(x) + \tfrac{(h-1)L\ln2}{\lambda}, \ldots, C_{u_k}(x) + \tfrac{(h-1)L\ln2}{\lambda}) + \frac{L\ln2}{\lambda} \\
    &= \max(C_{u_1}(x), \dots, C_{u_k}(x)) + \frac{hL\ln2}{\lambda} \\
    &= C_v(x) + \frac{hL\ln2}{\lambda},
\end{align*}
which concludes the inductive proof. Smoothness follows from the fact that all gates (softmax, AVG, affine leaves) are smooth, and smoothness is preserved by composition.

For child functions $u_1, \ldots, u_k$, writing $p_i = \frac{e^{\lambda u_i}}{\sum_j e^{\lambda u_j}}$ we have that
\begin{align*}
    \nabla S_\lambda &= \sum_i p_i \nabla u_i.
\end{align*}
Hence, if every child $u_i$ satisfies $\|\nabla u_i\|_1 \leq L$, then
    \[ \|\nabla S_\lambda\|_1 \leq \sum_i p_i \|\nabla u_i\|_1 \leq L. \]
Likewise, at an AVG gate,
    \[ \operatorname{AVG}(u_1, \ldots, u_k) = \frac{1}{k} \sum_i u_i, \]
and thus
    \[ \nabla \operatorname{AVG} = \frac{1}{k} \sum_i \nabla u_i, \]
which means the gradient norm bound is also preserved by AVG gates. Starting from the affine leaves, which have $\ell_1$-gradient norm at most $L$, proves that $\sup_x \|\nabla F(x)\|_1 \leq L$.

Differentiating $\nabla S_{\lambda}$ gives
    \[ \nabla^2 S_\lambda = \sum_i p_i \nabla^2 u_i + \sum_i \nabla u_i (\nabla p_i)^\sfT. \]
To evaluate this, we first determine the gradient of $p_i$: 
\begin{align*}
    \nabla p_i &= \frac{(\sum_j e^{\lambda u_j}) \cdot \lambda  e^{\lambda u_i} \cdot \nabla u_i - e^{\lambda u_i} \cdot \sum_j (\lambda e^{\lambda u_j} \cdot \nabla u_j)}{(\sum_r e^{\lambda u_r})^2} \\
    &= \frac{\lambda e^{\lambda u_i} \cdot \nabla u_i}{\sum_r e^{\lambda u_r}} - \frac{\lambda e^{\lambda u_i}}{\sum_r e^{\lambda u_r}} \cdot \frac{\sum_j (e^{\lambda u_j} \cdot \nabla u_j)}{\sum_r e^{\lambda u_r}} \\
    &= \lambda p_i \nabla u_i - \lambda p_i \cdot \left(\sum_j p_j \nabla u_j\right).
\end{align*}
Thus,
\begin{align*}
    \nabla^2 S_\lambda &= \sum_i p_i \nabla^2 u_i + \sum_i  \nabla u_i (\nabla p_i)^\sfT \\
    &= \sum_i p_i \nabla^2 u_i + \sum_i \lambda p_i \nabla u_i (\nabla u_i - \sum_j p_j \nabla u_j)^\sfT \\
    &= \sum_i p_i \nabla^2 u_i + \lambda \left[\sum_i p_i \nabla u_i (\nabla u_i)^\sfT - \sum_{i, j} p_i p_j \nabla u_i (\nabla u_j)^\sfT\right] \\
    &= \sum_i p_i \nabla^2 u_i + \frac{\lambda}{2} \sum_{i, j} p_i p_j (\nabla u_i - \nabla u_j)(\nabla u_i - \nabla u_j)^\sfT.
\end{align*}
The last term is positive semidefinite, which proves convexity by induction on the depth of $C$, as the leaves of $C$ are convex.

To bound the Hessian norm, we note that
\begin{align*}
    \|\nabla^2 S_\lambda\|_1 &\leq \sum_i p_i \|\nabla^2 u_i\|_1 + \frac{\lambda}{2} \sum_{i, j} p_i p_j \|(\nabla u_i - \nabla u_j)(\nabla u_i - \nabla u_j)^\sfT\|_1 \\
    &\leq \sum_i p_i \|\nabla^2 u_i\|_1 + \frac{\lambda}{2} \sum_{i, j} p_i p_j \|\nabla u_i - \nabla u_j\|_1^2 \\
    &\leq \sum_i p_i \|\nabla^2 u_i\|_1 + \frac{\lambda}{2} \sum_{i, j} p_i p_j (2L)^2 = \sum_i p_i \|\nabla^2 u_i\|_1 + 2\lambda L^2,
\end{align*}
so softmax gates increase the Hessian norm by at most $2\lambda L^2$. Meanwhile, for an AVG gate,
    \[ \|\nabla^2 \operatorname{AVG}\|_1 \leq \frac{1}{k} \sum_i \|\nabla^2 u_i\|_1, \]
so average gates do not increase the Hessian norm. Since the leaves have zero Hessian norm, we have by induction on the depth of $C$ that $\sup_x \|\nabla^2 F(x)\|_1 \leq 2\lambda dL^2$.
\end{proof}

\begin{remark}

Convexity is essential to our use of the derivative bounds. Indeed, Appendix \ref{sec:smooth-extensions} shows that every function \(f:\{\pm1\}^m\to[-1,1]\) admits a smooth extension \(F:\mathbb R^m\to[-1,1]\) agreeing exactly with \(f\) on the Boolean cube and satisfying
\[
\sup_x\|\nabla F(x)\|_1\le2,
\qquad
\sup_x\|\nabla^2F(x)\|_1\le C\log(2m),
\]
where the Hessian norm is the entrywise \(\ell_1\)-norm and \(C\) is an absolute constant. Thus even a dimension-independent gradient bound and a logarithmic Hessian bound are compatible with arbitrary Boolean behavior. The crucial property of our construction is achieving the required derivative control simultaneously with convexity.
\end{remark}

\subsection{Notation}\label{section:notation}
In the following section, we will use $\delta$-smoothings of the function $f_{a}(x) := \sgn(a+x)$ for any fixed $a \in \bbR^m$. Specifically, for any $\delta \in (0, 1]$ and real vector $a \in \bbR^m$, we define the functions
\begin{align*}
    h_{a, \delta, i}(t) &:= 2\Phi\left(\frac{a_i + \sqrt{1-\delta^2} t}{\delta}\right)-1, &\text{for $i\in [m]$}\\
    h_{a, \delta}(x) &:= (h_{a, \delta, 1}(x_1), \ldots, h_{a, \delta, m}(x_m)).
\end{align*}
Observe that $h_{a, \delta, i}(t) \in [-1, 1]$ and 
\begin{align*}
    |h_{a, \delta, i}'(t)| = \left|\tfrac{2\sqrt{1-\delta^2}}{\delta}\cdot 
       \varphi\left(\tfrac{a_i + \sqrt{1-\delta^2}t}{\delta}\right)\right| \leq \sqrt{\frac{2}{\pi}} \cdot \frac{1}{\delta} \leq \frac{1}{\delta}.
\end{align*}
Further, observe that there is another way to derive $h_{a,\delta}(x)$ as \[h_{a,\delta}(x) = \E[\sgn(a + \sqrt{1-\delta^2}x + \delta G')]\] for any MVG $G'$ of dimension $m$ with zero means and variances $1$ (and arbitrary covariances as we only care about the expectation coordinate-wise).

\subsection{Forrelated Gaussians Fool AM Circuits}

The following lemma provides the key technical insight required in the proof of our main result. The lemma bounds the change in the expected value of a
smooth convex function $F$ when Gaussian sign rounding is replaced
by a smooth approximation: it shows that
$\bbE[F(\sgn(a+G))]$ and
$\bbE[F(h_{a,\delta}(G))]$ are close under the stated assumptions. Here, $G$ is a nondegenerate Gaussian vector with standard normal marginals, and $h_{a,\delta}$ is the coordinate-wise smooth approximation to sign rounding defined in Section~\ref{section:notation}.

The error is controlled by the $\ell_1$-gradient bound of $F$ and the smoothing parameter $\delta$, with only a logarithmic dependence on the input dimension $m$. Crucially, this lemma uses the convexity of $F$ to get a better bound than would be implied by a naive application of H\"older's inequality. In our application, $F$ is the smooth approximation to an AM circuit supplied by Lemma~\ref{lemma:structural}. 

\begin{lemma}\label{lemma:smoothening}
Let $F: \bbR^m \to \bbR$ be a smooth convex function where
    \[ \sup_{x \in [-1, 1]^m} \norm{\nabla F(x)}_1 \leq L. \]
Let $D$ be an $m$-dimensional MVG vector with zero means, variances $1$ and arbitrary covariances.
Let $Z\sim \calN(0,I_{m})$ be independent of $D$, and let $G = \sqrt{\tau} Z + \sqrt{1-\tau}D$ for some $\tau \in [1/2,1]$.
Let $a\in \bbR^m$ be any vector and $\delta \in (0, 1/2]$.
Then,
\[ 
0 \leq \E[F(\sgn(a+G))]-\E[F(h_{a,\delta}(G))] \leq 
18L\delta \cdot (\|a\|_{\infty}+\sqrt{\ln 2m}).\]
\end{lemma}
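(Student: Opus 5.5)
The plan is to prove the two inequalities separately, following the outline of Step 3 in the overview. We use the representation $h_{a,\delta}(x)=\E[\sgn(a+\sqrt{1-\delta^2}x+\delta G')]$ from Section~\ref{section:notation} and condition on $D$ throughout.

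\emph{Lower bound.} Let $G'$ be an independent copy of $G$. Then $\sqrt{1-\delta^2}\,G+\delta G'$ is a centered Gaussian with the same covariance as $G$, so it has the same distribution. Conditioned on $G$, the vector $\sgn(a+\sqrt{1-\delta^2}G+\delta G')$ takes values in $\{\pm1\}^m$ and has conditional mean exactly $h_{a,\delta}(G)$. This holds coordinatewise, since each coordinate of $G'$ is standard normal. Jensen's inequality for the convex $F$ then gives
\[
\E[F(\sgn(a+G))]=\E\bigl[\E[F(\sgn(a+\sqrt{1-\delta^2}G+\delta G'))\mid G]\bigr]\ge \E[F(h_{a,\delta}(G))].
\]

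\emph{Upper bound.} Write $S=\sgn(a+G)$. Both $S$ and $h_{a,\delta}(G)$ lie in $[-1,1]^m$, where the gradient bound holds. Convexity gives the pointwise inequality
\[
F(S)-F(h_{a,\delta}(G))\le \sum_i \partial_iF(S)\,\bigl(S_i-h_{a,\delta,i}(G_i)\bigr).
\]
Now condition on $D=d$ and $S=s$. Given $D=d$, the coordinates $G_i=\sqrt\tau Z_i+\sqrt{1-\tau}d_i$ are independent Gaussians with mean $\mu_i=\sqrt{1-\tau}\,d_i$ and variance $\tau\ge1/2$. The event $S=s$ is a product event, so the $G_i$ stay independent, and each $G_i$ is simply conditioned on $s_i(a_i+G_i)>0$. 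The coefficients $\partial_iF(s)$ are fixed and have total absolute value at most $L$. It therefore suffices to prove, for each $i$,
\[
\E\bigl[|s_i-h_{a,\delta,i}(G_i)|\;\big|\;D=d,S_i=s_i\bigr]\le 18\delta\,(|a_i|+|d_i|+1).
\]
Averaging $\max_i$ of this over $D$ and applying Fact~\ref{fact:exp_max} then yields $18L\delta(\|a\|_\infty+\sqrt{\ln 2m})$, after absorbing the $+1$ into $\sqrt{\ln 2m}\ge\sqrt{\ln 2}$ and constants.

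\emph{Main obstacle: the one-dimensional estimate.} Set $u=a_i+G_i$. The error $|s_i-h(G_i)|=2\Phi(-|\cdot|)$-type quantity is at most $2\Phi\bigl(-(|u|-(1-\sqrt{1-\delta^2})|G_i|)/\delta\bigr)$. This is small unless $|u|\lesssim\delta(1+|G_i|)$, where the factor $1-\sqrt{1-\delta^2}\le\delta^2$ accounts for the mismatch between $u$ and $a_i+\sqrt{1-\delta^2}G_i$. The unconditional density of $u$ is at most $1/\sqrt{\pi}$, because the variance is at least $1/2$. The difficulty is the division by $\Pr[S_i=s_i\mid D=d]$, which can be tiny when $|a_i+\mu_i|$ is large. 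I would handle this by comparing $\E[\text{err}\cdot 1_{S_i=s_i}]$ against $\Pr[S_i=s_i]$ directly. If the sign is the likely one, the probability is at least $1/2$, and the numerator is $O(\delta)$ by the density bound integrated against the $\Phi$-tail, which has $O(\delta)$ mass. If the sign is the unlikely one, the conditional law of $|u|$ is that of a Gaussian tail beyond a point at distance $c=|a_i+\mu_i|$ from its mean. Mills-ratio bounds then say that the conditional density of $|u|$ near $0$ is at most $O(1+c)$, and $c\le|a_i|+|d_i|$. In both cases the integrated error is $O(\delta(1+|a_i|+|d_i|))$. Finally, the term $\delta^2|G_i|$ must also be controlled; it contributes $O(\delta(1+|a_i|+|d_i|))$ as well. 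Tracking constants to reach $18$ is routine but tedious.
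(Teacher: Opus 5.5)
Your proposal is correct and follows the paper's proof essentially step for step. The lower bound uses the same Jensen argument with an independent copy $G'$. The upper bound uses the same first-order convexity inequality at $S$, conditions on $(S,D)$ to exploit coordinatewise independence, and finishes with the same two-case density estimate: in the likely-sign case the conditioning probability is at least $1/2$, and in the unlikely-sign case the inverse Mills ratio is at most $1+b$.

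There is one bookkeeping slip. A per-coordinate bound of $18\delta(1+|a_i|+|d_i|)$ does not suffice, because after Fact~\ref{fact:exp_max} it only yields $18L\delta(1+\|a\|_\infty+4\sqrt{\ln 2m})$. To reach the stated constant you need a per-coordinate constant of about $3$, as in Claim~\ref{claim:bounded_deviation}. The paper gets this cleanly by writing $a_i+\sqrt{1-\delta^2}G_i=\sqrt{1-\delta^2}\,u+(1-\sqrt{1-\delta^2})\,a_i$. That makes the mismatch with the symmetric smoothing $q_\delta(u)$ deterministically at most $\delta|a_i|$, rather than the random term $\delta^2|G_i|$ whose conditional expectation you would then have to bound.
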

\begin{proof}
We first prove that the lower bound follows from the convexity of $F$.
Indeed, let $G'$ be an MVG with the same law as $G$, drawn independently of $G$. 
Then, by the definition of $h_{a, \delta}$ and Jensen's inequality
we have 
\[
\E[F(h_{a,\delta}(G))]
= \E[F(\E[\sgn(a+\sqrt{1-\delta^2} G + \delta G') \mid G])]
\le \E[F(\sgn(a+\sqrt{1-\delta^2} G + \delta G'))],
\]
but the distribution of $a+\sqrt{1-\delta^2} G + \delta G'$ is exactly the same as that of $a+G$, thus the RHS equals $\E[F(\sgn(a+G))]$, completing the lower bound.

It remains to prove the upper bound.  
Define the random variable $S := \sgn(a+G)$. 
Since $F$ is smooth and convex, we have that everywhere
\begin{equation}F(S)-F(h_{a,\delta}(G)) \leq \nabla F(S) \cdot \bigl(S-h_{a,\delta}(G)\bigr).\end{equation}

The key point is that the RHS is the inner product of (i) a vector $\nabla F(S)$ with $\ell_1$-norm at most $L$ \emph{that only depends on $S$}, with (ii) another vector $(S-h_{a,\delta}(G))$ whose entries are typically small (of the order of $\delta$) where each entry still has a lot of randomness even conditioned on $S$.

So imagine we condition on $S$ and $D$. For $i \in [m]$, the $i$th coordinate of $G$ conditioned on $(S, D)$ is distributed as a truncated 1-dimensional Gaussian. We claim that after this conditioning, $h_{a,\delta, i}(G_i)$ is (on average) very close to $S_i$.
\begin{claim}\label{claim:bounded_deviation}
    $|S_i - \E[h_{a,\delta, i}(G_i) | S_i,D_i]| \leq 3\delta(1+|a_i|+|D_i|)$.
\end{claim}
Before we prove this claim, let's see why it finishes the proof. Indeed, assuming Claim \ref{claim:bounded_deviation}, then we have that
\begin{align*}
\E[F(S)-F(h_{a, \delta}(G))] &\leq \E[\nabla F(S) \cdot \bigl(S-h_{a,\delta}(G)\bigr)] \\
&= \E_{S,D}[\E[\nabla F(S) \cdot \bigl(S-h_{a,\delta}(G)\bigr)|S,D]]\\
&\leq \E_{S,D}[\nabla F(S) \cdot \E[\bigl(S-h_{a,\delta}(G)\bigr)|S,D]]\\
&\leq \E_{S,D}[\|\nabla F(S)\|_1 \cdot \|\E[\bigl(S-h_{a,\delta}(G)\bigr)|S,D]\|_{\infty}]\\
&\leq \E_{S,D}[\|\nabla F(S)\|_1 \cdot \max_i\{3\delta(1+|a_i|+|D_i|)\}]\\
&\leq L\cdot 3\delta \cdot (1+\|a\|_{\infty}+\E[\max_{i}|D_i|])\\
&\leq 3L\delta \cdot \bigl(1+\|a\|_{\infty}+4\sqrt{\ln 2m}\bigr) \tag{Fact~\ref{fact:exp_max}}\\
&\leq 18L\delta \cdot \bigl(\|a\|_{\infty}+\sqrt{\ln 2m}\bigr).\qedhere
\end{align*}
\end{proof}

\begin{proof}[Proof of Claim~\ref{claim:bounded_deviation}]
Let $Z_i\sim\mathcal{N}(0,1)$, let $\tau\in[1/2,1]$,
and let $d_i,a_i\in\mathbb{R}$ and $s_i\in \{-1,1\}$ be fixed. Define the random variable $Z'_i := \sqrt{\tau}\,Z_i+\sqrt{1-\tau}\,d_i+a_i$. Note that $Z'_i\sim\mathcal{N}(\mu,\sigma^2)$, where $\mu=\sqrt{1-\tau}\,d_i+a_i$ and $\sigma = \sqrt{\tau}$.

We need to show that
    \[ \bigl|s_i-\E[h_{a,\delta,i}(Z'_i-a_i)\,|\,\sgn(Z'_i)=s_i]\bigr| \leq 3\delta(1+|d_i|+|a_i|).\]
Without loss of generality, assume $s_i = 1$ (as negating $a,d,s$ maintains the same absolute value). Our new goal is to show that
\begin{equation}\label{eq:new_goal}
\E[1-h_{a,\delta,i}(Z'_i-a_i) | Z'_i>0]\le 3\delta (1+|d_i|+|a_i|).
\end{equation}
Let $p_+$ denote the density of $Z'_i$ conditioned on $Z'_i>0$.

To prove \Cref{eq:new_goal}, we first move to the following symmetric function
\[
    q_\delta(t)
    :=2\Phi\!\left(\frac{\sqrt{1-\delta^2}\,t}{\delta}\right)-1.
\]
Since $2\Phi-1$ is $\sqrt{2/\pi}$-Lipschitz, for every $t\in\bbR$,
\begin{align*}
    \left|h_{a,\delta,i}(t-a_i)-q_\delta(t)\right|
    &\le \sqrt{\frac{2}{\pi}}\,
         \frac{(1-\sqrt{1-\delta^2})|a_i|}{\delta}= \sqrt{\frac{2}{\pi}}\,
       \frac{\delta|a_i|}{1+\sqrt{1-\delta^2}}
    \le \delta|a_i|.
\end{align*}
Moreover, $1-q_\delta(t)\ge0$ for $t>0$, and
\[
    \int_0^\infty (1-q_\delta(t))\,dt
    =\frac{2\delta}{\sqrt{1-\delta^2}}
      \int_0^\infty(1-\Phi(u))\,du
    =\sqrt{\frac{2}{\pi}}\frac{\delta}{\sqrt{1-\delta^2}}
    \le\delta.
\]
Consequently, 
\begin{align*}
    \E\!\left[
        1-h_{a,\delta,i}(Z'_i-a_i)
        \,\middle|\,Z'_i>0
    \right]
    &\le \E[1-q_\delta(Z'_i)\mid Z'_i>0]
         +\delta|a_i|\\
    &\le \delta\sup_{t>0}p_+(t)+\delta|a_i|.
\end{align*}
We next prove that
\[ \sup_{t>0}p_+(t) \le \frac{1}{\sigma}+\frac{|\mu|}{\sigma^2}, \]
which will complete the proof as $\frac{1}{\sigma}+\frac{|\mu|}{\sigma^2} \le 2(1+|d_i|+|a_i|)$.

We split into two cases. If $\mu \ge 0$, then $\Pr[Z'_i>0]\ge 1/2$, and hence
\[
  p_+(t)
    \le 
    \frac{\exp\!\left(-(t-\mu)^2/(2\sigma^2)\right)}
         {\sigma\sqrt{2\pi}\,\Pr[Z'_i>0]}
    \le \frac{\sqrt{2/\pi}}{\sigma}
    \le \frac{1}{\sigma} \le \frac{1}{\sigma} + \frac{|\mu|}{\sigma^2}.
\]

Otherwise, $\mu<0$, and the conditional density $p_+$ is decreasing on $(0,\infty)$, with
\[
    \sup_{t>0} p_+(t)
    = \frac{1}{\sigma}
      \frac{e^{-b^2/2}}{\int_b^\infty e^{-t^2/2}\,dt}
\]
for $b=|\mu|/\sigma$. 
We claim that $\frac{e^{-b^2/2}}{\int_b^\infty e^{-t^2/2}\,dt}\le  b+1$.
Indeed, 
\[
\frac{e^{-b^2/2}}{b+1}
= \int_b^\infty -\frac{d}{dt}\!\left(\frac{e^{-t^2/2}}{t+1}\right)\,dt
= \int_b^\infty e^{-t^2/2}
  \underbrace{\frac{t^2+t+1}{(t+1)^2}}_{\le 1}\,dt
\le \int_b^\infty e^{-t^2/2}\,dt.
\]
Overall, in this case we also derive
\[
\sup_{t>0}p_+(t) \le \frac{b+1}{\sigma} = \frac{1}{\sigma} + \frac{|\mu|}{\sigma^2},
\]
which completes the proof.
\end{proof}

We are now ready to prove our main theorem, which shows that multivariate Gaussians with sufficiently weak covariance are indistinguishable from each other to smooth convex functions with low norms. 
\begin{theorem}\label{theorem:comparison}
Let $F: \bbR^{m} \to \bbR$ be a smooth convex function where
    \[ \sup_{x \in [-1, 1]^{m}} \norm{\nabla F(x)}_1 \leq L, \qquad \sup_{x \in [-1, 1]^{m}} \norm{\nabla^2F(x)}_1 \leq K, \]
where $L \leq m$. Let $D \sim \calN(0, \Sigma)$ be an $m$-dimensional Gaussian vector where $\Sigma_{ii} = 1$ and $\beta := \max_{i\neq j}|\Sigma_{ij}|$. Let $Z \sim \calN(0, I_{m})$ be an independent MVG, and let $G = \sqrt{1/2} D + \sqrt{1/2} Z$. Then, for every $a \in \bbR^{m}$ and $0 < \delta \leq 1/2$,
\begin{align*}
    \bigg|\E[F(\sgn(a+G))]-\E[F(\sgn(a+Z))]\bigg| \leq 400 L\delta \sqrt{\ln(2m)} + \frac{\beta K}{4\delta^2} + \frac{4}{m^{48}}\;.
\end{align*}
\end{theorem}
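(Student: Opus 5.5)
The plan is to combine the rounding bound of Lemma~\ref{lemma:smoothening} at the two endpoints $\tau=1$ (which is $Z$) and $\tau=1/2$ (which is $G$) with a Gaussian interpolation via Lemma~\ref{lemma:smart_path} in between. The smoothing lemma costs a factor $\|a\|_\infty$, so we must first reduce to a bounded shift. Fix $R := 10\sqrt{\ln(2m)}$ and let $B := \{i : |a_i| > R\}$.

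\textbf{Step 1: remove the large shifts.} Both $Z_i$ and $G_i$ are standard Gaussians. So for $i \in B$, a union bound gives $\Pr[\exists i\in B: \sgn(a_i+G_i)\neq \sgn(a_i)] \le m\cdot 2e^{-R^2/2} \le 2m\cdot(2m)^{-50}$, and the same bound holds for $Z$. Since $[-1,1]^m$ is convex and $\|\nabla F\|_1\le L$ there, $F$ is $L$-Lipschitz with respect to $\|\cdot\|_\infty$ on the cube. Replacing the coordinates in $B$ by the constants $\sgn(a_B)$ therefore changes each expectation by at most $2L$ times this probability. Using $L\le m$, this is at most $2/m^{48}$ per side, which accounts for the $4/m^{48}$ term. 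Define the restricted function $F'(y) := F(y, \sgn(a_B))$ on the coordinates $[m]\setminus B$. It is still smooth and convex, and on the subcube it has $\|\nabla F'\|_1\le L$ and $\|\nabla^2 F'\|_1\le K$, since it is a restriction of $F$ to a face. The subvectors $D_{\bar B}, Z_{\bar B}$ still have unit variances and off-diagonal covariances at most $\beta$, and the remaining shift satisfies $\|a_{\bar B}\|_\infty\le R$. From now on we work with $F'$ on $m'\le m$ coordinates.

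\textbf{Step 2: smooth the signs at both endpoints.} Apply Lemma~\ref{lemma:smoothening} to $F'$ with $\tau=1$ and $\tau=1/2$. Each application replaces $\E[F'(\sgn(a+\cdot))]$ by $\E[F'(h_{a,\delta}(\cdot))]$ at cost at most $18L\delta(R+\sqrt{\ln 2m}) = 198\,L\delta\sqrt{\ln 2m}$. The two endpoints together cost at most $400L\delta\sqrt{\ln 2m}$.

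\textbf{Step 3: interpolate.} Set $Q := F'\circ h_{a,\delta}$ and $q(t) := \E[Q(\sqrt t Z + \sqrt{1-t}D)]$ for $t\in[1/2,1]$. Since $h_{a,\delta}$ maps into $[-1,1]^{m'}$ and has bounded derivatives, all partials of $Q$ are bounded, so Lemma~\ref{lemma:smart_path} applies. The diagonal of $\Sigma_Z-\Sigma_D$ vanishes and its off-diagonal entries are $-\Sigma_{ij}$, hence
\[ |q'(t)| \le \tfrac12\sum_{i\ne j}|\Sigma_{ij}|\,\E\bigl|\partial_{ij}F'(h_{a,\delta}(G_t))\,h'_{a,\delta,i}h'_{a,\delta,j}\bigr| \le \frac{\beta K}{2\delta^2}, \]
using $|h'|\le 1/\delta$ and the Hessian bound on the cube. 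Integrating over an interval of length $1/2$ gives the term $\beta K/(4\delta^2)$. Summing the three contributions yields the theorem.

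The main obstacle I expect is Step~1. It must be done carefully so that the restricted function still satisfies every hypothesis of Lemma~\ref{lemma:smoothening} (convexity, and gradient bounds on the subcube), and so that the tail estimate, via the Lipschitz bound on the cube, produces exactly the $m^{-48}$ error with constants compatible with the $400$.
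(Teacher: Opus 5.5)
Your proposal is correct and follows essentially the same route as the paper. You fix the large-shift coordinates at total cost $4/m^{48}$ using the $\ell_\infty$-Lipschitz bound on the cube, apply Lemma~\ref{lemma:smoothening} at $\tau=1$ and $\tau=1/2$ with $\|a\|_\infty \le 10\sqrt{\ln(2m)}$, and bridge the two endpoints with the smart-path interpolation bound $\beta K/(4\delta^2)$. The remaining differences are cosmetic: your threshold $10\sqrt{\ln(2m)}$ versus the paper's $10\sqrt{\ln m}$, and your explicit restricted function $F'$. Your constants still check out.
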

\begin{proof}[Proof of Theorem \ref{theorem:comparison}]
Fix $a$ and $\delta$. First, observe that by H\"older's inequality, we know that for all $x, y \in [-1, 1]^{m}$,
\begin{align*}
    |F(x)-F(y)| &= \left|\int_0^1 \nabla F(x+t(y-x)) \cdot (y-x) \,dt\right| \\
    &\leq \sup_{t \in [0, 1]} \|\nabla F(x+t(y-x))\|_1 \cdot \|y-x\|_{\infty} \leq 2L.
\end{align*}
For each coordinate $i$ where $|a_i| > 10\sqrt{\ln m}$, replace the $i$th coordinate of $\sgn(a+G)$ and of $\sgn(a+Z)$ with $\sgn(a_i)$. Each coordinate of $G$ and of $Z$ is standard normal, so
    \[ \Pr[\sgn(a_i+G_i) \neq \sgn(a_i)] \leq \Phi(-|a_i|) \leq e^{-a_i^2/2} \leq m^{-50}, \]
and the same bound holds for $Z$. A union bound and the Lipschitz bound $|F(x)-F(y)|\le 2L$ imply that each of $\E[F(\sgn(a+G))]$ and $\E[F(\sgn(a+Z))]$ changes by at most $2Lm^{-49} \le 2m^{-48}$.

After having fixed all coordinates where $|a_i| > 10\sqrt{\ln m}$, it is straightforward to see that $F$ remains a smooth convex function with the same norm bounds. Similarly, the covariance matrix of (the remaining coordinates of) $G$ satisfies the same conditions as before. It thus suffices to show that 
\begin{align*}
    \bigg|\E[F(\sgn(a+G))]-\E[F(\sgn(a+Z))]\bigg| \leq 400 L\delta \sqrt{\ln(2m)} + \frac{\beta K}{4\delta^2},
\end{align*}
assuming $|a_i| \leq 10\sqrt{\ln m}$ for all $i \in [m]$.

To interpolate between $Z$ and $G$ we introduce the following.
For $1/2 \leq \tau \leq 1$, denote the random variable $G(\tau) = \sqrt{\tau}Z+\sqrt{1-\tau}D$. Thus $G(1/2) \sim G$ while $G(1) \sim Z$. Our ultimate goal is (more or less) to smoothly interpolate between these endpoints under the function $F$.

The proof consists of two major steps. We begin by approximating $\sgn(a+G(\tau))$ by $h_{a,\delta}(G(\tau))$. By Lemma \ref{lemma:smoothening}, this smoothing is a good approximation of both endpoints under $F$, provided $F$ has low $\ell_1$-gradient norm.

Our second step is to show that once we have smoothened $\sgn(a+G(\tau))$ to $h_{a,\delta}(G(\tau))$, the distributions $G(1/2)$ and $G(1)$ are similar from the perspective of $F$:

\begin{lemma}\label{lemma:interpolation}
$\left|\E[F(h_{a,\delta}(G(1)))]-\E[F(h_{a,\delta}(G(1/2)))]\right| \leq \frac{\beta K}{4\delta^2}$.
\end{lemma}
\begin{proof}[Proof of Lemma \ref{lemma:interpolation}]
Define the function  $Q := F \circ h_{a,\delta}$ (which satisfies the conditions of Lemma \ref{lemma:smart_path} since all partial derivatives of $h_{a,\delta}$ are bounded for fixed $a, \delta$ and $F$ is smooth on $[-1, 1]^{m}$). 
Lemma \ref{lemma:smart_path} then gives
for $q(t) = \bbE[Q(G(t))]$ that 
\begin{align*}
    q'(t)
    &= \frac{1}{2} \sum_{i, j \in [m]}(I - \Sigma)_{ij}
          \E[\partial_{ij}Q(G(t))]\\
    &= \frac{1}{2} \sum_{i \neq j \in [m]} (-\Sigma_{ij})
          \E[\partial_{ij}Q(G(t))].
\end{align*}
For $i \neq j$, the chain rule gives
    \[ \partial_{ij} Q(x) = (\partial_{ij}F)(h_{a,\delta}(x)) \cdot h_{a,\delta,i}'(x_i) h_{a,\delta,j}'(x_j). \]
Thus,
\begin{align*}
    |q'(t)| &\leq \frac{\max_{i \neq j} |\Sigma_{ij}|}{2} \bbE\left[\sum_{i \neq j}|\partial_{ij}Q(G(t))|\right] = \frac{\beta}{2} \bbE\left[\sum_{i \neq j}|\partial_{ij}Q(G(t))|\right] \\
    &\leq\frac{\beta}{2\delta^2} \bbE\left[\sum_{i \neq j}|(\partial_{ij}F)(h_{a,\delta}(G(t)))|\right] \leq \frac{\beta K}{2\delta^2}.
\end{align*}
Integrating over $1/2 \leq t \leq 1$ and taking absolute values gives
    \[ \left|\E[F(h_{a,\delta}(G(1)))]-\E[F(h_{a,\delta}(G(1/2)))] \right| \leq \int_{1/2}^1 |q'(t)|\,dt \leq \frac{\beta K}{4\delta^2}, \]
as claimed.
\end{proof}

As Lemma \ref{lemma:smoothening} implies that for $\tau \in [1/2, 1]$,
    \[ 0 \leq \E[F(\sgn(a+G(\tau)))] - \E[F(h_{a,\delta}(G(\tau)))] \leq 
    18L\delta \cdot (\|a\|_{\infty}+\sqrt{\ln(2m)}) \le 200 L\delta \sqrt{\ln(2m)}, \]
we conclude that
\begin{align*}
    \bigg|\E[F(\sgn(a+G))]-\E[F(\sgn(a+Z))]\bigg| &= \bigg|\E[F(\sgn(a+G(1/2)))]-\E[F(\sgn(a+G(1)))]\bigg| \\
    &\leq \bigg|\E[F(\sgn(a+G(1/2)))]-\E[F(h_{a,\delta}(G(1/2)))]\bigg| 
    \\&\quad+ \bigg|\E[F(h_{a,\delta}(G(1/2)))]-\E[F(h_{a,\delta}(G(1)))]\bigg| 
    \\&\quad+ \bigg|\E[F(h_{a,\delta}(G(1)))]- \E[F(\sgn(a+G(1)))]\bigg|\\
    &\leq 400 L\delta \sqrt{\ln(2m)} + \frac{\beta K}{4\delta^2}
\end{align*}
\end{proof}

Next, we show that in fact one can compare $F(\sgn(D))$ to $F(\sgn(Z))$ directly without needing to apply noise to $D$. This is achieved by iterative application of \Cref{theorem:comparison}.

\begin{corollary}\label{cor:recursive}
Let $F: \bbR^{m} \to \bbR$ be a smooth convex function where
    \[ \sup_{x \in [-1, 1]^{m}} \norm{\nabla F(x)}_1 \leq L, \qquad \sup_{x \in [-1, 1]^{m}} \norm{\nabla^2F(x)}_1 \leq K, \]
where $L \leq m$. Let $D \sim \calN(0,\Sigma)$ be an $m$-dimensional Gaussian vector where $\Sigma_{ii} = 1$ and $\beta := \max_{i\neq j}|\Sigma_{ij}|$. Let $Z \sim \calN(0,I_{m})$ be an independent MVG. Then, for every $0 < \delta \leq 1/2$ and $T \in \bbN$,
\begin{align*}
    \bigg|\E[F(\sgn(D))]-\E[F(\sgn(Z))]\bigg| \leq T \cdot \left(400 L\delta \sqrt{\ln(2m)} + \frac{\beta K}{4\delta^2} + \frac{4}{m^{48}}\right) + 2^{-T/2} \cdot mL\;.
\end{align*}
\end{corollary}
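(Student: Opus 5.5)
We use a telescoping hybrid argument along the path that halves the independent noise at each step, as sketched in Step 5 of the overview. For $j = 0, 1, \ldots, T$ define
\[
H_j := 2^{-j/2} Z + \sqrt{1-2^{-j}}\, D ,
\]
so $H_0 = Z$ and each $H_j$ is a centered Gaussian with unit variances. We then split
\[
\bigl|\E[F(\sgn(D))]-\E[F(\sgn(Z))]\bigr| \le \sum_{j=0}^{T-1}\bigl|\E[F(\sgn(H_{j+1}))]-\E[F(\sgn(H_j))]\bigr| + \bigl|\E[F(\sgn(D))]-\E[F(\sgn(H_T))]\bigr|
\]
and bound each hybrid step by Theorem~\ref{theorem:comparison} and the final term directly.

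\textbf{Hybrid steps.} Fix $j$ and write $Z = (Z' + Z'')/\sqrt{2}$ with $Z', Z''$ independent copies of $\calN(0, I_m)$. Then
\[
H_j = a + 2^{-j/2}\cdot\tfrac{1}{\sqrt{2}}\bigl(Z'+Z''\bigr), \qquad a := 2^{-(j+1)/2} Z'' + \sqrt{1-2^{-j}}\,D .
\]
Rescaling by the positive factor $\sigma := 2^{-(j+1)/2}$ does not change signs, so
\[
\sgn(H_j) = \sgn\bigl(a/\sigma + Z'\bigr).
\]
For $H_{j+1}$ we need $\sqrt{1-2^{-j-1}}\,D = \sqrt{1-2^{-j}}\,D + \bigl(\sqrt{1-2^{-j-1}}-\sqrt{1-2^{-j}}\bigr)D$. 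The increment is a second multiple of the same $D$, so it cannot be added independently. The fix is to write $D = \sqrt{1-c}\,D_1 + \sqrt{c}\,D_2$ with $D_1, D_2$ independent copies of $\calN(0,\Sigma)$ and $c := \tfrac{2^{-(j+1)}}{1-2^{-(j+1)}}$. Absorb the $D_1$ part into the shift $a$. What remains for $H_{j+1}$ is $\sqrt{2}\,\sigma\cdot(Z'/\sqrt{2} + D_2/\sqrt{2})$ in law, while for $H_j$ it is $\sqrt{2}\,\sigma\cdot Z'$. Choose $c$ exactly so that the independent-noise parts $\sigma Z''$ match and the shared $\sqrt{1-c}\,D_1$ parts match; checking that these two coefficient equations are consistent is routine variance bookkeeping.

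Conditioned on the shift $a$ (which depends only on $Z''$ and $D_1$, independent of $Z'$ and $D_2$), Theorem~\ref{theorem:comparison} applies with this $a$ and the Gaussian $D_2$, whose covariance is again $\Sigma$. It gives a bound of $400L\delta\sqrt{\ln 2m} + \frac{\beta K}{4\delta^2} + 4m^{-48}$, uniform in $a$. Averaging over $a$ and summing over the $T$ steps yields the first term of the corollary.

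\textbf{Final comparison.} It remains to compare $\sgn(H_T)$ with $\sgn(D)$. Couple them on the same $(Z,D)$. Then $H_T - D = 2^{-T/2}Z + \bigl(\sqrt{1-2^{-T}}-1\bigr)D$, and its $i$th coordinate is at most $2^{-T/2}(|Z_i| + |D_i|)$ in absolute value. The signs can differ at coordinate $i$ only if $|D_i| \lesssim 2^{-T/2}(|Z_i|+|D_i|)$. Since $D_i$ has density at most $1/\sqrt{2\pi}$, this happens with probability $O(2^{-T/2})$, after integrating out $Z_i$ and using $\E|Z_i| = O(1)$. Each flipped coordinate changes $F$ by at most $2L$, by the Hölder bound $|F(x)-F(y)| \le L\|x-y\|_\infty$ on $[-1,1]^m$ from the proof of Theorem~\ref{theorem:comparison}, applied one coordinate at a time. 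A union bound over the $m$ coordinates gives $O(2^{-T/2} mL)$. Fitting the constant into exactly $2^{-T/2}mL$ is mildly delicate, and I would aim to compute the flip probability precisely via Fact~\ref{fact:sign_identity}: $\Pr[\sgn X \neq \sgn Y] = \arccos(\rho)/\pi$ for correlation $\rho = \sqrt{1-2^{-T}}$. Since $\arccos(\rho)/\pi \le 2^{-T/2}/2$, this gives at most $m\cdot 2^{-T/2}/2$ flips in expectation, each costing at most $2L$, for a total of at most $2^{-T/2}mL$.

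\textbf{Main obstacle.} I expect the hybrid step to be the hardest part. One has to exhibit $H_j$ and $H_{j+1}$ as a common random shift plus, respectively, $Z$ and $(Z+D)/\sqrt{2}$ at a common scale with a fresh copy of $D$. The splitting $D = \sqrt{1-c}\,D_1 + \sqrt{c}\,D_2$ is what makes this possible, and I would first verify that its coefficients match. The uniformity of Theorem~\ref{theorem:comparison} over $a$ is precisely what lets us average over the random shift.
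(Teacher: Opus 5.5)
\textbf{Overall.} Your plan matches the paper's proof. You use the same hybrids $H_j = G(2^{-j})$. Each step is reduced to Theorem~\ref{theorem:comparison} by conditioning on a common shift, using that its bound is uniform in $a$. The endpoint estimate is also the same, via Fact~\ref{fact:sign_identity}: your expected number of flips times $2L$ gives the same $mL\,2^{-T/2}$ as the paper's union bound times $2L$.

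\textbf{The hybrid step as written fails.} The consistency check you call ``routine'' does not go through. Your shift $a = 2^{-(j+1)/2}Z'' + \sqrt{1-2^{-j}}\,D$ carries independent noise, and that is impossible. Suppose the shift has covariance $\alpha I + \gamma\Sigma$ and the remainders are $sZ'$ and $s(Z'+D_2)/\sqrt{2}$. Matching the covariances of $H_j$ and $H_{j+1}$ forces $\gamma = 1-2^{-j}$, $\gamma + s^2/2 = 1-2^{-(j+1)}$ and $\alpha + s^2 = 2^{-j}$. Hence $s^2 = 2^{-j}$ and $\alpha = 0$. Concretely:
\begin{itemize}
\item At your scale $\sigma$, your $H_{j+1}$ would have independent-noise variance $\tfrac{3}{2}\sigma^2$ instead of $\sigma^2$.
\item At the scale $\sqrt{2}\,\sigma$ you switch to afterwards, your $H_j$ would have $3\sigma^2$ instead of $2\sigma^2$.
\end{itemize}
Your displayed formula for $H_j$ also counts $Z''$ twice.

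\textbf{The fix.} Drop the split $Z = (Z'+Z'')/\sqrt{2}$ entirely; your splitting of $D$ and your value of $c$ are already correct. With that $c$, $\sqrt{1-2^{-(j+1)}}\,D \stackrel{d}{=} \sqrt{1-2^{-j}}\,D_1 + 2^{-(j+1)/2}D_2$. Therefore $H_{j+1} \stackrel{d}{=} \sqrt{1-2^{-j}}\,D_1 + 2^{-j/2}(Z+D_2)/\sqrt{2}$, while $H_j \stackrel{d}{=} \sqrt{1-2^{-j}}\,D_1 + 2^{-j/2}Z$. Conditioning on $D_1$ and multiplying by $2^{j/2}$ puts you in Theorem~\ref{theorem:comparison} with shift $\sqrt{2^j-1}\,D_1$. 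This is exactly the paper's step, which writes $H^{(i+1)} \sim 2^{-(i+1)/2}(Z+D') + \sqrt{1-2^{-i}}\,D$ and conditions on $D$.
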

\begin{proof}
Recall the definition $G(\tau) = \sqrt{\tau} Z + \sqrt{1-\tau}D$. We apply \Cref{theorem:comparison} iteratively to compare $G(0)$ and $G(1)$. For $i=0, 1,\ldots,T$, define the hybrid distribution $H^{(i)} := G(2^{-i})$ and define $H^{(T+1)} = G(0)$. We first note that
    \[ \E[F(\sgn(G(1)))] - \E[F(\sgn(G(0)))] =  \sum_{i=0}^{T} \left(\E[F(\sgn(H^{(i)}))] - \E[F(\sgn(H^{(i+1)}))]\right) \]
by a simple telescoping identity.

We begin by showing that for $0 \leq i \leq T-1$, 
    \[ \bigl|\E[F(\sgn(H^{(i)}))] - \E[F(\sgn(H^{(i+1)}))]\bigr|  \le 400 L\delta \sqrt{\ln(2m)} + \frac{\beta K}{4\delta^2} + \frac{4}{m^{48}}. \]
The case $i = 0$ is exactly comparing $G(1)$ to $G(1/2)$, which is proved in \Cref{theorem:comparison}.

For $i > 0$, note that
    \[ H^{(i)} \sim \sqrt{2^{-i}}Z + \sqrt{1-2^{-i}}D, \] 
and
    \[ H^{(i+1)} \sim \sqrt{2^{-(i+1)}}Z + \sqrt{2^{-(i+1)}}D' + \sqrt{1-2^{-i}}D, \]
where $D'$ is an independent copy of $D$. Here we use the fact that the sum of independent MVGs is an MVG. Now observe that we can think of first fixing $D = a$ and then comparing $\E[F(\sgn(H^{(i+1)}))|D=a]$ with $\E[F(\sgn(H^{(i)}))|D=a]$. But conditioned on $D = a$,
    \[ \sgn(H^{(i)}) = \sgn(\sqrt{2^{i}} H^{(i)}) = \sgn\left(Z + \sqrt{2^{i}(1-2^{-i})}a\right), \]
and
    \[ \sgn(H^{(i+1)}) = \sgn(\sqrt{2^{i}} H^{(i+1)}) = \sgn\left(\sqrt{1/2}Z +\sqrt{1/2}D' + \sqrt{2^{i}(1-2^{-i})}a\right), \]
so we see that we have again reduced to the case of \Cref{theorem:comparison} but now with the shift vector $\sqrt{2^{i}(1-2^{-i})}a$. By averaging over $a$ we get
\begin{align*}
\bigl|\E[F(\sgn(H^{(i+1)}))] - \E[F(\sgn(H^{(i)}))]\bigr| \le 400 L\delta \sqrt{\ln(2m)} + \frac{\beta K}{4\delta^2} + \frac{4}{m^{48}}.
\end{align*}

Finally, we compare $H^{(T)} = G(2^{-T})$ with $H^{(T+1)} = G(0)$. Indeed, for each $i \in [m]$,
    \[ H^{(T)}_i = \sqrt{2^{-T}} Z_i + \sqrt{1-2^{-T}}D_i,\qquad H^{(T+1)}_i=D_i\;. \]
Their correlation is $\sqrt{1-2^{-T}}$ and thus
    \[ \E[\sgn(H^{(T)}_i) \sgn(H^{(T+1)}_i)] = \tfrac{2}{\pi} \arcsin(\sqrt{1-2^{-T}}) \ge 1-2^{-T/2}, \]
and so the signs disagree with probability at most $2^{-T/2-1}$. Taking a union bound over all $m$ coordinates, we see that with probability at most $m2^{-T/2-1}$, $\sgn(H^{(T)}) \neq \sgn(H^{(T+1)})$. Furthermore, whenever the two vectors differ, $|F(\sgn(H^{(T)}))-F(\sgn(H^{(T+1)}))|\le 2L$ by the $\ell_1$-gradient norm of $F$.
Thus,
    \[ \bigl|\E[F(\sgn(H^{(T)}))]-\E[F(\sgn(H^{(T+1)}))]\bigr| \leq mL\,2^{-T/2}\;.\]
Summing the $T$ hybrid steps and this final comparison, and using $H^{(0)}=Z$ together with $H^{(T+1)}=D$, gives
\[
    \bigl|\E[F(\sgn(Z))]-\E[F(\sgn(D))]\bigr|
    \leq T\cdot\left(400 L\delta\sqrt{\ln(2m)}+\frac{\beta K}{4\delta^2}+\frac{4}{m^{48}}\right)+mL\cdot 2^{-T/2}.
\]
\end{proof}

Finally, we achieve the main result of this section: the Forrelation distribution $\calD_n$ is pseudorandom against $(d,L)$-AM circuits with  $d, L = \poly(n)$.

\begin{corollary}\label{cor:values}
For any $(d, L)$-AM circuit $C$ where $d, L = \poly(n)$,
    \[ \left|\E_{(x, y) \sim \calD_n}[C(x, y)] - \E_{(x, y) \sim \calU_n}[C(x, y)]\right| \leq 2^{-\Omega(n)}. \]
\end{corollary}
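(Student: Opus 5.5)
The plan is to combine Lemma~\ref{lemma:structural} with Corollary~\ref{cor:recursive}, taking $m = 2N = 2^{n+1}$ and $\beta = N^{-1/2}$. First, the covariance matrix $\Sigma$ of $\calD_n$ has unit diagonal, and its off-diagonal entries are either $0$ or entries of $H_N$, which have absolute value $N^{-1/2}$. So $\calD_n = \sgn(D)$ with $D\sim\calN(0,\Sigma)$, and $\calU_n = \sgn(Z)$ with $Z\sim\calN(0,I_m)$. (We may perturb $\Sigma$ slightly if nondegeneracy is ever needed, but Corollary~\ref{cor:recursive} does not require it.)

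Next, given the $(d,L)$-AM circuit $C$, apply Lemma~\ref{lemma:structural} with a parameter $\lambda$ to be chosen. This yields a smooth convex $F$ with $0\le F-C\le dL\ln 2/\lambda$ pointwise, $\|\nabla F\|_1\le L$, and $\|\nabla^2F\|_1\le K:=2\lambda dL^2$. Since both distributions are supported on $\{\pm1\}^m$, replacing $C$ by $F$ changes each expectation by at most $dL\ln2/\lambda$. Since $L=\poly(n)\le m$, Corollary~\ref{cor:recursive} then gives
\[
\left|\E_{\calD_n}[C]-\E_{\calU_n}[C]\right|\le \frac{2dL\ln 2}{\lambda}+T\left(400L\delta\sqrt{\ln 2m}+\frac{2\lambda dL^2}{4\delta^2\sqrt N}+\frac{4}{m^{48}}\right)+2^{-T/2}mL.
\]

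The remaining step, and the only real work, is choosing parameters so that every term is $2^{-\Omega(n)}$. Take $T=4n$, so that $2^{-T/2}mL\le 2^{-2n}\cdot 2^{n+1}\poly(n)=2^{-\Omega(n)}$. Take $\delta=N^{-1/10}$; it satisfies $\delta\le 1/2$, and the term $TL\delta\sqrt{\ln 2m}=\poly(n)2^{-n/10}$. Take $\lambda=N^{1/10}$; then $dL/\lambda=\poly(n)2^{-n/10}$. The Hessian term becomes $T\lambda dL^2/(\delta^2\sqrt N)=\poly(n)\,N^{1/10+2/10-1/2}=\poly(n)2^{-n/5}$. The term $T m^{-48}$ is negligible.

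Summing, every term is $\poly(n)\cdot 2^{-n/10}=2^{-\Omega(n)}$, since $d,L=\poly(n)$. The one subtlety to check is the balance between $\delta$ and $\lambda$: the Hessian term carries the factor $\lambda/\delta^2$ against the gain $\beta=N^{-1/2}$, which forces $\lambda\delta^{-2}\ll N^{1/2}$ while both $1/\lambda$ and $\delta$ must be small. Any choice of exponents with $\lambda=N^{\alpha}$, $\delta=N^{-\gamma}$, $\alpha,\gamma>0$ and $\alpha+2\gamma<1/2$ works, and the choice above satisfies this.
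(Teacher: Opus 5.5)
Your proposal is correct and is essentially the paper's proof: apply Lemma~\ref{lemma:structural} and then Corollary~\ref{cor:recursive} with $m=2N$, $\beta=N^{-1/2}$, $K=2\lambda dL^2$. The only differences are the parameter choices (the paper takes $\delta=1/\lambda=N^{-1/8}$ and $T=\lceil 10\ln(2N)\rceil$, which also satisfies your constraint $\alpha+2\gamma<1/2$) and your smoothing loss $2dL\ln 2/\lambda$, which is valid though the one-sided bound $0\le F-C\le dL\ln 2/\lambda$ already gives $dL\ln 2/\lambda$ for the difference of expectations.
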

\begin{proof}
Set $\delta = N^{-1/8}$,  $\lambda = 1/\delta$, and $T = \lceil 10 \ln(2N) \rceil$.
Applying \Cref{lemma:structural} gives a smooth convex function $F$ such that $0 \leq F-C \leq \frac{dL \ln 2}{\lambda}$, $\sup_x \|\nabla F(x)\|_1 \leq L$, and $\sup_x \|\nabla^2 F(x)\|_1 \leq 2\lambda dL^2$. 
 
Note that $\sgn(Z) \sim \calU_n$ where $Z \sim \calN(0,I_{2N})$ and $\calD_n$ is distributed as the signs of a multivariate Gaussian with covariance matrix 
    \[ \Sigma = \begin{bmatrix} I_N & H_N \\ H_N & I_N \end{bmatrix}. \]
As the off-diagonal entries of $\Sigma$ have magnitude at most $\beta = 1/\sqrt{N}$, applying \Cref{cor:recursive} with $m=2N$ gives
\begin{align*}
    &\bigl|\E_{(x, y) \sim \calD_n}[C(x, y)] - \E_{(x, y) \sim \calU_n}[C(x, y)]\bigr| \\
    \leq \,\, &\bigl|\E_{(x, y) \sim \calD_n}[F(x, y)] - \E_{(x, y) \sim \calU_n}[F(x, y)]\bigr| + \frac{dL \ln 2}{\lambda} \\
    \leq \,\, &\frac{dL \ln 2}{\lambda} + T\cdot \left(400 L\delta \sqrt{\ln 2m} + \frac{\beta \cdot (2\lambda dL^2)}{4\delta^2} +  \frac{4}{m^{48}}\right)  + mL\cdot 2^{-T/2}\\
    = \,\, &\frac{\poly(n)}{2^{n/8}} = 2^{-\Omega(n)}.
\end{align*} 
\end{proof}

\begin{corollary}[Reminder of Theorem \ref{theorem:oracle_separation}]\label{cor:separation}
There exists a classical oracle $\calO$ relative to which $\BQP^{\calO} \not\subseteq \IP^{\calO}$.
\end{corollary}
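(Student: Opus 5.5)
The plan is the standard diagonalization argument, built from Lemma~\ref{lemma:bqp_alg} (quantum easiness), Lemma~\ref{lemma:am_characterization} (AM circuit representation) and Corollary~\ref{cor:values} (pseudorandomness against AM circuits).

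\textbf{The oracle and the language.} For each $n$ we choose either a string from $\calD_n$ or a uniform string, and encode it as a pair $(f_n,g_n)\in\{\pm1\}^{2N}$. The oracle $\calO$ answers queries $(n,i)$ with the $i$-th bit of the level-$n$ string. The language is $\calL_\calO=\{1^n : \text{level } n \text{ was drawn from } \calD_n\}$.

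\textbf{Membership in $\BQP^\calO$.} To ensure $\calL_\calO\in\BQP^\calO$, we only use strings on which the algorithm $\calA$ of Lemma~\ref{lemma:bqp_alg} behaves correctly. At each level we pick a string in the support of the good event, after amplification. That event has probability $1-O(1/N)$ under each distribution. A $\BQP$ machine then runs $\calA$ on the level-$n$ oracle and decides $\calL_\calO$ with bounded error.

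\textbf{Defeating the verifiers.} Enumerate all polynomial-time oracle verifiers $V_1,V_2,\ldots$; each $V_k$ has a polynomial time bound $n^k$. Via Goldwasser--Sipser (relativized) these may be taken to be public-coin with postponed queries. We handle them one at a time, assigning to each $V_k$ a fresh, sufficiently large length $n_k$. The lengths are spaced so that $V_k$ on input $1^{n_k}$ cannot query any level $\ge n_{k+1}$, and so that the levels below $n_k$ are already fixed.

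With the lower levels fixed, $V_k$'s optimal acceptance probability on $1^{n_k}$ is a function $a_{V_k}$ of the level-$n_k$ string alone. By Lemma~\ref{lemma:am_characterization}, $a_{V_k}$ is computed by a $(\poly,\poly)$-AM circuit. Queries to already-fixed lower levels are hardwired, and queries to higher levels are impossible by the spacing.

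Corollary~\ref{cor:values} then gives $|\E_{\calD}[a_{V_k}]-\E_{\calU}[a_{V_k}]|\le 2^{-\Omega(n_k)}$. Conditioning on the good events of Lemma~\ref{lemma:bqp_alg} changes each expectation by only $O(1/N)$. So one of the following must hold:
\begin{itemize}
\item there is a good $\calD$-string $x$ with $a_{V_k}(x)<2/3$, or
\item there is a good $\calU$-string $y$ with $a_{V_k}(y)>1/3$.
\end{itemize}
Otherwise the two conditional means would differ by more than $1/3$. In the first case we fix level $n_k$ to $x$, so $1^{n_k}\in\calL$ but completeness fails. In the second case we fix it to $y$, so $1^{n_k}\notin\calL$ but soundness fails. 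Either way, $V_k$ does not decide $\calL$ as an $\IP$ verifier. Levels not of the form $n_k$ are filled with good uniform strings.

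\textbf{The main obstacle.} The key technical work is already done in Corollary~\ref{cor:values}, so the remaining difficulty is bookkeeping. We must check that fixing the other levels keeps the verifier's acceptance function representable by a $(\poly(n),\poly(n))$-AM circuit over the level-$n_k$ variables. The prover is unbounded and knows the entire oracle, which is compatible with the maximization in $a_V$. We must also ensure that the error parameters hold uniformly for $n_k$ large enough relative to $V_k$'s polynomial.
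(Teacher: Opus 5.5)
Your proof is correct. It uses the same three ingredients as the paper (Lemma~\ref{lemma:bqp_alg}, Lemma~\ref{lemma:am_characterization}, Corollary~\ref{cor:values}), but organizes the diagonalization differently.

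\textbf{The paper's route.} It uses a random oracle. It draws a random unary language and samples every level from $\calD_n$ or $\calU_n$ accordingly. It then shows that the $\BQP$ machine is correct on all $n\ge n_0$ with probability at least $\prod_{n\ge n_0}(1-\frac{1}{100n^2})\ge 0.9$. Corollary~\ref{cor:values} bounds by $4/5$ the probability that a fixed $\AM[\poly]$ verifier is correct at a given length. Multiplying over infinitely many separated lengths gives probability $0$, and a countable union over verifiers finishes the argument.

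\textbf{Your route.} You build the oracle stage by stage. Every level is restricted to a string on which $\calA$ already has bounded error, so $\BQP$ membership holds by construction rather than with probability $0.9$. Each $V_k$ is then defeated at one dedicated length, by an averaging argument that exhibits a single bad string.

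\textbf{What each buys.} Your construction makes explicit the fixing of all other levels. The paper's conditional bound $\Pr[E_{n_{i+1}}\mid E_{n_1}\wedge\cdots\wedge E_{n_i}]\le 4/5$ uses this tacitly: it requires applying Lemma~\ref{lemma:am_characterization} to the verifier with the other levels hardwired. The paper's route gives a stronger conclusion. The separation holds for a set of oracles of measure at least $0.9$ under the natural distribution, and no enumeration schedule is needed.

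\textbf{One ordering detail to tighten.} $V_k$ on $1^{n_k}$ can query levels strictly between $n_k$ and $n_{k+1}$, so ``queries to higher levels are impossible'' is only true for levels $\ge n_{k+1}$. You should fix those intermediate levels (as good uniform fillers) before choosing the level-$n_k$ string. Only then is $a_{V_k}$ genuinely a function of level $n_k$ alone. Your filler rule permits this, provided the fillers are fixed in advance rather than at the end.
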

The proof of Corollary \ref{cor:separation} follows by a straightforward diagonalization argument, which we include in Appendix \ref{sec:diag} for completeness.

%% file: prg.tex
In this section, we generalize \Cref{cor:values} to any centered MVG distribution with variance $1$ and small covariances.
\begin{corollary}\label{cor:small_beta}
Let $m, d, L \in \bbN$, $\beta > 0$ with $L \le m$. Let $G \sim \calN(0, \Sigma)$, with $\Sigma_{i, i} = 1$ and $|\Sigma_{i, j}| \leq \beta$ for $i \neq j$. Then, for any $(d, L)$-AM circuit $C$,
    \[ \bigl|\E[C(\sgn(G))] - \E[C(\calU_{m})]\bigr| \leq \beta^{1/4} \cdot O(dL^2 \cdot \log^{3/2}(mL)) + \tfrac{1}{(2mL)^4}. \]
\end{corollary}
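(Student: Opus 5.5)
The plan is to rerun the proof of \Cref{cor:values} for a general correlation bound $\beta$ and $m$, choosing $\delta$, $\lambda$ and $T$ as functions of $\beta$ rather than of $N$. First apply \Cref{lemma:structural} to $C$ with a parameter $\lambda$ to be chosen. This gives a smooth convex $F$ with
\[
0\le F-C\le dL\ln 2/\lambda,\qquad \|\nabla F\|_1\le L,\qquad \|\nabla^2F\|_1\le K:=2\lambda dL^2 .
\]
Since $\sgn(Z)\sim\calU_m$ for $Z\sim\calN(0,I_m)$, the triangle inequality gives
\[
\bigl|\E[C(\sgn(G))]-\E[C(\calU_m)]\bigr|\le \bigl|\E[F(\sgn(G))]-\E[F(\sgn(Z))]\bigr|+\frac{dL\ln 2}{\lambda}.
\]
We then bound the first term by \Cref{cor:recursive} applied with $D=G$, which needs $L\le m$ (assumed). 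The result is
\[
\frac{dL\ln2}{\lambda}+T\Bigl(400L\delta\sqrt{\ln 2m}+\frac{\beta\lambda dL^2}{2\delta^2}+\frac{4}{m^{48}}\Bigr)+mL\,2^{-T/2}.
\]

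Next, choose parameters.
\begin{itemize}
\item Take $\lambda=1/\delta$ as in \Cref{cor:values}. Then the three main terms are $dL\delta$, $L\delta\sqrt{\ln m}$ and $\beta dL^2/\delta^3$.
\item Balance $\delta$ against $\beta/\delta^3$ by setting $\delta=\beta^{1/4}$. Each term is then $O(\beta^{1/4}dL^2\sqrt{\ln 2m})$.
\item Set $T=\lceil 2\log_2((2mL)^5)\rceil=O(\log(mL))$. Then $mL\,2^{-T/2}\le mL/(2mL)^5$, which is at most $\tfrac{1}{2}(2mL)^{-4}$.
\end{itemize}
Multiplying by $T$ gives a main term $\beta^{1/4}\cdot O(dL^2\log^{3/2}(mL))$, where one $\log$ comes from $T$ and $\sqrt{\log}$ from $\sqrt{\ln 2m}$. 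The leftover term is $4T/m^{48}$.

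The caveat is that \Cref{cor:recursive} requires $\delta\le 1/2$, that is $\beta\le 1/16$. For $\beta>1/16$ the claimed bound is trivial, provided the implied constant is large enough. Indeed both expectations lie in $[\min C,\max C]$ over the cube, and $C$ has range at most $2L$ on $\{\pm1\}^m$ (by the same Lipschitz argument, or since it extends a bounded function).

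The main obstacle is the additive error $4T/m^{48}$ coming from \Cref{theorem:comparison}. It must fit under $\tfrac12(2mL)^{-4}$, but with $T=\Theta(\log(mL))$ and small $m$ (for instance $m=1$) it need not. If $L\le m$, then $(2mL)^4\le 16m^8$, so $4T/m^{48}\le\tfrac12(2mL)^{-4}$ once $m$ exceeds a constant. For small $m$, and for $\beta$ that is not tiny, one can instead absorb the error into the $\beta^{1/4}$ term, or else fall back on the trivial bound. I expect this bookkeeping, together with checking that constants such as the $10\sqrt{\ln m}$ truncation remain valid for small $m$, to be the fiddly part. The analytic content is entirely inherited from \Cref{cor:recursive}.
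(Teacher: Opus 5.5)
Your proposal is correct and is essentially the paper's proof: the paper likewise combines \Cref{lemma:structural} with \Cref{cor:recursive} using exactly your choices $\delta=\beta^{1/4}$, $\lambda=1/\delta$ and $T=\lceil 10\log_2(2mL)\rceil$ (note $2\log_2((2mL)^5)=10\log_2(2mL)$). Your extra bookkeeping fills in details that the paper's ``same calculation'' leaves implicit: the case $\beta>1/16$, where $\delta\le 1/2$ fails and you fall back on the trivial $2L$ range bound, and absorbing $4T/m^{48}$ into $\tfrac12(2mL)^{-4}$ via $L\le m$, which in fact works for every $m\ge 2$, while $m=1$ is trivial since $\sgn(G)$ is then exactly uniform.
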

\begin{proof}
Set $\delta = \beta^{1/4}$, $\lambda = 1/\delta$, and $T = \lceil 10 \log_2(2mL) \rceil$. The same calculation as in \Cref{cor:values} gives \begin{align*}
    \bigl|\E[C(\sgn(G))] - \E[C(\calU_{m})]\bigr| 
    \leq \,\, &\frac{dL \ln 2}{\lambda} + T\cdot \left(400 L\delta \sqrt{\ln(2m)} + \frac{\beta \cdot (2\lambda dL^2)}{4\delta^2} +  \frac{4}{m^{48}}\right)  + mL\cdot 2^{-T/2}\\
    = \,\, &\beta^{1/4} \cdot O(dL^2 \cdot \log^{3/2}(mL)) + \tfrac{1}{(2mL)^4}\;.\qedhere
\end{align*} 
\end{proof}

The work of \cite{CHLT19}, which relied on the explicit small-biased spaces of Ta-Shma \cite{TaShma17}, showed how to $1/\poly(m)$-approximately sample an $m$-dimensional centered Gaussian vector $G$ with variances $1$ and covariances at most $\beta$ using seed length $O(\log^2 m)/\beta^{2+o(1)}$. We spell it out next.

\paragraph{From Balanced Codes to Low-Rank MVG with Small Covariances.}
A binary linear code $\mathcal B\subseteq\{0,1\}^r$ is \emph{$\beta$-balanced} if every nonzero codeword has relative Hamming weight in $[(1-\beta)/2,(1+\beta)/2]$. For $k\ge \log_2(m)$, choose a $k$-dimensional $\beta$-balanced code. Take distinct codewords $c_1,\ldots,c_m$ and define their corresponding unit vectors
\[
    v_i=\frac{1}{\sqrt r}
       \bigl((-1)^{(c_i)_1},\ldots,(-1)^{(c_i)_r}\bigr)
\]
for $i\in [m]$. Then $\|v_i\|_2=1$, and for $i\neq j$, $\beta$-balance gives $|\langle v_i,v_j\rangle|\leq\beta$.

Ta-Shma's explicit construction~\cite{TaShma17} gives such a code with
\[
    r=O\!\left(\frac{\log m}{\beta^{2+o(1)}}\right),
\]
which is almost optimal up to the $o(1)$ term. Cohen and Cohen~\cite{CC26} subsequently improved the subpolynomial factor in Ta-Shma's bound. Either construction suffices here.

Let $V\in\bbR^{m\times r}$ have $i$th row $v_i^{\sfT}$. For $g\sim\calN(0,I_r)$, set $G = Vg$. Thus $G\sim\calN(0,VV^{\sfT})$, and
\[
    \Var(G_i)=\|v_i\|_2^2=1,\qquad
    \operatorname{Cov}(G_i,G_j)=\langle v_i,v_j\rangle.
\]
This means that we can sample $G$ by sampling only $r = O(\log m/\beta^{2+o(1)})$ Gaussians instead of $m$. Furthermore, \Cref{cor:small_beta} applies to $G = Vg$.

\paragraph{Discretizing the Distribution.}
We use the following consequence of Kane's discrete Gaussian sampler~\cite[Lemma~3.2]{Kane15}; see also~\cite[Lemma~11]{CHLT19}. For every $0<\eta<1/2$, there is an explicit sampler using $O(\log(1/\eta))$ random bits whose output $\widetilde g_j$ can be coupled with a standard Gaussian $g_j$ so that
\[
    \Pr[|\widetilde g_j-g_j|>\eta]\leq\eta.
\]
Take $r$ independent copies of this coupling and write $\widetilde g=(\widetilde g_1,\ldots,\widetilde g_r)$. A union bound gives
\[
    \Pr[\|g-\widetilde g\|_\infty>\eta]\leq r\eta.
\]
On the complementary event, 
$\|Vg-V\widetilde g\|_\infty
       \leq\|g-\widetilde g\|_2\leq\sqrt r\,\eta.$
Therefore, a sign can change only if $|(Vg)_i|\leq\sqrt r\,\eta$ for some $i$. Each $(Vg)_i$ is a standard Gaussian, so Gaussian anti-concentration and another union bound yield
\begin{equation}\label{eq:prg-sign-error}
    \Pr[\sgn(Vg)\neq\sgn(V\widetilde g)]
       \leq r\eta+\sqrt{\frac{2}{\pi}}\,m\sqrt r\,\eta
       \leq(r+m\sqrt r)\eta.
\end{equation}
For any target error $\eps>0$, we pick $\eta \le \frac{\eps}{2(r+m\sqrt{r})}$.
To sample $\tilde{g}$ we need $s = O(r\cdot \log(1/\eta))$ bits.
Thus, we get the following result as a corollary:
\begin{corollary}\label{cor:prg}
For any $d,L, m \in \bbN, \eps> 0$, there exists an explicit pseudorandom generator $P:\{-1,1\}^s \to\{-1,1\}^m$ that $\eps$-fools all $(d, L)$-AM circuits $C$ with seed-length $s = \poly(d,L,\log(m),1/\eps)$.
\end{corollary}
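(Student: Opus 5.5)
The generator is $P(\text{seed}) = \sgn(V\widetilde g)$. Here $V$ is the $m\times r$ matrix of sign vectors built from a $\beta$-balanced code (Ta-Shma), and $\widetilde g$ is the vector of $r$ discretized Gaussians produced by Kane's sampler from the seed. The proof combines Corollary~\ref{cor:small_beta} with the discretization bound \eqref{eq:prg-sign-error} through a triangle inequality. The one caveat is that $\eps$-fooling is measured against $C$ on Boolean inputs, where $C$ agrees with the acceptance function and takes values in a bounded range. I would first note that the circuits of interest (from Lemma~\ref{lemma:am_characterization}) take values in $[0,1]$ on $\{\pm1\}^m$. More generally, any $(d,L)$-AM circuit restricted to the cube has range of size at most $2L$, by the $\ell_1$-gradient bound argument used in Theorem~\ref{theorem:comparison}, since $C$ is $L$-Lipschitz in $\ell_\infty$. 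This bounds the cost of any bad event.

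\textbf{Step 1 (choice of $\beta$).} Set the target so that the bound of Corollary~\ref{cor:small_beta} is at most $\eps/2$. It suffices to take $\beta = \bigl(\eps/(c\,dL^2\log^{3/2}(mL))\bigr)^4$ for a suitable constant $c$, which is $1/\poly(d,L,\log m,1/\eps)$. We may assume $L\le m$ and, when $\eps$ is tiny relative to $m$, that the additive $(2mL)^{-4}$ term is below $\eps/4$; otherwise $1/\eps\ge \poly(m)$ and the trivial generator with seed $m$ already has seed length $\poly(1/\eps)$. Then the Gaussian $G=Vg$, which has unit variances and covariances at most $\beta$, satisfies $|\E[C(\sgn(Vg))]-\E[C(\calU_m)]|\le \eps/2$.

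\textbf{Step 2 (discretization).} With $\eta \le \eps/(4L(r+m\sqrt r))$, equation \eqref{eq:prg-sign-error} shows that the coupled sign vectors $\sgn(Vg)$ and $\sgn(V\widetilde g)$ differ with probability at most $\eps/(4L)$. When they differ, $C$ changes by at most $2L$. So the expectations differ by at most $\eps/2$, and the triangle inequality gives $\eps$-fooling.

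\textbf{Step 3 (seed length and explicitness).} The seed has $s=O(r\log(1/\eta))$ bits, with $r=O(\log m/\beta^{2+o(1)})$ and $\log(1/\eta)=O(\log(mLr/\eps))$. Both are $\poly(d,L,\log m,1/\eps)$, which matches the claimed seed length. Explicitness follows from the explicit code and the explicit sampler: $V\widetilde g$ is computable in time $\poly(m,r)$.

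The main obstacle is bookkeeping rather than any new idea. It lies in ensuring the $(2mL)^{-4}$ term and the $\log(1/\eta)$ factor do not force a $\poly(m)$ dependence. I would handle this with the case split on whether $\eps<m^{-1}$ described in Step 1.
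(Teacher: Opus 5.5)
Your proposal is correct and follows the paper's proof step for step. It uses the same generator $\sgn(V\widetilde g)$, the same choice of $\beta$ via Corollary~\ref{cor:small_beta}, the same trivial-generator case split to absorb the $(2mL)^{-4}$ term, and the same triangle inequality with \eqref{eq:prg-sign-error}. The one difference is in your favor. The paper bounds the discretization cost by $\Pr[\sgn(V\widetilde g)\neq\sgn(Vg)]$, which implicitly treats $C$ as $[0,1]$-valued. You instead use that a general $(d,L)$-AM circuit is $L$-Lipschitz in $\ell_\infty$, so it varies by at most $2L$ on the cube, and you shrink $\eta$ by a factor of $L$. This costs only an additive $O(\log L)$ in $\log(1/\eta)$.
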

\begin{proof}
Note that if either $d, L, 1/\eps>m$, then the trivial PRG with seed-length $s = m$ already satisfies $s = \poly(d,L,\log(m), 1/\eps)$. Thus, we can assume without loss of generality that $d, L, 1/\eps\le m$.

Pick $\beta = (\eps/O(dL^2\log^{3/2}(mL)))^4$ so that the error in \Cref{cor:small_beta} is at most $\eps/2$. This is justified since such $\beta$ can control the first error term be at most $\eps/4$ and the second error term is $\frac{1}{(2mL)^4} \le \eps/4$. 

Applying the above construction from~\cite{CHLT19,TaShma17,Kane15} with $r = \log(m)/\beta^{2+o(1)}$ and $\eta \le \eps/(2(r+m\sqrt{r})) = 1/\poly(m)$ gives a seed length of
    \[ s = O(r \log(1/\eta)) = O(\log^2 m) /\beta^{2+o(1)} = \poly(d,L,\log(m),1/\eps). \]
Denote the sampled discrete Gaussian vector by $\tilde{g}$. The output of the PRG is $\sgn(V \tilde{g})$. By the triangle inequality, \Cref{cor:small_beta}, and $\Pr[\sgn(Vg)\neq \sgn(V\tilde{g})]\le \eps/2$, we get
\begin{align*}
    |\E[C(\sgn(V\tilde{g}))] - \E[C(\calU_m)]| &\leq \Pr[\sgn(V\tilde{g})\neq \sgn(Vg)] + |\E[C(\sgn(Vg))] - \E[C(\calU_m)]| \\
    &\leq \eps/2 + \eps/2 = \eps.
\end{align*}
for all $(d,L)$-AM circuits $C$  as desired.
\end{proof}

%% file: bqp_algorithm.tex
\section{\texorpdfstring{$\BQP$}{BQP} Algorithm for Forrelation}
\begin{proof}[Proof of Lemma \ref{lemma:bqp_alg}]
We first establish the following basic facts about the distributions $\calU_n$ and $\calD_n$.
\begin{lemma}\label{lemma:forr_conc_high}
$\Pr_{(x, y) \sim \calD_n}[\forr(x, y) \geq \frac{1}{4}] \geq 1-O\left(\frac{1}{N}\right)$.
\end{lemma}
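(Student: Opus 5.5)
The plan is a second-moment computation for $\forr(x,y)$ under $\calD_n$, followed by Chebyshev. Write $(x,y)=(\sgn(X),\sgn(HX))$ with $X\sim\calN(0,I_N)$, and let $Y=HX$. Then
\[
\forr(x,y)=\frac{1}{N}\sum_{i,j}H_{ij}\,x_i y_j .
\]

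\textbf{Mean.} For each pair $(i,j)$, the variables $X_i$ and $Y_j$ are jointly Gaussian with unit variances and correlation $H_{ij}=\pm 1/\sqrt N$. Fact~\ref{fact:sign_identity} gives $\E[x_iy_j]=\frac{2}{\pi}\arcsin(H_{ij})$. Since $\arcsin(u)u\ge u^2$, we get
\[
\E[\forr]=\frac{1}{N}\sum_{i,j}H_{ij}\cdot\tfrac{2}{\pi}\arcsin(H_{ij})\ge \frac{2}{\pi}\cdot\frac{1}{N}\sum_{i,j}H_{ij}^2=\frac{2}{\pi}.
\]
Because $2/\pi>1/4$, it remains to show that $\Var(\forr)=O(1/N)$; Chebyshev then gives the claimed $1-O(1/N)$.

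\textbf{Variance.} I would write
\[
\Var(\forr)=\frac{1}{N^2}\sum_{i,j,k,l}H_{ij}H_{kl}\,\mathrm{Cov}(x_iy_j,\,x_ky_l).
\]
Each term involves four signs of jointly Gaussian variables whose pairwise correlations are $0$, $1$, or $\pm 1/\sqrt N$. The diagonal terms ($i=k$, $j=l$) number $N^2$, each of size at most $1/N$, and contribute $O(1/N)$ in total. For the remaining terms I would expand $\E[x_iy_jx_ky_l]$ perturbatively in the small cross-correlations. The clean route is the Hermite/Wick expansion of the sign function: $\sgn$ has only odd Hermite coefficients, and the degree-1 coefficient is $\sqrt{2/\pi}$. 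This writes the covariance as a sum over diagrams, each weighted by a product of cross-correlations $H_{\cdot\cdot}$.

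\textbf{Power counting.} A cleaner organization is to split each sign as $\sgn(X_i)=\sqrt{2/\pi}\,X_i+r_i$, where the remainder $r_i$ is orthogonal to all linear functions of $X_i$; do the same for $y_j$ with $Y_j$. Then:
\begin{itemize}
\item The purely linear part is $\frac{2}{\pi}\cdot\frac{1}{N}X^{\sfT}HHX=\frac{2}{\pi}\cdot\frac{\|X\|^2}{N}$, since $H^2=I$. This has variance $O(1/N)$.
\item In each remaining part, the remainders carry Hermite degree at least $3$. Every correlation between $X$-coordinates and $Y$-coordinates is $\pm1/\sqrt N$, while correlations within $X$ (or within $Y$) are exactly $0$ off the diagonal. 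So every off-diagonal Wick pairing costs a factor $N^{-1/2}$ per edge. The point to verify is that the resulting counting gives $O(1/N)$: each free summation index contributes a factor $N$, and this must be compensated by the edges.
\end{itemize}

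\textbf{Main obstacle.} The hard step is this combinatorial bookkeeping of the cross terms. A degree-3 remainder can pair with three different indices, and the orthogonality of $H$ (that is, $\sum_j H_{ij}H_{kj}=\delta_{ik}$) has to be used to cancel sums that naive absolute-value bounds would make too large.

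If the Hermite bookkeeping becomes unwieldy, my fallback is to avoid explicit Wick sums. I would condition on $X$, note that each $y_j$ is a deterministic function of $Y_j=(HX)_j$, and bound the variance of $\frac{1}{\sqrt N}\sum_j y_j(HX)_j$-type quantities with Gaussian Poincaré-style inequalities applied to smoothed signs. I would then control the smoothing error with the anticoncentration estimates already used in Lemma~\ref{lemma:smoothening}.
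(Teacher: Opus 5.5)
Your mean computation is fine ($\E[\forr]\ge 2/\pi$ via $u\arcsin u\ge u^2$), but the proposal stops short of a proof. Everything rests on $\Var(\forr)=O(1/N)$, and that bound is only announced. Your own text says ``the point to verify is\ldots'' and then calls this bookkeeping the main obstacle, so the decisive step is still open.

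It is also not routine as sketched. The remainders $r_i=\sgn(X_i)-\sqrt{2/\pi}\,X_i$ have an infinite Hermite expansion with slowly decaying coefficients. So you would have to bound the four-point covariances $\mathrm{Cov}(r_is_j,r_ks_l)$ by $O(N^{-2})$ for generic index tuples, uniformly over all diagram degrees. You would also have to treat every coincidence pattern of $(i,j,k,l)$, and do the same for the mixed linear--remainder terms.

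The fallback is too underspecified to fill this in. As literally stated, conditioning on $X$ fixes $Y=HX$ and hence all of $(x,y)$, leaving nothing random.

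The missing idea is to use the orthogonality of $H$ at the level of vectors rather than entry by entry. Take your decomposition $x=\mu X+r$, $y=\mu Y+s$ with $\mu=\sqrt{2/\pi}$. The identities $HY=X$ and $HX=Y$ give
\[
x^{\sfT}Hy=\mu^2\|X\|_2^2+\mu\langle X,r\rangle+\mu\langle Y,s\rangle+r^{\sfT}Hs .
\]
The first three terms are sums of i.i.d.\ variables, since the coordinates of $Y$ are i.i.d.\ standard Gaussians ($H$ is orthogonal). Their total mean is $2N/\pi$ and their variance is $O(N)$.

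The one genuinely bilinear term needs no Wick calculus at all. We have $|r^{\sfT}Hs|\le\|r\|_2\|s\|_2$, and $\|r\|_2^2/N$, $\|s\|_2^2/N$ are i.i.d.\ averages with mean $1-2/\pi$. Chebyshev then gives $\forr\ge 4/\pi-1-o(1)\approx 0.273>1/4$ with probability $1-O(1/N)$.

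This is the paper's proof in slightly different form. It writes $\forr=1-\|Hx-y\|_2^2/(2N)$ and notes $Hx-y=Hr-s$. It then uses $\|Hr-s\|_2^2\le 2\|r\|_2^2+2\|s\|_2^2$ to get $\forr\ge 1-A-B$, with $A=\|r\|_2^2/N$, $B=\|s\|_2^2/N$ and $\E[A+B]=2-4/\pi$. That is the same threshold $4/\pi-1$, and the $1/4$ in the statement sits just below it.

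So the two-sided concentration near $2/\pi$ you were aiming for (probably true) is more than the lemma needs; a one-sided bound suffices. The same isometry estimates $\|Hv\|_2=\|v\|_2$ would also be the key input if you wanted to make a smoothed-sign/Poincar\'e fallback rigorous.
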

\begin{proof}[Proof of Lemma \ref{lemma:forr_conc_high}]
Define $\mu := \sqrt{2/\pi}$. Since
\begin{align*}
    \|H_N \sgn(x) - \sgn(y) \|_2^2 &= \|H_N \sgn(x)\|_2^2 +\|\sgn(y) \|_2^2  - 2(H_N \sgn(x))^{\sfT} \sgn(y) \\
    &= 2N - 2(H_N \sgn(x))^{\sfT} \sgn(y) \\
    &= 2N(1-\forr(\sgn(x), \sgn(y))),
\end{align*}
we can rewrite $\forr(\sgn(x), \sgn(y))$ as 
    \[ \forr(\sgn(x), \sgn(y)) = 1-\frac{\|H_N \sgn(x) - \sgn(y)\|_2^2}{2N}. \]
Because $y = H_N x$, we can write
    \[ H_N \sgn(x) - \sgn(y) = H_N (\sgn(x) - \mu x) - (\sgn(y) - \mu y). \]
Applying the fact that $\|u-v\|_2^2 \leq 2\|u\|_2^2 + 2\|v\|_2^2$, we have that
\begin{align*}
    \forr(\sgn(x), \sgn(y)) &= 1-\frac{\|H_N \sgn(x) - \sgn(y)\|_2^2}{2N} \\
    &\geq 1-\frac{\|H_N (\sgn(x) - \mu x)\|_2^2 + \|\sgn(y) - \mu y\|_2^2}{N} \\
    &= 1-\frac{\|\sgn(x) - \mu x\|_2^2 + \|\sgn(y) - \mu y\|_2^2}{N}.
\end{align*}
Now define $A := \frac{\|\sgn(x) - \mu x\|_2^2}{N}$ and $B := \frac{\|\sgn(y) - \mu y\|_2^2}{N}$ (note that the marginals of $x$ and $y$ and hence $A$ and $B$ are identical). As
\begin{align*}
    A = \frac{1}{N} \sum_{i = 1}^N (\sgn(x_i)-\mu x_i)^2 = \frac{1}{N} \sum_{i = 1}^N (1-\mu |x_i|)^2,
\end{align*}
$A$ and $B$ are each the average of $N$ independent random variables $W_i := (1-\mu|Z_i|)^2$ where $Z_i \sim \calN(0, 1)$. Basic Gaussian moments give 
    \[ \bbE[W_i] = 1-2\mu\bbE[|Z_i|]+\mu^2\bbE[Z_i^2] = 1-2\mu^2+\mu^2 = 1-\frac{2}{\pi}, \]
and 
    \[ \bbE[W_i^2] = \bbE[(1-\mu|Z_i|)^4] = \bbE[(|1-\mu|Z_i||)^4] \leq \bbE[(\max\{1, |Z_i|\})^4] \leq 1+\bbE[Z_i^4] = 4. \]
Thus
    \[ \bbE[A+B] = 2-\frac{4}{\pi}, \qquad \Var(A+B) \leq 2 \cdot (\Var(A)+\Var(B)) \leq \frac{16}{N}. \]
By Chebyshev's inequality we conclude that
\begin{align*}
    \Pr_{(x, y) \sim \calD_n}[\forr(x, y) < 1/4] &= \Pr_{x \sim \calN(0, I_N), y = H_N x}[\forr(\sgn(x), \sgn(y)) < 1/4] \\
    &\leq \Pr_{x \sim \calN(0, I_N), y = H_N x}[A + B > 3/4] \\
    &\leq \frac{\Var(A+B)}{(3/4-\bbE[A+B])^2} \\
    &\leq \frac{16}{N(1/50)^2} = \frac{40000}{N},
\end{align*}
as desired.
\end{proof}

\begin{lemma}\label{lemma:forr_conc_low}
$\Pr_{(x, y) \sim \calU_n}[\forr(x, y) \leq \frac{1}{16}] \geq 1-O\left(\frac{1}{N}\right)$.
\end{lemma}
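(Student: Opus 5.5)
Under $\calU_n$ the vectors $f,g$ are independent and uniform on $\{\pm1\}^N$, so $\forr(f,g)$ has mean zero and small variance. A second-moment bound plus Chebyshev's inequality should suffice, mirroring the proof of Lemma~\ref{lemma:forr_conc_high}.

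\textbf{Mean.} Write
\[
\forr(f,g)=N^{-3/2}\sum_{i,j}(-1)^{i\cdot j}f_ig_j .
\]
Each term has expectation $\bbE[f_i]\bbE[g_j]=0$, so $\bbE[\forr(f,g)]=0$.

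\textbf{Second moment.} Expanding gives
\[
\bbE[\forr(f,g)^2]=N^{-3}\sum_{i,j,i',j'}(-1)^{i\cdot j+i'\cdot j'}\,\bbE[f_if_{i'}]\,\bbE[g_jg_{j'}].
\]
The expectations vanish unless $i=i'$ and $j=j'$. In that case the sign factor is $(-1)^{2\,i\cdot j}=1$. There are $N^2$ such terms, so $\bbE[\forr(f,g)^2]=N^{-3}\cdot N^2=1/N$.

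Equivalently, condition on $f$: the vector $u=H_Nf$ is deterministic with $\|u\|_2^2=N$. Then $\forr=N^{-1}\langle u,g\rangle$ is a sum of independent mean-zero terms, and its variance is $N^{-2}\|u\|_2^2=1/N$.

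\textbf{Concluding.} Chebyshev's inequality gives
\[
\Pr\bigl[\forr(f,g)>\tfrac1{16}\bigr]\le 256/N=O(1/N),
\]
which is the claim. There is no real obstacle here. The only point needing care is that the cross terms vanish, and this follows directly from the independence and uniformity of the coordinates of $f$ and $g$.
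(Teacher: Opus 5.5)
Your proof is correct and takes essentially the same approach as the paper. The paper also computes $\bbE[\forr(x,y)^2]=1/N$ from the independence of the coordinates, where only the diagonal terms survive. It then applies Markov's inequality to $\forr^2$, which is equivalent to your Chebyshev step given mean zero, and obtains the same bound $256/N$.
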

\begin{proof}[Proof of Lemma \ref{lemma:forr_conc_low}]
The calculation is fairly simple: independence of each coordinate gives
    \[ \bbE_{(x, y) \sim \calU_n}\left[\forr(x, y)^2\right] = \frac{1}{2^{3n}} \sum_{i, j} 1 = \frac{1}{N}. \]
Applying Markov's inequality, we conclude that $\Pr_{(x, y) \sim \calU_n}[\forr(x, y) > \frac{1}{16}] \leq \frac{256}{N}$.
\end{proof}

With Lemmas \ref{lemma:forr_conc_high} and \ref{lemma:forr_conc_low} in hand, we can now appeal to the 1-query algorithm for Forrelation due to Aaronson and Ambainis \cite{AA15}:
\begin{lemma}[Modified from Proposition 6, \cite{AA15}]\label{lemma:forr_alg}
There exists a 1-query, $O(n)$-gate quantum algorithm $\calA$ such that for any functions $f, g \in \{\pm 1\}^N$, 
    \[ \Pr[\calA^{f, g}(1^n) = 1] = \frac{1+\forr(f, g)}{2}. \]
\end{lemma}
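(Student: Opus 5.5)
The statement to prove is Lemma~\ref{lemma:forr_alg}, the one-query Forrelation algorithm of \cite{AA15}. I will use the standard Hadamard-test construction.

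\textbf{Register and query model.} Work on $n+1$ qubits: one control qubit and an $n$-qubit register indexed by $\{0,1\}^n$. The single query is a controlled phase oracle that applies $f$ when the control is $\ket{0}$ and $g$ when the control is $\ket{1}$:
\[ \ket{0}\ket{i}\mapsto f_i\ket{0}\ket{i}, \qquad \ket{1}\ket{i}\mapsto g_i\ket{1}\ket{i}. \]
This counts as one query to the combined oracle $(f,g)$, which is a single string in $\{\pm1\}^{2N}$ with the first bit selecting the half.

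\textbf{The circuit.}
\begin{enumerate}
\item Start in $\ket{0}\ket{0^n}$.
\item Apply $H$ to the control qubit and $H^{\otimes n}$ to the register. This gives $\frac{1}{\sqrt2}(\ket0+\ket1)\otimes\frac{1}{\sqrt N}\sum_i\ket i$.
\item Make the single query. The state becomes $\frac{1}{\sqrt2}\bigl(\ket0\ket{\psi_f}+\ket1\ket{\psi_g}\bigr)$, where $\ket{\psi_f}=\frac{1}{\sqrt N}\sum_i f_i\ket i$ and similarly for $g$.
\item Controlled on the control qubit being $\ket1$, apply $H^{\otimes n}$ to the register. The $\ket1$ branch now holds $H_N\ket{\psi_g}$. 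This needs $n$ controlled-Hadamard gates, so $O(n)$ gates.
\item Apply $H$ to the control qubit and measure it, outputting $1$ on outcome $\ket0$.
\end{enumerate}

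\textbf{Acceptance probability.} Measuring $\ket0$ has probability
\[ \tfrac14\bigl\|\ket{\psi_f}+H_N\ket{\psi_g}\bigr\|^2=\tfrac12\bigl(1+\mathrm{Re}\,\braket{\psi_f|H_N|\psi_g}\bigr). \]
To evaluate the inner product:
\begin{itemize}
\item Both $\ket{\psi_f}$ and $H_N\ket{\psi_g}$ are unit vectors.
\item Everything is real, so the real part is the inner product itself.
\item $\braket{\psi_f|H_N|\psi_g}=\frac1N f^{\sfT}H_Ng=\forr(f,g)$.
\end{itemize}
Hence the acceptance probability is $\frac{1+\forr(f,g)}{2}$, as claimed.

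\textbf{Expected obstacle.} There is no real mathematical difficulty. The only point needing care is the query model: the controlled $\pm1$ phase oracle must count as a single query. Phase oracles are equivalent to standard XOR oracles via phase kickback, using one ancilla in $\ket{-}$. Apart from that, one only needs to check the gate count and the sign convention of $H_N$.
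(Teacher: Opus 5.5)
Your proposal is correct and is the standard one-query Hadamard-test algorithm of Aaronson and Ambainis (Proposition~6 of \cite{AA15}); the paper cites this result rather than proving it. Your single controlled phase query is one query to the combined string in $\{\pm 1\}^{2N}$ indexed by $\{0,1\}^{n+1}$, which matches the paper's oracle encoding, and both the acceptance probability $\tfrac12\bigl(1+\tfrac1N f^{\sfT}H_N g\bigr)=\tfrac{1+\forr(f,g)}{2}$ and the $O(n)$ gate count check out.
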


By repeating the algorithm from Lemma \ref{lemma:forr_alg} sufficiently many times and accepting if at least $9/16$ of runs accept, we have by a Chernoff bound that there exists an $O(1)$-query and $O(n)$-gate quantum algorithm $\calB$ such that
    \[ \Pr_{(x, y) \sim \calD_n}\left[\Pr_{\calB}[\calB^{x, y} = 1] \geq \frac{2}{3}\right] \geq \Pr_{(x, y) \sim \calD_n}\left[\forr(x, y) \geq \frac{1}{4}\right] \geq 1-O\left(\frac{1}{N}\right), \]
and
    \[ \Pr_{(x, y) \sim \calU_n}\left[\Pr_{\calB}[\calB^{x, y} = 1] \leq \frac{1}{3}\right] \geq \Pr_{(x, y) \sim \calU_n}\left[\forr(x, y) \leq \frac{1}{16}\right] \geq 1-O\left(\frac{1}{N}\right), \]
as desired.
\end{proof}

%% file: general_smoothing.tex
\section{Smooth Extensions for General Boolean Functions} 
\begin{lemma}
\label{lemma:arbitrary-smooth-extension}
There exists a constant $C$ such that, for every $n \geq 1$ and every function $f: \{-1,1\}^n \to [-1, 1]$, there is a smooth function $F: \bbR^n \to [-1, 1]$ satisfying $F(z) = f(z)$ for all $z \in \{-1, 1\}^n$ and
\[
    \sup_{x\in\mathbb{R}^n}\|\nabla F(x)\|_1 \leq 2,
    \qquad
    \sup_{x\in\mathbb{R}^n}\|\nabla^2 F(x)\|_1 \leq C\log(2n).
\]
\end{lemma}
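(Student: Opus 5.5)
The plan is to build $F$ as a sum of disjointly supported smooth bumps, one per vertex of the cube. Each bump is a function of a \emph{softmax-smoothed} $\ell_\infty$ distance to its vertex.

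The first observation is that all vertices $z \in \{-1,1\}^n$ are pairwise at $\ell_\infty$-distance exactly $2$. So the open $\ell_\infty$-balls $B_z = \{x : \|x-z\|_\infty < 1\}$ are pairwise disjoint. Any function of the form $F(x) = \sum_z f(z)\, g(\rho_z(x))$, where $g(r)=0$ for $r \ge r_1$ and $\rho_z(x) \ge \|x-z\|_\infty$, has at most one nonzero summand at each $x$ as long as $r_1 \le 1$. Therefore the gradient and Hessian bounds only need to be checked for a single bump $f(z)\,g(\rho_z)$, and $|F| \le 1$ is automatic when $0 \le g \le 1$.

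For $\rho_z$ I will use the softmax gate from the proof of Lemma~\ref{lemma:structural}, applied to the $2n$ affine functions $\pm(x_i - z_i)$:
\[
\rho_z(x) = S_\lambda\bigl(x_1-z_1,\, z_1-x_1,\, \ldots,\, x_n-z_n,\, z_n-x_n\bigr),
\]
with $\lambda = 8\ln(2n)$. This choice gives $\|x-z\|_\infty \le \rho_z(x) \le \|x-z\|_\infty + \ln(2n)/\lambda = \|x-z\|_\infty + 1/8$.

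The derivative computations in Lemma~\ref{lemma:structural} then apply verbatim. Each child $u$ has $\|\nabla u\|_1 = 1$ and zero Hessian, and $\|\nabla u_i - \nabla u_j\|_1 \le 2$. Hence $\|\nabla \rho_z\|_1 \le 1$ and
\[
\|\nabla^2 \rho_z\|_1 \le \tfrac{\lambda}{2}\cdot 4 = 2\lambda = O(\log(2n)).
\]

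Next I fix a smooth profile $g:\bbR_{\ge 0}\to[0,1]$ with $g = 1$ on $[0,1/8]$ and $g = 0$ on $[7/8,\infty)$. Since the transition interval has length $3/4$, $g$ can be chosen with $|g'| \le 2$ (an average slope of $4/3$ leaves room for smoothing) and $|g''| \le c$ for an absolute constant $c$. Set
\[
F(x) = \sum_z f(z)\, g(\rho_z(x)).
\]
The four required properties then follow:
\begin{itemize}
\item \emph{Smoothness:} $F$ is a locally finite sum of compositions of smooth functions.
\item \emph{Agreement on the cube:} $\rho_z(z) \le 1/8$ gives $F(z) = f(z)$, and every other vertex $z' \neq z$ has $\rho_{z'}(z) \ge 2$, so its bump vanishes at $z$.
\item \emph{Gradient:} $\|\nabla F\|_1 \le |g'|\,\|\nabla\rho_z\|_1 \le 2$.
\item \emph{Hessian:} $\|\nabla^2F\|_1 \le |g''|\,\|\nabla\rho_z\|_1^2 + |g'|\,\|\nabla^2\rho_z\|_1 \le c + 4\lambda = O(\log(2n))$.
\end{itemize}
The Hessian bound uses the product formula $\nabla^2(g\circ\rho) = g''\,\nabla\rho\nabla\rho^{\sfT} + g'\,\nabla^2\rho$ together with $\|vv^{\sfT}\|_1 = \|v\|_1^2$.

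The one step needing care is the disjointness of supports once the $\ell_\infty$ norm is replaced by its softmax upper bound. The bump around $z$ is supported where $\rho_z < 7/8$, which forces $\|x-z\|_\infty < 7/8 < 1$, so supports stay inside the disjoint balls $B_z$. At the same time the additive error $1/8$ must fit inside the flat region $[0,1/8]$ so that exact agreement holds at the vertices. The balance between this error and the Hessian cost $2\lambda$ is what forces $\lambda = \Theta(\log n)$, and hence the $\log(2n)$ Hessian bound.
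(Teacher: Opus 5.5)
Your proof is correct and takes essentially the same route as the paper. Both write $F$ as a sum of disjointly supported bumps, each a smooth one-dimensional profile with slope at most $2$ composed with a softmax of coordinate-wise affine functions at $\lambda = \Theta(\log n)$, and both get the bounds from the chain rule and the softmax estimates $\|\nabla S_\lambda\|_1\le 1$, $\|\nabla^2 S_\lambda\|_1\le 2\lambda$. The only cosmetic difference is the localization: the paper places each bump in the open orthant of $z$ via the soft-min $-\frac{1}{\lambda}\log\sum_i e^{-\lambda z_i x_i}$ (citing Chernozhukov et al.\ for the derivative bounds), whereas you place it in the $\ell_\infty$-ball of radius $7/8$ around $z$ via a softmax over the $2n$ functions $\pm(x_i-z_i)$, reusing the Hessian computation from Lemma~\ref{lemma:structural}.
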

\begin{proof}
For each vertex $z \in \{\pm 1\}^n$, we construct a smooth function $\psi_z$ that, together with all its derivatives, vanishes at any point $x$ unless every coordinate $x_i$ has sign $z_i$. These regions are disjoint for $\psi_z$ and $\psi_{z'}$ if $z \neq z'$. Therefore, at each point, the derivatives of $F = \sum_z f(z)\psi_z$ receive a contribution from at most one summand. Since $|f(z)| \leq 1$, any uniform derivative bounds for the individual functions $\psi_z$ also hold for $F$.

First, fix a smooth nondecreasing function $\rho:\mathbb R\to[0,1]$ satisfying all of the following properties:
    \[ \rho(t) = 0 \quad(t \leq 0), \qquad \rho(t) = 1 \quad (t \geq 3/4), \qquad \|\rho'\|_\infty \leq 2. \]
For example, choose a smooth function $h: \bbR \to [0, 1]$ that vanishes outside $(0, 3/4)$ and satisfies $\int_{\bbR} h > 1/2$, and set
    \[ \rho(t) = \frac{\int_{-\infty}^t h(s)\,ds}{\int_{\bbR} h(s)\,ds}. \]
Fixing this choice also makes $\|\rho''\|_\infty$ an absolute constant.

Set $\lambda = 4\log(2n)$. For each $z \in \{\pm 1\}^n$, define the functions
\[
q_z(x) = -\frac{1}{\lambda} \log\!\left(\sum_{i = 1}^n e^{-\lambda z_ix_i}\right), \qquad \psi_z(x) = \rho(q_z(x)).
\]
Here $q_z$ is a smooth approximation to $\min_i z_ix_i$, with $q_z(x) \leq \min_i z_ix_i$ and $q_z(z) = 1-\frac{\log n}{\lambda} \geq 3/4$. Thus $\psi_z(z) = 1$. Moreover, for every $x \in \bbR^n$,
\[
    \psi_z(x) \neq 0
    \implies q_z(x) > 0
    \implies z_i x_i > 0 \quad \forall i \in [n].
\]
In particular, $\psi_z$ vanishes at every Boolean vertex other than $z$.

To bound the derivatives, we refer to a work by Chernozhukov et al. which shows that the smoothmax function $S_\lambda(u)$ satisfies $\|\nabla S_\lambda(u)\|_1 = 1$ and $\|\nabla^2S_\lambda(u)\|_1 \leq 2\lambda$~\cite[Lemma A.2]{CCK13}. Since $q_z(x) = -S_\lambda(-z_1x_1,\ldots,-z_nx_n)$, the same norm bounds hold for $q_z$. The chain rule therefore gives
    \[ \|\nabla\psi_z(x)\|_1 = |\rho'(q_z(x))| \cdot \|\nabla q_z(x)\|_1 \leq 2\cdot 1 = 2, \]
and
\begin{align*}
    \|\nabla^2\psi_z(x)\|_1 &\leq \|\rho''\|_\infty \|\nabla q_z(x)\|_1^2 + \|\rho'\|_\infty \|\nabla^2q_z(x)\|_1 \\
    &\leq \|\rho''\|_\infty + 4\lambda = O(\log(2n)).
\end{align*}

Finally, set $F(x) = \sum_{z \in \{\pm 1\}^n} f(z)\psi_z(x)$. At any point $x$, at most one $q_z(x)$ is positive. Whenever $q_z(x) \leq 0$, all derivatives of $\rho$ vanish at $q_z(x)$. By the chain rule, $\psi_z(x) = 0$ and all derivatives of $\psi_z$ vanish at $x$ as well. Hence $F(x) \in [-1,1]$, and the derivative bounds above extend directly to $F$. The vertex identities give $F(z) = f(z)$, and $F$ is smooth because the sum is finite.
\end{proof}

%% file: diagonalization.tex
\section{Diagonalization Argument}
\begin{proof}[Proof of Corollary \ref{cor:separation}]
    The diagonalization proof is standard, but we include it for completeness, following \cite{Aar10, RT22}. First, it is easy to see that the Goldwasser-Sipser transformation \cite{GS86} is classically relativizing, that is, $\IP^{\calO} = \AM[\poly]^{\calO}$ for all classical oracles $\calO$. It thus suffices to find an oracle $\calO$ such that $\BQP^{\calO} \not\subseteq \AM[\poly]^{\calO}$.
    
    Let $\calL$ be a random unary language. The oracle $\calO$ encodes the truth tables of Boolean functions of different input lengths. For each $n$, if $1^n \in \calL$ then we draw $(x_n, y_n) \in \{\pm 1\}^{2N}$ from the Forrelated distribution $\calD_n$, and if $1^n \notin \calL$ then we draw $(x_n, y_n)$ from the uniform distribution $\calU_n$. We interpret $(x_n, y_n) \in \{\pm 1\}^{2N}$ as a Boolean function $f_n : \{0, 1\}^{n+1} \to \{0, 1\}$ that describes the truth table of $\calO$ restricted to queries of length $n+1$.\footnote{We can do this by taking the canonical bijection between $[2N]$ and $\{0, 1\}^{n+1}$ and mapping signs $1$ and $-1$ to outputs $0$ and $1$, respectively.}
    
    First, note that there exists a $\BQP$ machine $M$ that correctly decides $\calL$ with high probability over the choice of $\calL$ and $\calO$. The machine $M$, on input $1^n$, runs the quantum algorithm $\calA$ from Lemma \ref{lemma:bqp_alg} on the oracle string provided by $\calO$ of length $n+1$. Note that this is a $\BQP$ machine since $\calA$ runs in $\poly(n)$ time. In addition, assuming $n$ is sufficiently large,
    \begin{itemize}
        \item If $1^n \in \calL$, then $(x_n, y_n)$ was sampled from $\calD_n$, and thus
            \[ \Pr_{\calO \sim \calD_n}\left[\Pr_M[M^{\calO}(1^n) = 1] \geq \frac{2}{3}\right] \geq 1-O\left(\frac{1}{N}\right) \geq 1-\frac{1}{100n^2}. \]
        \item If $1^n \notin \calL$, then $(x_n, y_n)$ was sampled from $\calU_n$, and thus
            \[ \Pr_{\calO \sim \calU_n}\left[\Pr_M[M^{\calO}(1^n) = 1] \leq \frac{1}{3}\right] \geq 1-O\left(\frac{1}{N}\right) \geq 1-\frac{1}{100n^2}. \]
    \end{itemize}
    We see that in both cases the probability over $\calO$ that $M$ decides $\calL$ correctly on $1^n$ is at least $1-\frac{1}{100n^2}$ assuming $n \geq n_0$ for some constant $n_0$. Thus,
        \[ \Pr_{\calL, \calO}[\text{$M^{\calO}$ decides $\calL$ correctly on all $n \geq n_0$}] \geq \prod_{n = n_0}^{\infty} \left(1-\frac{1}{100n^2}\right) \geq 0.9. \]

    We now argue that $\calL \notin \AM[\poly]^{\calO}$ with high probability over the choice of $\calL$ and $\calO$. Begin by fixing an $\AM[\poly]$ verifier $\calB$. For a fixed choice of oracle $\calO$ and $\calL$, let $E_n(\calB, \calL, \calO)$ be the event that $\calB^{\calO}$ decides $\calL$ correctly on $1^n$. This means that if $1^n \in \calL$, then
        \[ \max_P \Pr_{\calB}[\langle \calB^{\calO}(1^n) \leftrightarrow P^{\calO} \rangle = 1] \geq \frac{2}{3}, \]
    while if $1^n \notin \calL$, then 
        \[ \max_P \Pr_{\calB}[\langle \calB^{\calO}(1^n) \leftrightarrow P^{\calO} \rangle = 1] \leq \frac{1}{3}. \]
    
    By Lemma \ref{lemma:am_characterization}, the acceptance probability function $a_{\calB}: \{\pm 1\}^{2N} \to [0, 1]$ of $\calB$ can be computed by a $(\poly(n), \poly(n))$-AM circuit $C_{\calB}$. Corollary \ref{cor:values} then implies that for sufficiently large $n$,
        \[ \bigl|\bbE_{(x, y) \sim \calD_n}[a_{\calB}(x, y)] - \bbE_{(x, y) \sim \calU_n}[a_{\calB}(x, y)]\bigr| = \bigl|\bbE_{(x, y) \sim \calD_n}[C_{\calB}(x, y)] - \bbE_{(x, y) \sim \calU_n}[C_{\calB}(x, y)]\bigr| \leq \frac{1}{15}. \]
    However,
        \[ \bbE_{(x, y) \sim \calD_n}[a_{\calB}(x, y)] \geq \frac{2}{3} \cdot \Pr_{(x, y) \sim \calD_n}\left[a_{\calB}(x, y) \geq \frac{2}{3}\right], \]
    and 
    \begin{align*}
        \bbE_{(x, y) \sim \calU_n}[a_{\calB}(x, y)] &\leq \frac{1}{3} \cdot \Pr_{(x, y) \sim \calU_n}\left[a_{\calB}(x, y) \leq \frac{1}{3}\right] + 1 \cdot \left(1 - \Pr_{(x, y) \sim \calU_n}\left[a_{\calB}(x, y) \leq \frac{1}{3}\right]\right) \\
        &= 1 - \frac{2}{3} \cdot \Pr_{(x, y) \sim \calU_n}\left[a_{\calB}(x, y) \leq \frac{1}{3}\right],
    \end{align*}
    so
    \begin{align*}
        \Pr_{\calL, \calO}[E_n(\calB, \calL, \calO)] &= \frac{1}{2} \Pr_{(x, y) \sim \calD_n}\left[a_{\calB}(x, y) \geq \frac{2}{3}\right] + \frac{1}{2} \Pr_{(x, y) \sim \calU_n}\left[a_{\calB}(x, y) \leq \frac{1}{3}\right] \\
        &\leq \frac{3}{4} \cdot \bbE_{(x, y) \sim \calD_n}[a_{\calB}(x, y)] + \frac{3}{4} - \frac{3}{4} \cdot \bbE_{(x, y) \sim \calU_n}[a_{\calB}(x, y)] \\ 
        &\leq \frac{3}{4} + \frac{3}{4} \cdot \frac{1}{15} = \frac{4}{5}.
    \end{align*}
    By independence of $\calO$ on different input lengths, and the fact that $\calB$ can only ask queries of length $\poly(n)$ on input $1^n$, we get that there are infinitely many input lengths $n_1, n_2, \ldots$ such that for each $i \in \bbN$,
        \[ \Pr_{\calL, \calO}[E_{n_{i+1}}(\calB, \calL, \calO) \mid E_{n_1}(\calB, \calL, \calO) \land \ldots \land E_{n_i}(\calB, \calL, \calO)] \leq \frac{4}{5}. \]
    Thus we conclude that 
        \[ \Pr_{\calL, \calO}[E_1(\calB, \calL, \calO) \land E_2(\calB, \calL, \calO) \land \ldots] = 0, \]
    and since there are countably many $\AM[\poly]$ verifiers,
        \[ \Pr_{\calL, \calO}\left[\calL \in \AM[\poly]^{\calO}\right] =  \Pr_{\calL, \calO}[\exists \calB: E_1(\calB, \calL, \calO) \land E_2(\calB, \calL, \calO) \land \ldots] = 0. \]
    Therefore, there exists an oracle $\calO$ and language $\calL$ such that $M^{\calO}$ decides $\calL$ correctly on all $n \geq n_0$ but no $\AM[\poly]$ verifier decides $\calL$ correctly. Hardwiring the membership of $1^n$ for all $n < n_0$ into $M$ then gives us our desired separation.
\end{proof}

%% file: bbbv_cor.tex
\section{Oracle Separation Between \texorpdfstring{$\BQP \cap \NP$}{BQP cap NP} and \texorpdfstring{$\IP_{\BQP}$}{IP BQP}}

\begin{proof}[Proof of Corollary \ref{cor:simons_like}]
    Let $\calL$ be a random unary language. The oracle $\calO$ consists of two oracles $\calO_1$ and $\calO_2$, each of which encodes the truth tables of Boolean functions of increasing input lengths.
    
    We first describe how to sample $\calO_1$: For each $n$, if $1^n \in \calL$ then we draw $x_n \in \{\pm 1\}^{2N}$ from the Forrelated distribution $\calD_n$, and if $1^n \notin \calL$ then we draw $x_n$ from the uniform distribution $\calU_n$. We interpret $x_n \in \{\pm 1\}^{2N}$ as a Boolean function $f_n : \{0, 1\}^{n+1} \to \{0, 1\}$ that describes the oracle $\calO_1$ restricted to strings of length $n+1$.

    We now describe how to sample $\calO_2$: For each $n$, if $1^n \in \calL$ then we draw $x'_n \in \{\pm 1\}^{2N}$ from the uniform distribution over strings with exactly one $-1$, and if $1^n \notin \calL$ then we set $x'_n = (+1)^{2N}$. We interpret $x'_n \in \{\pm 1\}^{2N}$ as a Boolean function $f'_n : \{0, 1\}^{n+1} \to \{0, 1\}$ that describes the oracle $\calO_2$ restricted to strings of length $n+1$.

    As in the proof of Corollary \ref{cor:separation}, it follows that there exists some constant $n_0$ and $\BQP$ machine $M$ such that 
        \[ \Pr_{\calL, \calO}[\text{$M^{\calO}$ decides $\calL$ correctly on all $n \geq n_0$}] \geq 0.9. \]
    This is since $M$ can simply ignore $\calO_2$ and the situation is reduced to that of Corollary \ref{cor:separation}.

    It is also not hard to see that $\calL$ is always in $\NP^{\calO}$ by considering the following $\NP$ machine $M'$: for each input $1^n$, given a witness $w \in [2N]$ (which is of length $\log(2N) = \poly(n)$), query $\calO_2$ at the $w$th index and accept iff $\calO_2(w) = -1$. It is straightforward to see that if $1^n \in \calL$ then for any choice of $\calO_2$ which is sampled there exists a witness $w$ that makes $M'$ accept, while if $1^n \notin \calL$ then no witness makes $M'$ accept.

    We now argue that $\calL \notin \IP_{\BQP}^{\calO}$ with high probability over the choice of $\calL$ and $\calO$. Begin by fixing an $\IP_{\BQP}$ verifier $\calA$ (which also fixes an honest $\BQP$ prover $P$). By straightforward amplification (which is relativizing) we can assume without loss of generality that the proof system has soundness and completeness error at most $\frac{1}{10}$. For a fixed choice of oracle $\calO$ and $\calL$, let $F_n(\calA, \calL, \calO)$ be the event that $\calA^{\calO}$ decides $\calL$ correctly on $1^n$. This means that if $1^n \in \calL$, then
        \[ \Pr_{\calA}[\langle \calA^{\calO_1, \calO_2}(1^n) \leftrightarrow P^{\calO_1, \calO_2} \rangle = 1] \geq \frac{9}{10}, \]
    while if $1^n \notin \calL$, then 
        \[ \max_{P^{*}} \Pr_{\calA}[\langle \calA^{\calO_1, \calO_2}(1^n) \leftrightarrow (P^{*})^{\calO_1, \calO_2} \rangle = 1] \leq \frac{1}{10}. \]
    We now note that a standard hybrid argument due to \cite{BBBV97} implies that because $\calA$ and $P$ are $\BPP$ and $\BQP$ machines which make at most $\poly(n)$ queries to $\calO_2$, they can distinguish between when $\calO_2$ is drawn from YES and NO distributions with advantage at most $\frac{\poly(n)}{2N}$. We can therefore conclude that for sufficiently large $n$, if $1^n \in \calL$, then
        \[ \Pr_{\calA, \calO_1 \sim \calD_n, \calO_2 = -1^{2N}}[\langle \calA^{\calO_1, \calO_2}(1^n) \leftrightarrow P^{\calO_1, \calO_2} \rangle = 1] \geq \frac{9}{10}-\frac{1}{10} \geq \frac{2}{3}, \]
    while if $1^n \notin \calL$, then 
        \[ \max_{P^{*}} \Pr_{\calA, \calO_1 \sim \calU_n, \calO_2 = -1^{2N}}[\langle \calA^{\calO_1, \calO_2}(1^n) \leftrightarrow (P^{*})^{\calO_1, \calO_2} \rangle = 1] \leq \frac{1}{10} \leq \frac{1}{3}. \]
    A straightforward simulation argument implies that $\langle \calA, P \rangle$ is an $\IP_{\BQP}$ protocol for distinguishing between $\calO_1 \sim \calD_n$ and $\calO_1 \sim \calU_n$! Applying the Goldwasser-Sipser transformation \cite{GS86} here now gives an $\AM[\poly]$ verifier which distinguishes between $\calO_1 \sim \calD_n$ and $\calO_1 \sim \calU_n$. In other words, for sufficiently large $n$, the event $F_n(\calA, \calL, \calO)$ implies the event $E_n(\calB, \calL, \calO_1)$ defined in Corollary \ref{cor:separation} for some $\AM[\poly]$ verifier $\calB$. Thus, the proof of Corollary \ref{cor:separation} implies that for sufficiently large $n$,
        \[ \Pr_{\calL, \calO}[F_n(\calA, \calL, \calO)] \leq \Pr_{\calL, \calO_1}[E_n(\calB, \calL, \calO_1)] \leq \frac{4}{5}. \]
    We therefore conclude that
        \[ \Pr_{\calL, \calO}[F_1(\calA, \calL, \calO) \land F_2(\calA, \calL, \calO) \land \ldots] = 0, \]
    and since there are countably many $\IP_{\BQP}$ verifiers,
        \[ \Pr_{\calL, \calO}\left[\calL \in \IP_{\BQP}^{\calO}\right] =  \Pr_{\calL, \calO}[\exists \calA: F_1(\calA, \calL, \calO) \land F_2(\calA, \calL, \calO) \land \ldots] = 0. \]
    Therefore, there exists an oracle $\calO$ and language $\calL$ such that $M^{\calO}$ decides $\calL$ correctly on all $n \geq n_0$ but no $\IP_{\BQP}$ verifier decides $\calL$ correctly. Hardwiring the membership of $1^n$ for all $n < n_0$ into $M$ places $\calL \in \BQP^{\calO}$, while our previous observation shows that $\calL \in \NP^{\calO}$, which finishes the separation.
\end{proof}